\newtheorem{remark}{Remark}[section]
\newtheorem{thm}{Theorem}
\newtheorem{prop}{Proposition}[section]
\newtheorem{lemma}{Lemma}
\renewcommand\section{\setcounter{equation}{0}\@startsection {section}{1}{\z@}%
                                   {-3.5ex \@plus -1ex \@minus -.2ex}%
                                   {2.3ex \@plus.2ex}%
                                   {\normalfont\Large\bfseries}}
\renewcommand\theequation
\title{A continuum model for alignment of self-propelled particles with anisotropy and density-dependent parameters}
\author{Amic Frouvelle\thanks{Institut de Mathématiques de Toulouse, CNRS -- UMR 5219, Université de Toulouse, F-31062 Toulouse, France, \href{mailto:amic.frouvelle@math.univ-toulouse.fr}{amic.frouvelle@math.univ-toulouse.fr}}}
\date{}
\begin{document}
\maketitle

\begin{abstract}
We consider the macroscopic model derived by Degond and Motsch from a time-continuous version of the Vicsek model, describing the interaction orientation in a large number of self-propelled particles. In this article, we study the influence of a slight modification at the individual level, letting the relaxation parameter depend on the local density and taking in account some anisotropy in the observation kernel (which can model an angle of vision). 

The main result is a certain robustness of this macroscopic limit and of the methodology used to derive it. With some adaptations to the concept of generalized collisional invariants, we are able to derive the same system of partial differential equations, the only difference being in the definition of the coefficients, which depend on the density. This new feature may lead to the loss of hyperbolicity in some regimes.

We provide then a general method which enables us to get asymptotic expansions of these coefficients. These expansions shows, in some effective situations, that the system is not hyperbolic. This asymptotic study is also useful to measure the influence of the angle of vision in the final macroscopic model, when the noise is small. 
\end{abstract}

\medskip
\textbf{Key words:} Vicsek model, orientation interaction, anisotropy, collisional invariants, non-hyperbolicity, asymptotic study.

\medskip
\textbf{AMS Subject classification:} 35M30, 35Q70, 35Q80, 82C22, 82C70, 92D50.

\newpage

\section{Introduction}

The study of complex particle systems has given rise to some challenging issues~\cite{bellomo2009mathematics}, in a mathematical point of view. 
One of the interesting problem is to understand how a collective behavior can emerge with only localized interactions.

The Vicsek model~\cite{vicsek1995novel} has been proposed as a minimalist model describing the behavior of individuals inside animal societies such as fish schools or flocks of birds. 
It is a minimal version of a more complete and realistic model~\cite{aoki1982simulation,reynolds1987flocks,huth1992simulation,couzin2002collective} based on three zones (of repulsion, alignment, and attraction). 
The Vicsek model only considers the alignment behavior, getting around the problem of confining the particles in the same region by imposing spatial periodicity (particles move on the flat torus). 
All the particles have constant speed and synchronously update their direction according to their neighbors, their new orientation vector being given by the mean direction (subjected to some angular noise) of all particles at distance less than a given radius. 
As the noise decreases (or the density increases), one can observe a phenomenon of phase transition, from a regime of disordered particles, to an ordered phase with strong correlations between orientations of particles~\cite{vicsek1995novel,aldana2003phase,gregoire2004onset}.

Two main difficulties arise when we try to derive a macroscopic limit of this individual based model. 
First of all, the system is discrete in time: the time step is fixed, and the model is not built in the goal of letting it tend to zero. 
The second problem is that, except for the total mass, there is no obvious conservation relation, so a good candidate for a macroscopic model would probably be a non-conservative system of partial differential equations, but we lack conservation relations to obtain any equation other than the conservation of mass.

In~\cite{degond2008continuum}, Degond and Motsch have proposed an approach to handle these two complications. 
First, they provide a time-continuous version of the individual based model, introducing a rate of relaxation towards the local mean direction, under the form of a new parameter~$\nu$, which can be viewed as a frequency of interaction between a particle and its neighbors. 
It is therefore possible to derive a kinetic mean-field limit of this model.
Then they develop a method, defining the notion of generalized collisional invariants, which allows to derive the formal limit of this kinetic mean-field model, at large scale in space and time. 
This continuum limit is a non-conservative system of PDE for the local mass and the local orientation. 
Moreover, this system is proved to be hyperbolic.

The goal of the present paper is to confirm the ability of this type of macroscopic model to describe the large scale dynamics of systems of self-propelled particles with orientation alignment, and to show that the notion of generalized collisional invariants is well adapted to derive this model from the microscopic mechanism of alignment. 
This was shown in~\cite{degond2010macroscopic} for a different type of alignment, based on the curvature control (for a model of displacement introduced in~\cite{degond2008large}, designed to fit biological experiments~\cite{gautrais2009analyzing}): the method which uses the generalized collisional invariants is successful to derive a macroscopic model which is the same as the “Vicsek hydrodynamics” of~\cite{degond2008continuum}, but for the definition of the coefficients in the model. 
We will show that this is also the case when we slightly modify the individual model, in order to be more coherent with some numerical observations, and to model the influence of an angle of vision.

One of the properties of the macroscopic model of~\cite{degond2008continuum} which fails to represent the numerical observations is that the local equilibria have a constant order parameter. 
This order parameter is indeed only related to the ratio between the frequency~$\nu$ of interaction and the intensity~$d$ of the noise. 
In numerical experiments on the Vicsek model~\cite{chate2008collective}, the order parameter depends on the local density of particles: one can observe, at large time, formation of travelling bands of high density, strongly ordered, moving through a disordered area of low density.

The first refinement on the model will be to define the local density~$\bar{\rho}$ in the particular model as the mean number of particles in a neighboring area, and to make the parameters~$\nu$ and~$d$ depend on~$\bar{\rho}$. 
In a modelling point of view, this could be interpreted as the fact that, due to social pressure for example, it is more likely to move and update the direction when there is a large number of particles around (so~$\nu$ increases with~$\bar{\rho}$), and that the fluctuation in the estimation of the mean velocity is smaller when a lot of particles are taken in account in the neighborhood (so~$d$ decreases with~$\bar{\rho}$, see for example the way the vectorial noise is defined in~\cite{gregoire2004onset,chate2008collective}). This dependence on local parameters for the noise parameter has been introduced in other models of collective behavior~\cite{duan2010kinetic,yates2009inherent}.

In~\cite{degond2008continuum}, the parameter~$\nu$ does also depend on the angle between one particle direction and its target direction. 
For convenience, we will not take this in account for most results but, in some simple cases, computations have been done and will be given in appendix.

Another property of the mean-field model of~\cite{degond2008continuum} is that the type of local equilibrium for the rescaled model is unique. 
The loss of this uniqueness could play a role to understand the formation of patterns such as the travelling bands~\cite{fan2004pattern}, and in some models of rod-like particles, we have indeed bistability in a certain regime~\cite{liu2005axial}. 
This is not the case here, and we will see in Section~\ref{elements} that we still have a unique kind of equilibrium associated to a local density~$\rho$ and a local orientation~$\Omega$.

The second refinement is to take into account the influence of an “angle of vision” in the model. 
In the original Vicsek model, the target orientation for a given particle is chosen to be the mean orientation of the neighbors located in a ball centered on this particle. 
We will use here a more general kernel of observation which can be non-isotropic. This refinement has been proposed in various models of swarming~\cite{agueh2011analysis,li2008minimal}.

The main result of this paper is that the formal macroscopic limit of this model take the same form as the previous “Vicsek Hydrodynamic model” of~\cite{degond2008continuum}, consisting in a conservation equation for the local density of particles, and an evolution equation for the mean orientation, which is not conservative (the velocity is constrained to be on the unit sphere). 
This system of PDE is quite similar to the Euler equations for gas dynamics, but presents some specific issues, for example there are two different velocities of propagation.
The difference between our model and the macroscopic model of~\cite{degond2008continuum} relies on the definition of the coefficients of this model, and on the fact that they depend on the local density. This last feature allows the model to lose the property of hyperbolicity. 

In Section~\ref{presentation}, we present the individual model and the final macroscopic model, focusing on how the two refinements are taken in account, and what are the consequences at the macroscopic level.

In Section~\ref{elements}, we provide elements of the derivation of this macroscopic model, following the method of~\cite{degond2008continuum}, but emphasizing the details which are specific to this study. 
We also give the method in a general~$n$-dimensional framework (previously the method was only done in three dimensions). 
The case of the dimension~$2$ is special, since we are able to give an explicit expression of the coefficients.

In Section~\ref{properties}, we study the properties of the macroscopic model. We prove that when one of the coefficients is negative, then the system is not hyperbolic. 
We describe the region of hyperbolicity for a system which depends only on one space variable, and we discuss the influence of “angle of vision”. 

Finally, in Section~\ref{asymptotics}, we provide a general method which gives an asymptotic expansion of the coefficients in any dimension, in the limit of a small or a large concentration parameter. With these expansions, we are able to study the qualitative influence of the “angle of vision” in the final macroscopic model, and to give examples for which the hyperbolicity is indeed lost.

\section[Individual and continuum dynamics of a modified Vicsek model]{Individual and continuum dynamics of a modified Vicsek model}
\label{presentation}
We start by presenting the individual-based model and the continuum model we obtain in the limit of a large number of particles, when observed at large scale, in space and time. Elements of the derivation of this macroscopic model will be given in section~\ref{elements}.

\subsection{Starting point: particle dynamics}

Here, we briefly recall the time-continuous version of the Vicsek model, and introduce how we take into account the anisotropy of observation and the dependence on the local density for the rate of relaxation and the intensity of the noise. 

We consider a system of~$N$ particles with positions~$X_k$ in~$\mathbb{R}^n$ (with~$k\in\llbracket1,N\rrbracket$) and orientations~$\omega_k$ in the unit sphere~$\mathbb{S}_{n-1}$, which we will simply write~$\mathbb{S}$.

For each particle, we first define a local mean orientation~$\bar{\omega}_k$ (considered as a target direction) and a local density~$\bar{\rho}_k$. 
In the original model of Vicsek~\cite{vicsek1995novel}, the mean orientation~$\bar{\omega}_k$ is computed on all the neighbors within a given radius~$R$. 
Here we take the mean according to a kernel of observation~$K$, which can be more general than the indicator function of the ball of radius~$R$, as in the time-continuous version of~\cite{degond2008continuum}. 
The kernel therein depends only on the distance between the given particle and a given neighbor, so the refinement here is that it can also depend on (the cosine of) the angle between the orientation of the first particle and the right line joining the two particles:
\begin{equation}
\bar{\omega}_k = \frac{\bar J_k}{|\bar J_k|}, \text{ where } \bar J_k = 
\frac1N\sum_{j=1}^N K \left(|X_j - X_k|, \tfrac{X_j - X_k}{|X_j - X_k|}\cdot\omega_k\right) \omega_j . 
\label{omega_k_bar} 
\end{equation}

For example, to take into account only the neighbors located “in front”, and within a given radius~$R$, of one particle, the kernel would be~$K(r,\gamma)=\mathbbm{1}_{\{r\leqslant R\}}\mathbbm{1}_{\{\gamma\geqslant0\}}$.
We proceed in an analogous way to compute the local density~$\bar{\rho}_k$, which may use another kernel~$\widetilde{K}$:
\begin{gather}
\label{rho_k_bar} 
\bar{\rho}_k = \frac1N\sum_{j=1}^N \widetilde K \left(|X_j - X_k|, \tfrac{X_j - X_k}{|X_j - X_k|}\cdot\omega_k\right).
\end{gather}

We now turn to the dynamics of the particle system.
The~$k^\text{th}$ particle moves at constant speed~$1$, following its orientation~$\omega_k$. 
This last one relax towards the mean orientation~$\bar{\omega}_k$ of its neighbors, with rate~$\nu$ (depending in the local mean density~$\bar{\rho}_k$), under the constraint that~$\omega_k$ is of norm~$1$.
Finally, this orientation~$\omega_k$ is subjected to a Brownian motion (see~\cite{hsu2002stochastic} for more details on how to define such an object on a Riemannian manifold, such as the unit sphere here) of intensity~$d$, which will also depend on the density~$\bar{\rho}_k$.
The model takes then the form of~$2N$ coupled stochastic differential equations, which have to be understood in the Stratonovich sense:
\begin{align}
\mathrm d X_k &= \omega_k \mathrm d t\label{SDE_pos}, \\
\mathrm d \omega_k &= \nu(\bar{\rho}_k) (\mathrm{Id} - \omega_k \otimes \omega_k) \bar{\omega}_k \, \mathrm d t + \sqrt{2d(\bar{\rho}_k)} \, (\mathrm{Id} - \omega_k \otimes \omega_k)\circ\mathrm d B^k_t , 
\label{SDE_orient} 
\end{align}
where~$(B^k_t)$ are independent standard Brownian motions on~$\mathbb{R}^n$.

Here we denote by~$\mathrm{Id} - \omega_k \otimes \omega_k$ the projection on the plane orthogonal to~$\omega_k$, that is to say~$(\mathrm{Id} - \omega \otimes \omega)\upsilon=\upsilon - (\upsilon\cdot\omega)\omega$. This projection is necessary to keep~$\omega_k$ on the unit sphere. 
The term~$(\mathrm{Id} - \omega_k \otimes \omega_k) \bar{\omega}_k$ can also be written~$\nabla_\omega(\omega\cdot\bar{\omega}_k)|_{\omega=\omega_k}$, where~$\nabla_\omega$ is the tangential gradient on the unit sphere.
The deterministic part of the SDE~\eqref{SDE_orient} can then be written~$\frac{\mathrm d \omega_k}{\mathrm d t} = \nu(\bar{\rho}_k)\nabla_\omega(\omega\cdot\bar{\omega}_k)|_{\omega=\omega_k}$, which is indeed a relaxation towards~$\bar{\omega}_k$ (where the function~$\omega\mapsto\omega\cdot\bar{\omega}_k$ reaches its maximum), with rate~$\nu(\bar{\rho}_k)$.

Since the local density~$\bar{\rho}$ only appears through the functions~$\nu$ and~$d$, we can assume the following normalization for the kernel~$\widetilde{K}$:
\begin{equation}
\int_{ \xi \in \mathbb{R}^n} \widetilde K (|\xi |,\tfrac{\xi}{|\xi |}\cdot\omega) \mathrm d\xi=1\,. 
\label{rho_bar_normalization}
\end{equation}
This normalization condition (which does not depend on~$\omega\in\mathbb{S}$) means that the density is chosen to be~$1$ in the limit of a uniform distribution of the~$N$ particles in a region of unit volume.
This is not necessary to take a similar condition for the kernel~$K$, since~$\bar{\omega}_k$, defined at equation~$\eqref{omega_k_bar}$, is independent of such a normalization.


In~\cite{degond2008continuum}, the relaxation coefficient~$\nu$ depends on (the cosine of) the angle between the orientation of one particle and the target direction, in order to take into account the “ability to turn”. 
This would amount to replace~$\nu(\bar{\rho}_k)$ by~$\nu(\bar{\rho}_k,\omega_k\cdot\bar{\omega}_k)$ in~\eqref{SDE_orient}. 
With our new features here, this would involve many more computations, but, following exactly the same method, this leads to the same conclusion. 
For simplicity here, we will work without this dependence. 
We will only present the final results with this dependence in some special cases, and add remarks to explain the difference in the steps of the derivation of the macroscopic model.

Numerical simulations tend to show that this time-continuous individual based model present the same behavior at large scale as the discrete one (for example the formation of bands, as in~\cite{chate2008collective}), in the case where~$\nu$ and~$d$ are constant, and the observation kernel is isotropic, as in~\cite{degond2008continuum}. 
We can expect to observe the same behavior, even when~$\nu$ and~$d$ depend on the local density~$\bar{\rho}$.
More precise investigations on the numerical comparison between the original discrete and the present time-continuous dynamical systems are in progress.

\subsection{The continuum model}
In this paper, following the approach of~\cite{degond2008continuum}, we derive, from the particle dynamics~\eqref{SDE_pos}-\eqref{SDE_orient} introduced in the previous subsection, the following continuum model, the functions~$\rho(x,t)>0$ and~$\Omega(x,t)\in\mathbb{S}$ describing the average density and particle direction at a given point~$x\in \mathbb{R}^n$:
\begin{gather}
\partial_t \rho + \nabla_x \cdot (c_1(\rho) \rho \Omega) = 0,
\label{mass_eq} \\
\rho \, \left( \partial_t \Omega + c_2(\rho) (\Omega \cdot \nabla_x) \Omega \right) + \lambda(\rho) \, (\mathrm{Id} - \Omega \otimes \Omega) \nabla_x \rho = 0,
\label{Omega_eq}
\end{gather}
where the functions~$c_1$,~$c_2$, and~$\lambda$ will be specified later on: see~\eqref{def_c1} and~\eqref{def_c2}-\eqref{def_lambda}.

This system of first order partial differential equations shows similarities with the Euler system of isothermal compressible gases, but also some important differences.

Equation~\eqref{mass_eq} is the conservation of mass: the density~$\rho$ moves through direction~$\Omega$ with velocity~$c_1(\rho)$. 
We will see that this velocity, taking values between~$0$ and~$1$, plays the role of an order parameter: when the directions of the particles are strongly correlated (close to~$\Omega$), density moves with velocity close to~$1$. This order parameter depends only on the ratio between~$\nu(\rho)$ (the alignment strength) and~$d(\rho)$ (the noise intensity). 

Equation~\eqref{Omega_eq} describes the evolution of the direction~$\Omega$, the norm of which is constrained to be constant (the projection~$\mathrm{Id}-\Omega\otimes\Omega$ insures that the dynamics take place on the hyperplane orthogonal to~$\Omega$). This constraint implies that the equation is not conservative which is the counterpart of the fact that, at the microscopic level, the only conservative quantity is the mass.
The perturbations of this vector travel with velocity~$c_2(\rho)$, influenced by a term playing the role of pressure due to the density, of intensity~$\lambda(\rho)$. It is important to see that in general (and contrary to the classical Euler system), the two convection speeds~$c_1$ and~$c_2$ are different, which means that the perturbations on the mean orientation do not travel at the same velocity as the “fluid”.

This macroscopic model is the same as the “Vicsek hydrodynamics” of~\cite{degond2008continuum}, except for the definitions of these speeds~$c_1$ and~$c_2$, and of the parameter~$\lambda$. 
This confirms the ability of this model to describe the global dynamics of systems of self-propelled particles with constant speed and alignment interactions (this was also the result of~\cite{degond2010macroscopic}). 

However, the parameters depend here on the density~$\rho$, and their expressions are slightly different (due to this dependence and to the anisotropy of the kernel of observation).
This leads to strong differences in their behavior, as we will see in Section~\ref{properties}, devoted to the investigation of the properties of~\eqref{mass_eq}-\eqref{Omega_eq}. 
For example the parameter~$\lambda$ can be negative, which implies the loss of hyperbolicity. And because of the non isotropy of the observation kernel, the convection speed~$c_2$ can take a large range of values, from negative if the kernel is strongly directed forward, to higher than~$c_1$, if particles are more influenced by neighbors behind them than those in front of them (this has been observed in locust migratory bands~\cite{bazazi2008collective}, where the individuals have “cannibalistic interactions” and avoid to be eaten by those approaching from behind).

\section{Elements of the derivation of the continuum model}
\label{elements}

This derivation proceeds like in~\cite{degond2008continuum} but there are significant differences due to the additional complexity. In this section, we briefly recall the method of Degond and Motsch and will focus on the points which are specific to the present study.

The derivation proceeds in several steps. The first one consists in writing a kinetic version of the particle dynamics.

\subsection{Step 1: mean-field model}
Let~$f(x,\omega,t)$ be the probability density of finding one particle at position~$x\in\mathbb{R}^n$, orientation~$\omega\in\mathbb{S}$ and time~$t\geqslant0$. The mean-field version of~\eqref{SDE_pos}-\eqref{SDE_orient} is given by 
\begin{equation}
\partial_t f + \omega \cdot \nabla_x f + \nabla_\omega \cdot (F f) = \nabla_\omega \cdot (\sqrt{d(\bar{\rho})}\nabla_\omega\sqrt{d(\bar{\rho})} f) , 
\label{KFP} 
\end{equation}
with
\begin{align*}
F (x, \omega, t) &= \nu(\bar{\rho}) \, (\mathrm{Id} - \omega \otimes \omega) \bar{\omega} (x,\omega,t) ,\\
\bar{\rho} (x, \omega, t) &= \int_{ y \in \mathbb{R}^n , \, \upsilon \in \mathbb{S} } \widetilde K\left(|y-x|,\tfrac{y-x}{|y-x|}\cdot\omega\right) \, f(y, \upsilon,t) \, \mathrm d y \, \mathrm d\upsilon\, ,\\
\bar{\omega} (x, \omega, t) &= \frac{J(x,\omega,t)}{|J(x,\omega,t)|}, \\
J(x,\omega,t) &= \int_{ y \in \mathbb{R}^n , \, \upsilon \in \mathbb{S} } K\left(|y-x|,\tfrac{y-x}{|y-x|}\cdot\omega\right) \, 
\upsilon \, f(y, \upsilon,t) \, \mathrm d y \, \mathrm d\upsilon \, .
\end{align*}
The first equation~\eqref{KFP} is the so called Kolmogorov--Fokker--Planck equation. The force term~$F(x,\omega,t)$ corresponds to the orientation interaction.

First of all, if there is no noise (that is~$d(\bar{\rho})=0$ in equations~\eqref{SDE_pos}-\eqref{SDE_orient}, which become ordinary differential equations), the formal derivation of this system is easy: the usual methodology shows that the empirical distribution (see~\cite{spohn1991large}) satisfies the equation~\eqref{KFP}, with~$d=0$.

Difficulties appear when the noise is added. A method consisting in writing the BBGKY hierarchy (see~\cite{ha2008particle}, applied to the Cucker-Smale model of self-propelled particles) would not in that case reduce to a evolution equation involving only the one-particle and the two-particles distributions, since the interaction is not a sum of binary interactions.

We could slightly change our model to make the interaction as a sum of binary interactions, replacing~$\bar{\omega}_k$ by~$\bar{J}_k$ (defined in equation~\eqref{omega_k_bar}) in the system of particles~\eqref{SDE_pos}-\eqref{SDE_orient}. 
In that case the model present an interesting phenomenon of phase transition and is the subject of current work in collaboration with P. Degond and J.-G. Liu~\cite{degond2011macroscopic} (the homogeneous version has been studied with him in~\cite{frouvelle2011dynamics}). 
In that case, writing the BBGKY hierarchy and using exchangeability of particles gives a system of evolution equations, the first one involving only the one-particle and the two-particles distributions. 
The classical assumption of propagation of chaos amounts to consider the two-particles density as the tensor of the one-particle density~$f$ by itself (in the limit of a large number of particles, two particles behave as if they were independent), and this gives exactly the evolution equation~\eqref{KFP}. Actually, it has been recently proved~\cite{bolley2011meanfield} that the mean-field limit of this model is the partial differential equation~\eqref{KFP}, where~$\bar{\omega}$ is replaced by~$\bar{J}$ in the definition of the force~$F$ (in the case where~$\nu$ and~$d$ are constant). The main point to derive this limit is to adapt the classical theory of propagation of chaos~\cite{mckean1967propagation,sznitman1991topics} in a framework of stochastic analysis in a Riemannian manifold (the unit sphere in the present case).

In the case of a non-linear diffusion, some results are given in~\cite{bolley2011stochastic} for other systems of self-propelled particles, under assumptions which would have to be adapted in our framework. 
We can expect to have conditions such as to be Lipschitz for the function~$\sqrt d$, and to be Lipschitz and bounded for the kernel~$\widetilde{K}$. 
Since we use the Stratonovich formulation in order to work on the unit sphere, we get  the term~$\nabla_\omega\cdot(\sqrt{d(\bar{\rho})}\nabla_\omega\sqrt{d(\bar{\rho})}f)$ (instead of~$\Delta_v(d(\bar{\rho})f)$ when the velocity~$v\in\mathbb{R}^d$ satisfies the SDE in the usual It\=o formulation, see the sections~$4.3.5$ and~$4.3.6$ of~\cite{gardiner1985handbook} for the correspondence). 

Finally, when the drift is not under the average form, it is sometimes possible to get a mean-field limit, under regularity assumptions on the coefficients~\cite{oelschlager1984martingale}, or with weaker assumptions, but assuming uniqueness of the solution of the mean-field model~\cite{nagasawa1987propagation}. These results could to be adapted in the framework of stochastic differential equations on the sphere, but dealing with the singularity of~$\bar{\omega}$ (when~$J$ is close to zero) seems to be slightly more complicated.

With these considerations in mind, it is however very reasonable that the limit of the particle system~\eqref{SDE_pos}-\eqref{SDE_orient}, when the number of particles is large, is given by the mean-field model~\eqref{KFP}. 
So we start with this model as a base for the derivation of the continuum model. 
A rigorous proof of the derivation of such a mean-field model from the individual dynamics is left to future work.
\begin{remark}
If we want to take into account some “ability to turn”, we just have to replace~$\nu(\bar{\rho})$ by~$\nu(\bar{\rho},\omega\cdot\bar{\omega})$.
\end{remark}
The next step consists in observing this system at large scale, in both space in time.

\subsection{Step 2: hydrodynamic scaling}
The hydrodynamic scaling consists in the same rescaling for the time and space variable. We introduce a small parameter~$\varepsilon$ and we set~$x'=\varepsilon x$, and~$t'=\varepsilon t$. We define~\mbox{$f^\varepsilon(x',\omega, t')=f(x,\omega,t)$}, and we rewrite the equation~\eqref{KFP} in this new coordinates.

The kinetic equation has the same form, with a factor~$\varepsilon$ in front of each of the terms with space or time derivative:
\begin{equation*}
\varepsilon(\partial_{t'} f^\varepsilon + \omega \cdot \nabla_{x'} f^\varepsilon) + \nabla_\omega \cdot (F^\varepsilon f^\varepsilon) = \nabla_\omega \cdot (\sqrt{d(\bar{\rho}^\varepsilon)}\nabla_\omega\sqrt{d(\bar{\rho}^\varepsilon)} f^\varepsilon) , 
\end{equation*}
with
\begin{equation*}
F^\varepsilon (x', \omega, t') = \nu(\bar{\rho}^\varepsilon) \, (\mathrm{Id} - \omega \otimes \omega) \bar{\omega}^\varepsilon (x',\omega,t') ,
\end{equation*}
where the local rescaled density and orientation are given by
\begin{align*}
\bar{\rho}^\varepsilon (x', \omega, t') &= \int_{ y \in \mathbb{R}^n , \, \upsilon \in \mathbb{S} } \widetilde K\left(|y-x'|,\tfrac{y-x'}{|y-x'|}\cdot\omega\right) \, f^\varepsilon(y, \upsilon,t') \,\tfrac{\mathrm d y}{\varepsilon^n} \, \mathrm d\upsilon\, ,\\
\bar{\omega}^\varepsilon (x', \omega, t') &= \frac{J^\varepsilon(x',\omega,t')}{|J^\varepsilon(x',\omega,t')|}, \\
J^\varepsilon(x',\omega,t') &= \int_{ y \in \mathbb{R}^n , \, \upsilon \in \mathbb{S} } K\left(\tfrac{|y-x'|}{\varepsilon},\tfrac{y-x'}{|y-x'|}\cdot\omega\right) \, 
\upsilon \, f^\varepsilon(y, \upsilon,t') \,\tfrac{\mathrm d y}{\varepsilon^n}\, \mathrm d\upsilon \, .
\end{align*}
The important point is to realize that the average density~$\bar{\rho}^\varepsilon$ and orientation~$\bar{\omega}^\varepsilon$ now depend on~$\varepsilon$, and can be easily expanded in terms of~$\varepsilon$, the non-locality only appearing at high order.
Omitting the primes for simplicity, we have the following expansions, the proofs of which are given in Lemma~\ref{lemma_expansion} of Appendix~\ref{annex_lemma_expansion}: 
\begin{align*} 
\bar{\omega}^\varepsilon (x, \omega, t) &= \Omega^\varepsilon (x, t) + \varepsilon \alpha\, (\omega\cdot\nabla_x)\, \Omega^\varepsilon(x,t) + O(\varepsilon^2)\, ,\\
\bar{\rho}^\varepsilon (x, \omega, t) &= \rho^\varepsilon (x, t) + \varepsilon \widetilde{\alpha}\, \omega\cdot\nabla_x\rho^\varepsilon(x,t) + O(\varepsilon^2)\, ,
\end{align*}
where~$\rho^\varepsilon=\rho_{f^\varepsilon}$ and~$\Omega^\varepsilon=\Omega_{f^\varepsilon}$ are the local density and mean orientation associated to the function~$f^\varepsilon$ (these quantities, depending only on the space and time variables, are related to the first moments with respect to the variable~$\omega$) given by
\begin{align} 
\rho_f &= \int_{ \omega \in \mathbb{S} } f(.,\omega) \, \mathrm d\omega \, ,\label{def_rho}\\
\Omega_f &= \frac{j_f}{| j_f |}, \text{ with } j_f = \int_{ \omega \in \mathbb{S} } \omega \, f(.,\omega) \, \mathrm d\omega \, ,\label{def_omega}
\end{align}
and the constants~$\alpha$ and~$\widetilde{\alpha}$ depend only on the observation kernels~$K$ and~$\widetilde K$. These constants are positive if the kernel is directed forward, and the more acute the “angle of vision”, the bigger the constant related to the kernel.

Now we can introduce these expansions in the mean-field model, and after some easy algebra, the rescaled model can be written in the form 
\begin{align} 
\varepsilon ( \partial_t f^\varepsilon + \omega \cdot \nabla_x f^\varepsilon + \alpha\, P(f^\varepsilon)+\widetilde{\alpha}\, \widetilde{P}(f^\varepsilon)) = Q(f^\varepsilon) + O(\varepsilon^2)\, , 
\label{KFP_rescaled} 
\end{align}
where~$Q$,~$P$ and~$\widetilde P$ are the operators given by the following equations (where~$\dot{\nu}$ and~$\dot d$ are the derivatives of~$\nu$ and~$d$ with respect to~$\rho$): 
\begin{align*} 
Q(f) =& \, - \nu(\rho_f)\nabla_\omega \cdot ((\mathrm{Id} - \omega \otimes \omega) \Omega_f f) + d(\rho_f)\Delta_\omega f , \\
P(f) =& \, \nu(\rho_f)\nabla_\omega \cdot ((\mathrm{Id} - \omega \otimes \omega)((\omega\cdot\nabla_x)\, \Omega_f) f),\\
\begin{split}
\widetilde{P}(f) =& \,\dot{\nu}(\rho_f) \nabla_\omega \cdot ((\omega\cdot\nabla_x\rho_f)\,(\mathrm{Id}-\omega\otimes\omega)\Omega_ff)\\
      &- \dot d(\rho_f) \nabla_\omega \cdot (\tfrac12(\mathrm{Id}-\omega\otimes\omega)\nabla_x\rho_f f+ (\omega\cdot\nabla_x\rho_f)\, \nabla_\omega f).
\end{split}
\end{align*}
Notice that the operator~$Q$ (giving the only term of order~$0$ in~$\varepsilon$) only acts on the variable~$\omega$, and the study of its properties will be important for the following.
\begin{remark}
If~$\nu$ also depends on~$\omega\cdot\bar{\omega}$, the expression of the operator~$Q$ is the same with~$\nu(\rho_f,\omega\cdot\Omega_f)$ instead of~$\nu(\rho_f)$. But then the expressions of~$P$ and~$\widetilde{P}$ complicate in a significant way, since there are also terms with the derivative of~$\nu$ with respect to this second variable.
\end{remark}
Now we are ready to study this system when~$\varepsilon\to0$.

\subsection{Step 3: limit as~$\varepsilon\to0$}

This is the main step, where we give the link between the continuum limit~\eqref{mass_eq}-\eqref{Omega_eq} and the rescaled kinetic equation~\eqref{KFP_rescaled} of the particle dynamics.

\begin{thm} 
The limit when~$\varepsilon \to 0$ of~$f^\varepsilon$ is given (formally) by~$f^0 = \rho M_{\kappa(\rho)\Omega}$ where~$\rho = \rho(x,t) > 0$ is the total mass of~$f^0$ and~$\Omega = \Omega(x,t) \in \mathbb{S}$ its mean orientation: 
\begin{gather*} 
\rho(x,t) = \int_{\omega \in {\mathbb{S}}} f^0(x,\omega, t) \, \mathrm d\omega , \\
\Omega = \frac{j}{|j|} \, , \quad j(x,t) = 
\int_{\omega \in {\mathbb{S}}} f^0(x,\omega, t) \, \omega \, \mathrm d\omega ,
\end{gather*}
where~$M_{\kappa\Omega}$ is a given function of~$\omega\cdot\Omega$ and ~$\kappa=\frac{\nu}d$ which will be specified later on (see~\eqref{M_def}). 
Furthermore,~$\rho(x,t)$ and~$\Omega(x,t)$ satisfy the following system of first order partial differential equations: 
\begin{gather*} 
\partial_t \rho + \nabla_x \cdot (c_1 \rho \Omega) = 0. \\
\rho \, \left( \partial_t \Omega + c_2 (\Omega \cdot \nabla_x) \Omega \right) + \lambda \, (\mathrm{Id} - \Omega \otimes \Omega) \nabla_x \rho = 0,
\end{gather*}
where the convection speeds~$c_1$,~$c_2$ and the parameter~$\lambda$ depend on~$\rho$. Their expressions will be given in this section (see~\eqref{def_c1} and~\eqref{def_c2}-\eqref{def_lambda}). 
\label{theo_limit}
\end{thm}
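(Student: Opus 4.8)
The plan is to follow the Degond--Motsch programme in three movements: identify the local equilibria (the kernel of~$Q$), pass to the leading order in~$\varepsilon$ to pin down~$f^0$, and then extract the two macroscopic equations by testing the rescaled equation~\eqref{KFP_rescaled} against a suitable set of collisional invariants.

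First I would characterize the equilibria. Writing~$Q$ in divergence form
\[
Q(f) = d(\rho_f)\,\nabla_\omega\cdot\left[ M_{\kappa(\rho_f)\Omega_f}\,\nabla_\omega\left(\frac{f}{M_{\kappa(\rho_f)\Omega_f}}\right)\right],
\]
with~$M_{\kappa\Omega}(\omega)\propto e^{\kappa\,\omega\cdot\Omega}$ and~$\kappa=\nu/d$, one sees after multiplying by~$f/M_{\kappa\Omega}$ and integrating by parts on~$\mathbb{S}$ that~$Q(f)=0$ forces~$f/M_{\kappa(\rho_f)\Omega_f}$ to be constant, i.e.~$f=\rho\,M_{\kappa(\rho)\Omega}$. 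The delicate point is the self-consistency of this ansatz: one must check that~$M_{\kappa\Omega}$ is normalized to unit mass (fixing the constant in~\eqref{M_def}) and that its flux is aligned with~$\Omega$, so that~$\rho_{f^0}=\rho$ and~$\Omega_{f^0}=\Omega$ really hold. By the rotational symmetry of~$M_{\kappa\Omega}$ about~$\Omega$ this reduces to the scalar first moment~$\langle\omega\cdot\Omega\rangle_{M_{\kappa}}$, which will reappear as the order parameter. Sending~$\varepsilon\to0$ in~\eqref{KFP_rescaled} then gives~$Q(f^0)=0$, hence~$f^0=\rho\,M_{\kappa(\rho)\Omega}$.

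Next I would produce the conservation laws. Integrating~\eqref{KFP_rescaled} over~$\omega$ and using~$\int_{\mathbb S}Q(f)\,\mathrm d\omega=0$ (the constant being the only genuine collisional invariant, as mass is the sole quantity conserved by~$Q$), then dividing by~$\varepsilon$ and letting~$\varepsilon\to0$, yields the conservation of mass~\eqref{mass_eq}, with~$c_1(\rho)=\langle\omega\cdot\Omega\rangle_{M_{\kappa(\rho)}}$ coming from~$\int_{\mathbb S}\omega\,M_{\kappa\Omega}\,\mathrm d\omega=c_1\Omega$. For the orientation equation there is no further exact invariant, so I would invoke the generalized collisional invariants: for a fixed direction~$\Omega$, the functions~$\psi$ satisfying~$\int_{\mathbb S}\psi\,Q(f)\,\mathrm d\omega=0$ for every~$f$ with~$\Omega_f=\Omega$. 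Seeking~$\psi$ of the form~$\psi=\mathbf g_\kappa(\omega\cdot\Omega)\,A\cdot(\mathrm{Id}-\Omega\otimes\Omega)\omega$ with~$A\perp\Omega$ arbitrary, one reduces the defining condition to an elliptic ODE on~$[-1,1]$ for the scalar profile~$\mathbf g_\kappa$, weighted by~$M_{\kappa\Omega}$ (the same Sturm--Liouville problem as in~\cite{degond2008continuum}, but with the~$\rho$-dependent~$\kappa$); solvability follows from the orthogonality of the right-hand side to the constants in the weighted space.

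Finally, I would test~\eqref{KFP_rescaled} against the vectorial GCI~$\psi$ and pass to the limit. Substituting~$f^0=\rho\,M_{\kappa(\rho)\Omega}$ into the three operators produces, after integration by parts on the sphere, the transport term~$\rho\,c_2(\rho)(\Omega\cdot\nabla_x)\Omega$ and the pressure-like term~$\lambda(\rho)(\mathrm{Id}-\Omega\otimes\Omega)\nabla_x\rho$: the coefficients~$c_2$ and~$\lambda$ emerge as ratios of~$\mathbf g_\kappa$-weighted moments of~$M_{\kappa\Omega}$, the anisotropy constant~$\alpha$ rescaling the~$(\omega\cdot\nabla_x)\Omega$ contribution from~$P$, while the density-derivatives~$\dot\nu,\dot d$ together with~$\widetilde\alpha$ feed the extra terms of~$\widetilde P$ into~$\lambda$ — precisely the mechanism by which the coefficients acquire a dependence on~$\rho$ absent from~\cite{degond2008continuum}. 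The main obstacle, as in the original work, is the construction and solvability of the generalized collisional invariants, together with the verification that the~$\mathrm{Id}-\Omega\otimes\Omega$ projection arising from differentiating~$\Omega_f=j_f/|j_f|$ under the constraint~$\Omega_f=\Omega$ matches the tensorial structure of~$P$ and~$\widetilde P$; the novelty here is that both the weight~$\kappa(\rho)$ and the profile~$\mathbf g_\kappa$ vary with~$\rho$, so one must keep careful track of the~$\rho$-derivatives carried by~$\widetilde P$ when reading off~$\lambda$.
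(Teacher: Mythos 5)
Your overall architecture — equilibria of~$Q$, mass equation from the constant invariant, orientation equation from generalized collisional invariants, coefficients read off from~$h_\kappa$-weighted moments — is exactly the paper's route, and your dissipation argument for the equilibria and your solvability reasoning (Fredholm orthogonality of~$A\cdot\omega$ to the constants in~$L^2(M_{\kappa\Omega}\,\mathrm d\omega)$, where the paper instead uses Lax--Milgram in a weighted space) are sound. But your definition of the generalized collisional invariants contains a genuine gap, and it sits precisely where the novelty of this paper bites. You require~$\int_{\mathbb S}\psi\,Q(f)\,\mathrm d\omega=0$ for \emph{every}~$f$ with~$\Omega_f=\Omega$, fixing only the direction. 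Since~$Q(f)=-d(\rho_f)\,L_{\kappa(\rho_f)\Omega_f}(f)$ and~$\rho_f$ — hence~$\kappa(\rho_f)$ — varies freely over this constraint set, your condition forces~$\psi$ to be an invariant of~$L_{\kappa\Omega}$ simultaneously for every~$\kappa$ in the range of~$\kappa(\cdot)$. For fixed~$\kappa$ the nonconstant invariants are spanned by~$h_\kappa(\omega\cdot\Omega)\,A\cdot\omega$, and the profiles~$h_{\kappa_1}$,~$h_{\kappa_2}$ are not proportional for~$\kappa_1\neq\kappa_2$; so as soon as~$\nu/d$ genuinely depends on~$\rho$ (the very case of interest here) the intersection over~$\kappa$ collapses to the constants, your definition yields no vectorial invariant, and the derivation of the~$\Omega$-equation breaks down. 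The paper's definition additionally restricts to functions with a prescribed~$\kappa(\rho_f)=\kappa$, producing a family~${\mathcal C}_{\kappa\Omega}$ indexed by \emph{both}~$\kappa$ and~$\Omega$; in the limit one then tests against the invariant attached to the local values,~$\psi=h_{\kappa(\rho^\varepsilon)}(\omega\cdot\Omega^\varepsilon)\,A\cdot\omega$, for which~\eqref{Q_coll_invar} applies because~$f^\varepsilon$ itself carries those~$\kappa$ and~$\Omega$. Your phrase about "the same Sturm--Liouville problem but with the~$\rho$-dependent~$\kappa$" shows you expect this outcome, but it is incompatible with the definition you actually wrote.

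A secondary defect in the same step: the set~$\{f\,:\,\Omega_f=\Omega\}$ is not a vector space (the map~$f\mapsto j_f/|j_f|$ is nonlinear), so the orthogonal-complement argument needed to characterize the invariants — i.e.\ to show your ansatz exhausts them, which is the content of Proposition~\ref{structure_GCI} — does not run on it. The paper replaces the constraint by the linear condition~$(\mathrm{Id}-\Omega\otimes\Omega)\,j_f=0$ and defines the invariants through the linear operator~$L_{\kappa\Omega}$ rather than through~$Q$, obtaining~${\mathcal C}_{\kappa\Omega}=\{\psi\,:\,L^*_{\kappa\Omega}\psi=A\cdot\omega,\ A\cdot\Omega=0\}$ up to constants by duality; this is exactly the correction, flagged in the text, of the slightly incorrect definition of~\cite{degond2008continuum} that your proposal reproduces. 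Once the definition is repaired, the remainder of your outline — testing~\eqref{KFP_rescaled} against the vectorial invariant, projecting with~$\mathrm{Id}-\Omega\otimes\Omega$, and extracting~$c_2$ and~$\lambda$, with~$\alpha$ entering through~$P$ and~$\widetilde\alpha$,~$\dot\nu$,~$\dot d$ through~$\widetilde P$ — coincides with the paper's computation.
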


The method to obtain this result follows closely~\cite{degond2008continuum}, and is only summarized here. We will focus on the details which are specific to this study.

\subsubsection{Equilibria}

The first important point is to determine the null space~$\mathcal E$ of~$Q$, since it contains the limits of~\eqref{KFP_rescaled}. We find, as in~\cite{degond2008continuum}, that it is a~$n$-dimensional manifold consisting of functions analogous to Maxwellian distributions in the classical Boltzmann theory:
\begin{equation*} 
{\mathcal E} = \{ \rho M_{\kappa(\rho)\Omega}(\omega)\, | \, \rho > 0, \, \Omega \in \mathbb{S}\} \, ,
\end{equation*}
where
\begin{equation}
\kappa(\rho)=\frac{\nu(\rho)}{d(\rho)}>0\text{ and }M_{\kappa\Omega}(\omega) =\frac{e^{\kappa\, \omega \cdot \Omega}}{\int_\mathbb{S} e^{\kappa\, \upsilon \cdot \Omega}\, \mathrm d\upsilon}.\label{M_def} 
\end{equation}
The main difference with~\cite{degond2008continuum} is the dependence on~$\rho$ for this equilibrium in a nonlinear way, coming from the dependence of~$\nu$ and~$d$ on~$\rho$. This will result in additional terms in the computations, and so in additional terms in the expressions of the constants in the macroscopic model.

The normalization constant~$\int_\mathbb{S} e^{\kappa\, \omega \cdot \Omega}\, \mathrm d\omega$ depends only on~$\kappa$ (not on~$\Omega$) and so the total mass of~$M_{\kappa\Omega}(\omega)$ is~$1$ and its mean direction is~$\Omega$, that is to say~$\rho_{M_{\kappa\Omega}} = 1$ and~$\Omega_{M_{\kappa\Omega}}= \Omega$. Indeed we can easily compute the flux~$j_{M_{\kappa\Omega}}$ of this equilibrium, defined by~\eqref{def_omega}, and we get:
\begin{equation} 
 j_{M_{\kappa\Omega}} = \langle \cos \theta \rangle_{M_\kappa} \, \Omega , \label{flux}
\end{equation}
where for any function~$\gamma(\cos \theta)$, the notation~$\langle \gamma(\cos \theta) \rangle_{M_\kappa}$ stands for the mean of the function~$\omega\mapsto\gamma(\omega\cdot\Omega)$ against the density~$M_{\kappa\Omega}$, i.e. 
\begin{equation*} 
\langle \gamma(\cos \theta) \rangle_{M_\kappa} = \int_{\omega \in \mathbb{S}} M_{\kappa\Omega}(\omega) \gamma(\omega \cdot \Omega) \, \mathrm d\omega=\frac{\int_\mathbb{S} \gamma(\omega \cdot \Omega) e^{\kappa\, \omega \cdot \Omega}\, \mathrm d\omega}{\int_\mathbb{S} e^{\kappa\, \omega \cdot \Omega}\, \mathrm d\omega}.
\end{equation*}
Notice that~$\langle \gamma(\cos \theta) \rangle_{M_\kappa}$ depends only on~$\kappa$, not on~$\Omega$:
\begin{equation} 
\langle \gamma(\cos \theta) \rangle_{M_\kappa}= \frac{\int_0^\pi \gamma(\cos \theta) e^{\kappa\cos\theta} \, \sin^{n-2} \theta \, \mathrm d\theta}{\int_0^\pi e^{\kappa\cos\theta} \, \sin^{n-2} \theta \, \mathrm d\theta}.
\label{brackets}
\end{equation}
\begin{remark}
In the case where~$\nu$ depends on~$\rho$ and~$\omega\cdot\Omega$, we have to replace in all this point~$\kappa\, \omega\cdot\Omega$ by~$\widehat{\kappa}(\rho,\omega\cdot\Omega)$, where~$\widehat{\kappa}(\rho, \mu)={\int_0^\mu\frac{\nu(\rho,\tau)}{d(\rho)}\mathrm d\tau}$. 
\end{remark}

\subsubsection{Collisional invariants}

The second important point is the determination of generalized collisional invariants. Indeed, since there is no other conservation relation than the conservation of mass, the collision invariants reduce to the constants, and the integration of the equation against these invariants only gives one equation, which is not sufficient to describe the behavior of the equilibrium (which lives on a~$n$-dimensional manifold). 
The main idea in~\cite{degond2008continuum} was to overcome this problem with a generalization of the concept of collisional invariants.

A collision invariant is a function~$\psi$ such that for all function~$f$ of~$\omega$, the integration of~$Q(f)$ against~$\psi$ is zero. So we ask for a generalized invariant to satisfy this definition only for a restricted subset of functions~$f$. In the case where the dependence on~$\rho$ in the equilibria is linear, restricting to all functions with a given orientation~$\Omega$ is sufficient to obtain the remaining equation. Here we also have to restrict to functions with a given density too (actually, we impose a given~$\kappa(\rho)$). We will have then a set of generalized coefficients indexed by~$\Omega\in\mathbb{S}$ and~$\kappa>0$.

More precisely, to have a good definition, we have to work with linear operators (this point has been mentioned in~\cite{degond2010macroscopic}, since the result given in~\cite{degond2008continuum}, with the definition therein, was slightly incorrect).
We first define the linear operator~$L_{\kappa\Omega}$ by
\begin{equation*}
L_{\kappa\Omega}(f) = - \Delta_\omega f + \kappa\nabla_\omega\cdot((\mathrm{Id}-\omega\otimes\omega)\Omega f)= - \nabla_\omega \cdot \left[ M_{\kappa\Omega} \nabla_\omega \left( \frac{f}{M_{\kappa\Omega}} \right) \right],
\label{L_kappa_Omega}
\end{equation*}
 and then the generalized collisional invariants~${\mathcal C}_{\kappa\Omega}$ (associated to~$\kappa\in\mathbb{R}$ and~$\Omega\in\mathbb{S}$) as the following vector space:
\begin{equation*} 
{\mathcal C}_{\kappa\Omega}=\left\{\psi|\int_{\omega \in \mathbb{S}} L_{\kappa\Omega}(f) \, \psi \, \mathrm d\omega = 0 , \, \forall f \text{ such that } \, (\mathrm{Id}-\Omega\otimes\Omega)j_f=0 \right\}.
\end{equation*}
We remark that the operator~$Q(f)$ can be written as~$Q(f)= - d(\rho_f) L_{\kappa(\rho_f)\Omega_f}(f)$.
Hence, for any generalized collisional invariant~$\psi\in {\mathcal C}_{\kappa\Omega}$, we have
\begin{equation} 
\forall f \text{ such that } \, \Omega_f = \Omega \text{ and } \kappa(\rho_f)=\kappa, \int_{\omega \in \mathbb{S}} Q(f) \, \psi \, \mathrm d\omega = 0,
\label{Q_coll_invar}
\end{equation}
and this is the only property of generalized collisional invariants we will need in the following.

The computation of the set of generalized collisional invariants has been done in~\cite{degond2008continuum} in dimension~$3$. We give here the general result in any dimension.

\begin{prop} Structure of the generalized collisional invariants.

\label{structure_GCI}
Any generalized collisional invariant~$\psi$ associated to~$\kappa\in\mathbb{R}$ and~$\Omega\in\mathbb{S}$ has the following form:
\begin{equation*}
\psi(\omega)=C+h_\kappa(\omega\cdot\Omega)A\cdot\omega,
\end{equation*}
where~$C\in\mathbb{R}$ is a constant, the vector~$A\in\mathbb{R}^n$ is orthogonal to~$\Omega$, and~$h_\kappa$ is a given positive function on~$(-1,1)$, depending on the parameter~$\kappa$, which will be specified later on.
In particular, the generalized collisional invariants form a vector space of dimension~$n$.
\end{prop}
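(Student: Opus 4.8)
The plan is to exploit the self-adjoint, divergence structure of $L_{\kappa\Omega}$ to turn the defining condition of $\mathcal C_{\kappa\Omega}$ into a solvable elliptic equation for $\psi$, and then to use the rotational symmetry about $\Omega$ to reduce that equation to a one-dimensional Sturm–Liouville problem whose solution is the announced profile $h_\kappa$.

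First I would integrate by parts twice on $\mathbb S$ (which has no boundary). Since $L_{\kappa\Omega}(f) = -\nabla_\omega\cdot[M_{\kappa\Omega}\nabla_\omega(f/M_{\kappa\Omega})]$, one gets, for every admissible $f$ and every smooth $\psi$,
\[\int_{\mathbb S} L_{\kappa\Omega}(f)\,\psi\,\mathrm d\omega = \int_{\mathbb S} M_{\kappa\Omega}\,\nabla_\omega(f/M_{\kappa\Omega})\cdot\nabla_\omega\psi\,\mathrm d\omega = \int_{\mathbb S} f\,L^*_{\kappa\Omega}(\psi)\,\mathrm d\omega,\]
with $L^*_{\kappa\Omega}(\psi) = -M_{\kappa\Omega}^{-1}\nabla_\omega\cdot(M_{\kappa\Omega}\nabla_\omega\psi)$. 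The constraint $(\mathrm{Id}-\Omega\otimes\Omega)j_f = 0$ reads $\int_{\mathbb S} f\,(\omega-(\omega\cdot\Omega)\Omega)\,\mathrm d\omega = 0$, i.e.\ $n-1$ independent linear conditions on $f$ whose $L^2$-representers span exactly $\{\,\omega\mapsto A\cdot\omega : A\perp\Omega\,\}$. Hence $\psi\in\mathcal C_{\kappa\Omega}$ if and only if $L^*_{\kappa\Omega}(\psi)$ belongs to this $(n-1)$-dimensional space, that is, $L^*_{\kappa\Omega}(\psi) = A\cdot\omega$ for some $A\in\mathbb R^n$ with $A\perp\Omega$.

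Next I would solve this equation. Constants solve $L^*_{\kappa\Omega}(\psi)=0$, which accounts for the free parameter $C$; it remains to produce, for each $A\perp\Omega$, a particular solution. The compatibility (Fredholm) condition holds because the right-hand side is orthogonal, in $L^2(M_{\kappa\Omega})$, to the kernel of the self-adjoint operator $L^*_{\kappa\Omega}$, namely the constants: indeed $\int_{\mathbb S} M_{\kappa\Omega}\,(A\cdot\omega)\,\mathrm d\omega = A\cdot j_{M_{\kappa\Omega}} = \langle\cos\theta\rangle_{M_\kappa}\,(A\cdot\Omega) = 0$ by \eqref{flux} and $A\perp\Omega$. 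Lax–Milgram applied to the coercive form $(\psi,\phi)\mapsto\int_{\mathbb S} M_{\kappa\Omega}\nabla_\omega\psi\cdot\nabla_\omega\phi\,\mathrm d\omega$ on $H^1(\mathbb S)$ modulo constants then yields a unique solution up to an additive constant, and elliptic regularity makes it smooth. Because the data $A\cdot\omega$ and the weight $M_{\kappa\Omega}$ are equivariant under the rotations fixing $\Omega$, and the solution map is linear in $A$, the solution must take the zonal-times-linear form $\psi = h_\kappa(\omega\cdot\Omega)\,(A\cdot\omega)$. This already gives $\dim\mathcal C_{\kappa\Omega} = 1+(n-1) = n$.

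Finally, inserting the ansatz $\psi = h_\kappa(\mu)\,(A\cdot\omega)$ with $\mu=\omega\cdot\Omega$, and using $\Delta_\omega(A\cdot\omega)=-(n-1)(A\cdot\omega)$ and $\nabla_\omega\mu = \Omega-\mu\omega$, I would reduce the equation to the singular Sturm–Liouville problem
\[-\frac{1}{W}\big(W(1-\mu^2)\,h_\kappa'\big)' + \big[(n-1)+\kappa\mu\big]\,h_\kappa = 1,\qquad W(\mu)=e^{\kappa\mu}(1-\mu^2)^{(n-1)/2},\]
on $(-1,1)$, the endpoints being singular points at which smoothness of $\psi$ selects the relevant solution. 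Existence and uniqueness of $h_\kappa$ follow from the variational argument above, so the remaining and main difficulty is to prove $h_\kappa>0$. The zeroth-order coefficient $(n-1)+\kappa\mu$ can be negative near $\mu=-1$ for large $\kappa$, so I would not argue directly on $h_\kappa$; instead I would set $g(\theta)=\sin\theta\,h_\kappa(\cos\theta)$, for which the same equation becomes
\[-g'' + \big(\kappa\sin\theta-(n-2)\cot\theta\big)\,g' + \frac{n-2}{\sin^2\theta}\,g = \sin\theta,\qquad \theta\in(0,\pi),\ g(0)=g(\pi)=0.\]
Here the zeroth-order coefficient is nonnegative and the right-hand side $\sin\theta$ is strictly positive on $(0,\pi)$, so the maximum principle forbids a negative interior minimum and forces $g>0$ there; since $\sin\theta>0$, this gives $h_\kappa>0$ on $(-1,1)$. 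The technical heart of the proof is thus this endpoint-singular ODE analysis, the positivity being secured by the change of unknown $g=\sin\theta\,h_\kappa$ that renders the potential nonnegative.
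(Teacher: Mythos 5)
Your proposal is correct and follows essentially the same route as the paper: you dualize the defining condition to $L^*_{\kappa\Omega}\psi = A\cdot\omega$ with $A\perp\Omega$, solve by Lax--Milgram (your explicit check of the Fredholm compatibility $\int_{\mathbb S} M_{\kappa\Omega}(A\cdot\omega)\,\mathrm d\omega = 0$ is a welcome detail the paper leaves implicit), reduce by the zonal ansatz to the one-dimensional singular problem $\widetilde L^*_\kappa g = \sin\theta$ for $g(\theta)=h_\kappa(\cos\theta)\sin\theta$ --- your Sturm--Liouville form in $\mu$ is exactly $L(h_\kappa)+\kappa D(h_\kappa)=1$ and matches the paper's \eqref{Ltild_star_kappa}-\eqref{elliptic_problem} --- and obtain positivity by the maximum principle on $g$, precisely the paper's argument. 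The only real deviation is that you derive the form $h_\kappa(\omega\cdot\Omega)\,A\cdot\omega$ from equivariance of the solution map rather than, as the paper does, constructing a solution of this form and invoking uniqueness in $\dot H^1(\mathbb S)$; as stated your symmetry step is slightly quick (for $n=3$ rotational equivariance under $SO(2)$ alone admits a tangential component $\tilde g(\theta)\,A\cdot v^\perp$, so one must also use the reflections in hyperplanes containing $\Omega$, under which the problem is invariant), but since you already have Lax--Milgram uniqueness this is immediately repaired, either by completing the $O(n-1)$-equivariance argument or by falling back on the paper's construct-then-identify route.
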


\begin{proof}
We first rewrite the set~$\{f | \, (\mathrm{Id}-\Omega\otimes\Omega)j_f=0 \}$ as the set of functions~$f$ such that for all~$A\in\mathbb{R}^n$ with~$A\cdot\Omega=0$, we have that~$\int_\mathbb{S}A\cdot\omega f \mathrm d\omega=0$. 
Finally this is the orthogonal of the set~$\{\omega\mapsto A\cdot\omega\text{ for } A\cdot\Omega=0 \}$, for the usual inner product on~$L^2(\mathbb{S})$. 
We can then rewrite the set of generalized collisional invariants:
\begin{align*} 
{\mathcal C}_{\kappa\Omega}&=\left\{\psi|\int_{\omega \in \mathbb{S}} f\,L_{\kappa\Omega}^* \, \psi \, \mathrm d\omega = 0 , \, \forall f \in\{\omega\mapsto A\cdot\omega\text{ for } A\cdot\Omega=0 \}^\perp \right\}\\
&=\{\psi|L_{\kappa\Omega}^* \, \psi \in(\{\omega\mapsto A\cdot\omega\text{ for } A\cdot\Omega=0 \}^\perp)^\perp\}\\
&=\{\psi|L_{\kappa\Omega}^* \, \psi (\omega)=A\cdot\omega \text{ with } A\cdot\Omega=0\},
\end{align*}
the operator~$L_{\kappa\Omega}^*$ being the adjoint of the operator~$L_{\kappa\Omega}$, which can be written
\begin{equation}
L_{\kappa\Omega}^*\,\psi = - \Delta_\omega\psi - \kappa\Omega\cdot\nabla_\omega \psi= - \frac1{M_{\kappa\Omega}} \nabla_\omega \cdot ( M_{\kappa\Omega} \nabla_\omega \psi).
\label{Lstar_kappa_Omega}
\end{equation}

It is then easy to show that the problem~$L_{\kappa\Omega}^* \, \psi (\omega)=A\cdot\omega$, for~$A\cdot\Omega=0$ has a unique solution in the space~$\dot H^1(\mathbb{S})$ (functions of~$H^1(\mathbb{S})$ with mean zero), using Lax-Milgram theorem and the Poincaré inequality. 
Hence, if we show that this solution has the form~$\psi(\omega)=h_\kappa(\Omega\cdot\omega)A\cdot\omega$, the solutions in the space~$H^1(\mathbb{S})$ are equal to this solution plus a constant~$C$. 

We search a solution of this form. 
We identify~$\Omega$ with the last element of an orthogonal basis of~$\mathbb{R}^n$, and~$\mathbb{S}_{n-2}$ with the elements on the unit sphere~$\mathbb{S}$ which are orthogonal to~$\Omega$. 
We can then write~$\omega=\cos\theta\,\Omega+\sin\theta\,v$, where~$v\in\mathbb{S}_{n-2}$ and~$\theta\in[0,\pi]$ (this decomposition is unique when~$\omega$ is different from~$\Omega$ and~$-\Omega$). In this framework we try to find a solution of the form~$\psi(\omega)=h_\kappa(\cos\theta)\sin\theta \, A\cdot v$.

For~$\psi(\omega)=g(\theta)Z(v)$, we have, in dimension~$n\geqslant3$:
\begin{equation*}
\nabla_\omega\psi(\omega)=g'(\theta) e_\theta Z(v) + \frac{g(\theta)}{\sin\theta}\nabla_vZ(v),
\end{equation*}
where the unit vector~$e_\theta$ is~$\nabla_\omega\theta=-\frac1{\sin\theta}(\mathrm{Id}-\omega\otimes\omega)\Omega$. 
A tangent vector field can always be written~$a\, e_\theta + \mathcal{A}$ where~$\mathcal{A}$ is a vector field tangent to the sphere~$\mathbb{S}_{n-2}$, and we have
\begin{equation*}
\nabla_\omega\cdot(a\, e_\theta + \mathcal{A})=\sin^{2-n}\theta \,\partial_\theta(\sin^{n-2}\theta\, a) + \frac1{\sin\theta}\nabla_v\cdot\mathcal{A}\,.
\end{equation*}
Finally we get, using the second part of~\eqref{Lstar_kappa_Omega},
\begin{equation*}
L_{\kappa\Omega}^*\,\psi=-\sin^{2-n}\theta e^{-\kappa\cos\theta}\tfrac{\mathrm d}{\mathrm d\theta}(\sin^{n-2}\theta e^{\kappa\cos\theta}g'(\theta))Z(v)-\tfrac{1}{\sin^2\theta}g(\theta)\Delta_vZ(v).
\end{equation*}
In our case we have~$Z(v)=A\cdot v$, so we get~$\Delta_vZ=-(n-2)Z$ (this is a spherical harmonic of degree~$1$ on~$\mathbb{S}_{n-2}$). So~$L_{\kappa\Omega}^*\,\psi$ is also of the form~$\widetilde L_\kappa^*g(\theta)Z(v)$, where 
\begin{equation}
\widetilde L_\kappa^*g(\theta)=-\sin^{2-n}\theta e^{-\kappa\cos\theta}\tfrac{\mathrm d}{\mathrm d\theta}(\sin^{n-2}\theta e^{\kappa\cos\theta}g'(\theta))+\tfrac{n-2}{\sin^2\theta}g(\theta).
\label{Ltild_star_kappa}
\end{equation}
Finally, solving~$L_{\kappa\Omega}^* \, (h(\omega\cdot\Omega)A\cdot\omega)=A\cdot\omega$ comes down to solving 
\begin{equation}
\widetilde L_\kappa^*g =\sin\theta,\quad\text{with}\quad g(\theta)=h(\cos\theta)\sin\theta.
\label{elliptic_problem}
\end{equation}
When~$A\neq0$, it easy to see that the function~$\omega\mapsto h(\omega\cdot\Omega)A\cdot\omega$ belongs to~$H^1(\mathbb{S})$ if and only if the function~$g:\theta\mapsto h(\cos\theta)\sin\theta$ belongs to the space~$V$ (a “weighted~$H^1_0$”) defined by
\begin{equation}
V = \{ g \, | \,(n-2)(\sin\theta)^{\frac n2-2} g \in L^2(0,\pi), \, (\sin\theta)^{\frac n2-1}g \in H^1_0(0,\pi) \}.
\label{def_V}
\end{equation}
Using again Lax-Milgram theorem in this space~$V$, we get that the problem~\eqref{elliptic_problem} has a unique solution, denoted~$g_\kappa$, which is positive (by the maximum principle). 
Writing~$h_\kappa(\mu)=(1-\mu^2)^{-\frac12}g_\kappa(\arccos(\mu))$ gives that~$\psi(\omega)=h_\kappa(\omega\cdot\Omega) \, A\cdot\omega$ is a solution to the partial differential equation~$L_{\kappa\Omega}^* \, \psi (\omega)=A\cdot\omega$. 
We could write the elliptic equation on~$(-1,1)$ satisfied by~$h_\kappa$ to have another definition, but this does not give a more elegant formulation.

In the case of dimension~$2$, we write~$\psi(\omega)=g(\theta)A\cdot v_0$, where~$v_0$ is one of the two unit vectors orthogonal to~$\Omega$ and~$g$ is an odd~$2\pi$-periodic function in~$H^1_{loc}(\mathbb{R})$, which can be identified with a function~$g\in H^1_0(0,\pi)=V$. We still have that the elliptic problem~$L_{\kappa\Omega}^* \, \psi (\omega)=A\cdot\omega$ is equivalent to~\eqref{elliptic_problem} with~$g\in V$, with the same definitions~\eqref{Ltild_star_kappa}-\eqref{def_V} of~$\widetilde L_\kappa$ and~$V$.
But since this elliptic equation reduces to~$( e^{\kappa\cos\theta}g'(\theta))'=-\sin\theta\, e^{\kappa \cos\theta}$, we now have the following explicit expression of~$g_\kappa$:
\begin{equation}
g_\kappa(\theta)=\frac{\theta}{\kappa}-\frac{\pi}{\kappa}\frac{\int_0^\theta e^{-\kappa \cos\varphi}\mathrm d\varphi}{\int_0^\pi e^{-\kappa \cos\varphi}\mathrm d\varphi}\, .
\label{def_g2D}
\end{equation}

\end{proof}

\begin{remark}
If we take into account the “ability to turn”, we just replace~$\kappa\cos\theta$ in equation~\eqref{Ltild_star_kappa} by~$\widehat{\kappa}(\cos\theta)$. In dimension~$2$, we still have an explicit expression:
\begin{equation}
\label{g_turn} g_{\widehat{\kappa}}(\theta)=g_{\widehat{\kappa}}^0(\theta)-\tfrac{g_{\widehat{\kappa}}^0(\pi)}{g_{\widehat{\kappa}}^\infty(\pi)}g_{\widehat{\kappa}}^\infty(\theta),
\end{equation}
where
\begin{gather}
\label{g_0_turn} g_{\widehat{\kappa}}^0(\theta)=-\int_0^\theta\int_\varphi^\pi e^{\widehat{\kappa}(\cos\phi)-\widehat{\kappa}(\cos\varphi)} \sin\phi\, \mathrm d\phi\, \mathrm d\varphi,\\
\label{g_inf_turn} g_{\widehat{\kappa}}^\infty(\theta)=\int_0^\theta e^{-\widehat{\kappa}(\cos\varphi)}\, \mathrm d\varphi\, .
\end{gather}
\end{remark}

\subsubsection{Computation of the limit as~$\varepsilon\to0$}
The third and final important point is taking the limit~$\varepsilon\to0$ in the equation~\eqref{KFP_rescaled}, after integrating against the collision invariants. Since we do not have results of existence, uniqueness and regularity of the solution, all the limits in this section have to be understood as formal limits. A rigorous proof of convergence is left to future work.

When~$\varepsilon\to0$, if we fix~$x$ and~$t>0$, we have that~$Q(f^\varepsilon)$, as a function of~$\omega$, tends formally to zero, so~$f^\varepsilon$ tends to an equilibrium of the operator~$Q$, of the form~$\rho M_{\kappa(\rho)\Omega}$, where~$\rho>0$ and~$\Omega\in\mathbb{S}$ are given functions of~$x$ and~$t$. This is the first part of Theorem~\ref{theo_limit}. So we have~$\rho^\varepsilon\to\rho$, and~$\Omega^\varepsilon\to\Omega$. When there is no possible confusion, we will write~$\kappa$ for~$\kappa(\rho)$.

For the mass equation, we use the constant invariant: we have, since the operators~$Q$,~$P$ and~$\widetilde{P}$ are given as the divergence (with respect to~$\omega$) of a function,
\begin{equation*}
\int_{\omega \in \mathbb{S}}Q(f^\varepsilon)\, \mathrm d\omega=\int_{\omega \in \mathbb{S}}P(f^\varepsilon)\, \mathrm d\omega=\int_{\omega \in \mathbb{S}}\widetilde{P}(f^\varepsilon)\, \mathrm d\omega=0.
\end{equation*}
Hence, integrating the equation~\eqref{KFP_rescaled} with respect to~$\omega$, we get: 
\begin{equation*} 
\partial_t \rho^\varepsilon + \nabla_x \cdot j^\varepsilon = O(\varepsilon).
\end{equation*}
Actually, we can even replace the~$O(\varepsilon)$ by zero in this equation since in the original model~\eqref{KFP} we have conservation of mass. We get in the~$\varepsilon\to0$ limit: 
\begin{equation*} 
\partial_t \rho + \nabla_x \cdot (c_1(\kappa) \rho\, \Omega) = 0,
\end{equation*}
where (see~\eqref{flux}):
\begin{equation}
c_1(\kappa) = |j_{M_{\kappa\Omega}}|=\langle \cos \theta \rangle_{M_\kappa}=\tfrac{\int_0^\pi\cos \theta e^{\kappa\cos\theta} \, \sin^{n-2} \theta \, \mathrm d\theta}{\int_0^\pi e^{\kappa\cos\theta} \, \sin^{n-2} \theta \, \mathrm d\theta}.
\label{def_c1}
\end{equation}
This gives the second part of Theorem~\ref{theo_limit}, with the equation on~$\rho$ and the definition of~$c_1$.

To get the equation on~$\Omega$, we use the non-constant part of the collisional invariants.
By Proposition~\ref{structure_GCI} and using the result at equation~\eqref{Q_coll_invar}, we get that for all~$A$ such that~$A\cdot\Omega^\varepsilon=0$, we have
\begin{equation*}
\int_{\omega \in \mathbb{S}}Q(f^\varepsilon)h_{\kappa(\rho^\varepsilon)}(\omega\cdot\Omega^\varepsilon) \,A\cdot\omega\, \mathrm d\omega=0.
\end{equation*}
So we have that the vector~$X^\varepsilon=\int_{\omega \in \mathbb{S}}Q(f^\varepsilon)h_{\kappa(\rho^\varepsilon)}(\omega\cdot\Omega^\varepsilon)\, \omega\, \mathrm d\omega$ is orthogonal to~$A$ for all~$A$ orthogonal to~$\Omega$, that is to say that~$X^\varepsilon$ is in the direction of~$\Omega^\varepsilon$, which is equivalent to~$(\mathrm{Id} - \Omega^\varepsilon \otimes \Omega^\varepsilon)\, X^\varepsilon=0$.
Using~\eqref{KFP_rescaled}, we get that
\begin{equation*} 
X^\varepsilon = \varepsilon\int_{\omega \in \mathbb{S}} ( \partial_t f^\varepsilon + \omega \cdot \nabla_x f^\varepsilon+\alpha P(f^\varepsilon) +\widetilde{\alpha} \widetilde{P}(f^\varepsilon))\, h_{\kappa(\rho^\varepsilon)}(\omega \cdot \Omega^\varepsilon) \, \omega \, \mathrm d\omega\, +O(\varepsilon^2).
\end{equation*}
Dividing by~$\varepsilon$ and taking the limit~$\varepsilon\to0$, we get~$(\mathrm{Id} - \Omega \otimes \Omega)\, X=0$, where 
\begin{equation} 
X = \int_{\omega \in \mathbb{S}} ( \partial_t ( \rho M_{\kappa\Omega}) + \omega \cdot \nabla_x (\rho M_{\kappa\Omega})+\alpha P(\rho M_{\kappa\Omega}) +\widetilde{\alpha} \widetilde{P}(\rho M_{\kappa\Omega}))\, h_\kappa(\omega \cdot \Omega) \, \omega \, \mathrm d\omega\, .
\label{def_X}
\end{equation}

The main point is then to compute~$(\mathrm{Id} - \Omega \otimes \Omega)\, X$, in terms of~$\rho$,~$\Omega$ and their derivatives, using mainly the chain rule. The computation is similar to~\cite{degond2008continuum} for some terms, but additional work is required for the terms coming from the nonlinearity of~$M_{\kappa\Omega}$ in~$\rho$ and the operators~$P$ and~$\widetilde{P}$. 
We give the result of the computations under the form of a proposition: 
\begin{prop}$(\mathrm{Id} - \Omega \otimes \Omega)\, X=0$, where~$X$ is given in~\eqref{def_X}, is equivalent to 
\label{prop_dtOmega}
\begin{equation*}
\rho \, \left( \partial_t \Omega + c_2 (\Omega \cdot \nabla_x) \Omega \right) + \lambda \, (\mathrm{Id} - \Omega \otimes \Omega) \nabla_x \rho = 0,
\end{equation*}
where
\begin{align}
c_2&=\widetilde c_1 - \alpha \,d\,(n\, \widetilde c_1+ \kappa \,\langle \cos^2\theta\rangle_{\widetilde M_\kappa})\, ,\text{ with} \quad \widetilde c_1=\langle \cos\theta \rangle_{\widetilde M_\kappa}\, ,\label{def_c2}\\
\lambda&=\tfrac1{\kappa}+\rho\,\tfrac{\dot{\kappa}}{\kappa}\,[ \,\widetilde c_1 -c_1+\widetilde{\alpha}\,d\,(\kappa \,\langle \sin^2\theta\rangle_{\widetilde M_\kappa}-n\, \widetilde c_1) \,] + \tfrac12\,\widetilde{\alpha}\,\rho\,\dot{d}\,(n-1+\kappa \,\widetilde c_1)\, ,\label{def_lambda}
\end{align}
with the notation
\begin{equation*} 
\langle \gamma(\cos \theta) \rangle_{\widetilde M_\kappa}=
 \frac{\int_0^\pi \gamma(\cos \theta) h_\kappa(\cos\theta) e^{\kappa\cos\theta} \, 
\sin^{n} \theta \, \mathrm d\theta}{\int_0^\pi h_\kappa(\cos\theta)e^{\kappa\cos\theta} \, \sin^{n} \theta \, \mathrm d\theta}. 
\end{equation*}
\end{prop}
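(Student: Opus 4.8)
The plan is to compute $(\mathrm{Id}-\Omega\otimes\Omega)\,X$ by expanding the four groups of terms in \eqref{def_X} at the equilibrium $f^0=\rho M_{\kappa\Omega}$ and integrating each against the non-constant collisional invariant $h_\kappa(\omega\cdot\Omega)\,\omega$. The basic tool is the chain rule for $M_{\kappa\Omega}(\omega)=e^{\kappa\,\omega\cdot\Omega}/Z(\kappa)$, where $\kappa=\kappa(\rho)$ and $Z(\kappa)=\int_\mathbb{S}e^{\kappa\,\upsilon\cdot\Omega}\,\mathrm d\upsilon$ does not depend on $\Omega$. Writing $D$ for $\partial_t$ or a component of $\nabla_x$ and using $Z'(\kappa)/Z(\kappa)=c_1$ from \eqref{def_c1}, one finds
\begin{equation*}
D(\rho M_{\kappa\Omega})=M_{\kappa\Omega}\left[\,D\rho+\rho\kappa\,(\omega\cdot D\Omega)+\rho\,\dot\kappa\,D\rho\,(\omega\cdot\Omega-c_1)\,\right].
\end{equation*}
I would use this to expand $\partial_t(\rho M_{\kappa\Omega})$ and $\omega\cdot\nabla_x(\rho M_{\kappa\Omega})$; the new feature compared with~\cite{degond2008continuum} is the last term, produced by the dependence of $\kappa$ on $\rho$.

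For the angular integrals I would decompose $\omega=\cos\theta\,\Omega+\sin\theta\,v$ with $v\in\mathbb{S}_{n-2}$. Two facts drive every reduction: integration over $v$ annihilates any odd power of $v$, and $\int_{\mathbb{S}_{n-2}}(v\cdot A)(v\cdot B)\,\mathrm dv$ is proportional to $A\cdot B$. Hence, after the projection $\mathrm{Id}-\Omega\otimes\Omega$, each surviving integral reduces to a one-dimensional integral in $\theta$ carrying an extra factor $\sin^2\theta$ (from the two transverse factors), which is exactly why the weight $h_\kappa e^{\kappa\cos\theta}\sin^n\theta$ of the bracket $\langle\,\cdot\,\rangle_{\widetilde M_\kappa}$ appears. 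The coefficient of $\partial_t\Omega$ is then $\rho$ times a single positive factor (namely $\kappa$ times the common normalizer of these brackets); dividing the whole identity by that factor leaves $\rho$ in front of $\partial_t\Omega$ and turns every remaining coefficient into a pure bracket, so that $c_2$ and $\lambda$ emerge as ratios. The convection term alone already yields the base values: its $\nabla_x\Omega$-part gives $\widetilde c_1\,(\Omega\cdot\nabla_x)\Omega$ (the cubic $v$-moment vanishing by parity), while its $\nabla_x\rho$-part gives the leading $\tfrac1\kappa$ in $\lambda$ together with the correction $\tfrac{\rho\dot\kappa}{\kappa}(\widetilde c_1-c_1)$ coming from the $(\omega\cdot\Omega-c_1)$ factor above.

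The anisotropy and remaining density-derivative terms come from $\alpha P$ and $\widetilde\alpha\widetilde P$, and they split cleanly: since $P$ carries $\nabla_x\Omega$ it only contributes to the coefficient of $(\Omega\cdot\nabla_x)\Omega$ (hence to $c_2$), while $\widetilde P$ carries $\nabla_x\rho$ and only contributes to $\lambda$. Because $P$ and $\widetilde P$ are $\omega$-divergences, I would integrate by parts on $\mathbb{S}$, transferring $\nabla_\omega$ onto the invariant via
\begin{equation*}
\nabla_\omega\!\left(h_\kappa(\omega\cdot\Omega)\,\omega\right)=h_\kappa'\,\big((\mathrm{Id}-\omega\otimes\omega)\Omega\big)\otimes\omega+h_\kappa\,(\mathrm{Id}-\omega\otimes\omega).
\end{equation*}
Contracting this against the tangential fields of $P$ and $\widetilde P$ and evaluating the resulting brackets gives the correction $-\alpha\,d\,(n\,\widetilde c_1+\kappa\,\langle\cos^2\theta\rangle_{\widetilde M_\kappa})$ to $c_2$ (with $\nu=\kappa d$ producing the factor $d$ after the division), and the $\widetilde\alpha$-terms of $\lambda$; the split of $\widetilde P$ into a $\dot\nu$-part and a $\dot d$-part, together with the relation $\dot\nu=d\,\dot\kappa+\kappa\,\dot d$, is what regroups these into the $\dot\kappa$-bracket $\widetilde\alpha\,d\,(\kappa\,\langle\sin^2\theta\rangle_{\widetilde M_\kappa}-n\,\widetilde c_1)$ and the separate $\tfrac12\,\widetilde\alpha\,\rho\,\dot d\,(n-1+\kappa\,\widetilde c_1)$ term.

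The step I expect to be the real obstacle is the tensor bookkeeping in this last stage: one must verify that every anisotropy and density-derivative contribution collapses onto exactly the three structures $\partial_t\Omega$, $(\Omega\cdot\nabla_x)\Omega$ and $(\mathrm{Id}-\Omega\otimes\Omega)\nabla_x\rho$, with no spurious $(\nabla_x\cdot\Omega)$ or transposed-gradient term surviving. This rests on the constraint $\Omega\cdot\partial_{x_i}\Omega=0$ (so that $\Omega_j\,\partial_{x_i}\Omega_j=0$) combined with the vanishing of odd $v$-moments, which together eliminate the unwanted structures and force the third-order moments to reduce to the single contraction $(\Omega\cdot\nabla_x)\Omega$. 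Once all surviving brackets are identified, collecting the coefficients of the three structures and performing the normalization described above yields~\eqref{def_c2}--\eqref{def_lambda}.
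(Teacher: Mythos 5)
Your proposal is correct and follows essentially the same route as the paper's Appendix proof: the same chain rule on $\rho M_{\kappa\Omega}$ producing the extra $(\cos\theta-c_1)\rho\dot{\kappa}$ term, the same decomposition $\omega=\cos\theta\,\Omega+\sin\theta\,v$ with parity of $v$-moments killing the $\nabla_x\cdot\Omega$ and cubic terms, the same identification of the brackets $\langle\cdot\rangle_{\widetilde M_\kappa}$, and the same final normalization by $\kappa\langle\sin^2\theta\,h_\kappa\rangle_{M_\kappa}$. The only (immaterial) difference is that you transfer $\nabla_\omega$ onto the invariant by integration by parts, whereas the paper expands $P(\rho M_{\kappa\Omega})$ and $\widetilde P(\rho M_{\kappa\Omega})$ directly via the divergence identities $\nabla_\omega\cdot((\mathrm{Id}-\omega\otimes\omega)V)=-(n-1)\,\omega\cdot V$ and $\nabla_\omega\cdot((\mathrm{Id}-\omega\otimes\omega)A\omega)=A:(\mathrm{Id}-n\,\omega\otimes\omega)$ before taking moments; these are dual computations yielding the same $\gamma_2,\gamma_3$ and hence~\eqref{def_c2}--\eqref{def_lambda}.
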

This proposition is exactly the last part of Theorem~\ref{theo_limit}, with a precise definition for coefficients~$c_2$ and~$\lambda$, and this ends the derivation of the continuum model~\eqref{mass_eq}-\eqref{Omega_eq}.

The computations to get this result are given in Appendix~\ref{annex_dtOmega}, the idea is to compute~$(\mathrm{Id} - \Omega \otimes \Omega)\, X$ using the chain rule and the change of variable~$\omega\rightsquigarrow(\theta,v)$ where~$\omega=\cos\theta\,\Omega+\sin\theta\,v$, with~$v$ orthogonal to~$\Omega$, which simplifies a lot of terms.

\begin{remark}
The computations have also been done in the case where~$\nu$ depends on~$\omega\cdot\Omega$ (and not on~$\rho$) and where~$d$ is a constant. We get the same results, except that the constants are given (with analogous definitions) by:
\begin{align}
c_1 =& \left\langle \cos \theta\right\rangle_{M_{\widehat{\kappa}}} \label{c_1_3d}\\
\begin{split}
c_2 =& \langle \cos \theta\rangle_{\widetilde M_{\widehat{\kappa}}}-\alpha \langle\nu\cos^2\theta-\nu'\cos\theta\sin^2\theta\rangle_{\widetilde M_{\widehat{\kappa}}} \\ 
& - \, \alpha d \left\langle n \cos\theta + \tfrac{\nu'}{\nu} ((n+2)\cos^2\theta-1) - \tfrac{\nu''}{\nu}\cos\theta\sin^2 \theta
 \right\rangle_{\widetilde M_{\widehat{\kappa}}},
\label{c2_3d}
\end{split}\\
\lambda =& d\left\langle \tfrac{1}{\nu} \right\rangle_{\widetilde M_{\widehat{\kappa}}}\label{lambda_3d},
\end{align}
where here we use the notation
\begin{equation*} 
\langle \gamma(\cos \theta) \rangle_{\widetilde M_{\widehat{\kappa}}}=
 \frac{\int_0^\pi \gamma(\cos \theta) \nu(\cos\theta) h_{\widehat{\kappa}}(\cos\theta) e^{\widehat{\kappa}(\cos\theta)} \, 
\sin^{n} \theta \, \mathrm d\theta}{\int_0^\pi \nu(\cos\theta) h_{\widehat{\kappa}}(\cos\theta) e^{\widehat{\kappa}(\cos\theta)} \, \sin^{n} \theta \, \mathrm d\theta}. 
\end{equation*}

Since~$\nu$ is supposed to be positive, the constant~$\lambda$ is positive, and we will see in the next section that its possible change of sign with the dependence on~$\rho$ is important. This is why we focus on the dependence on~$\rho$ and not in~$\omega\cdot\Omega$ in this article.

\end{remark}

\section{Properties of the macroscopic model}
\label{properties}
\subsection{Hyperbolicity}
\label{hyperbolicity}

We recall here the macroscopic model~\eqref{mass_eq}-\eqref{Omega_eq}:
\begin{gather*}
\partial_t \rho + \nabla_x \cdot (c_1(\rho) \rho \Omega) = 0,\\
\rho \, \left( \partial_t \Omega + c_2(\rho) (\Omega \cdot \nabla_x) \Omega \right) + \lambda(\rho) \, (\mathrm{Id} - \Omega \otimes \Omega) \nabla_x \rho = 0,
\end{gather*}
where the functions~$c_1$,~$c_2$, and~$\lambda$ are given by~\eqref{def_c1} and~\eqref{def_c2}-\eqref{def_lambda}.
A first remark is that it is not possible to do another scaling to get rid of~$c_1$, like in~\cite{degond2008continuum}, because~$c_1$ depends on~$\rho$.

The main result about this model is that if~$d$ or~$\nu$ depends on~$\rho$, the coefficient~$\lambda$ can become negative in some regions of the state space, and in that case the system loses hyperbolicity. Let us first discuss here the interest and the problems due to the non-hyperbolicity.

The first thing to remark is that the model is not always well-posed. Indeed, in general, we cannot ensure that a solution will stay in the region of hyperbolicity for all time, even with smooth initial conditions in the hyperbolic region (actually, even with hyperbolicity everywhere, dealing with the discontinuities is a challenging issue, see~\cite{motsch2010numerical}).

The property of hyperbolicity is linked with the fact that perturbations propagate with finite speed. Here the presence of a region of non-hyperbolicity means that we could have propagation with infinite speed across this region.
This leads to a second remark: it may be possible to construct non-classical shocks, using the crossing of a zone of non-hyperbolicity, see~\cite{lefloch2002hyperbolic}, and~\cite{keyfitz1993multiphase}.
The interest is that we may construct some travelling waves, as observed in~\cite{chate2008collective}. Actually we did not manage to construct such solutions yet, this is part of our future work.

We should also construct models with formation of coherent structures from such non-hyperbolic models, if we could use stabilization with diffusion. But here the expansion at higher order in~$\varepsilon$ in the rescaled mean-field model~\eqref{KFP_rescaled}, in order to obtain diffusion terms in the macroscopic model becomes too much complicated to perform some study (see~\cite{degond2010diffusion} for the case of the original model of~\cite{degond2008continuum}).

We now turn to the description of the regions of non-hyperbolicity.
We consider a system satisfying~\eqref{mass_eq}-\eqref{Omega_eq}, but evolving only along one space direction~$e_z\in\mathbb{S}$ (the density~$\rho$ and the orientation~$\Omega$ depending only on~$z=e_z\cdot x$ and~$t$). We write then~$\Omega=\cos\theta\,e_z+\sin\theta\,v$,where~$v\in\mathbb{S}_{n-2}$ (identified to the set of unit vectors orthogonal to~$e_z$). In this framework, the system is equivalent to
\begin{gather} 
\partial_t \rho + \, \partial_z (\rho c_1(\rho) \cos \theta) = 0.
\label{rho_eq_z} \\
\rho[\partial_t (\cos\theta) + c_2(\rho) \cos \theta \, \partial_z (\cos\theta)] + \lambda \, (1-\cos^2 \theta) \, \partial_z \rho = 0.
\label{theta_eq_z} \\
\partial_t v + c_2(\rho) \cos \theta \, \partial_z v = 0, \text{ with }|v|=1 \text{ and } e_z\cdot v=0.
\label{v_eq_z}
\end{gather}
In the special case of dimension~$2$, the system reduces to~\eqref{rho_eq_z}-\eqref{theta_eq_z}, with~$\theta\in(-\pi,\pi)$ and~$\Omega=\cos\theta\,e_z+\sin\theta\,v_0$, where~$v_0$ is one of the two unit vectors orthogonal to~$e_z$.

The general definition of a quasilinear hyperbolic system~\cite{serre1996systemes} gives that the system~\eqref{mass_eq}-\eqref{Omega_eq} is hyperbolic if and only if this system~\eqref{rho_eq_z}-\eqref{v_eq_z} is hyperbolic for all unit vector~$e_z\in\mathbb{S}$. We give the result in the following statement:

\begin{thm} Hyperbolicity.
\label{theo_hyper}
\begin{itemize}
\item The system~\eqref{mass_eq}-\eqref{Omega_eq} is hyperbolic if and only if~$\lambda(\rho)> 0$.
\item The system~\eqref{rho_eq_z}-\eqref{v_eq_z} is hyperbolic if and only if
\begin{equation}
\lambda(\rho)> 0\text{ or }
\begin{cases}
|\tan\theta|<\frac{|c_2-c_3|}{2\sqrt{-\lambda c_1}},&\text{ if } \lambda<0 ,
\label{cond_hyp}\\
\theta\neq\frac{\pi}2\text{ and }c_2\neq c_3,&\text{ if }\lambda=0.
\end{cases}
\end{equation} 
where~$c_3(\rho)=\frac{\mathrm d}{\mathrm d\rho}(\rho c_1(\rho))=c_1(\rho)+\rho\dot{\kappa}\left(\langle \cos^2 \theta \rangle_{M_\kappa}-\langle \cos \theta \rangle_{M_\kappa}^2\right)$.
\end{itemize}
\end{thm}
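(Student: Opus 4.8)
The plan is to use the standard characterization of hyperbolicity for a quasilinear first‑order system: written in the form $\partial_t U + A(U)\,\partial_z U = 0$, the one–directional system~\eqref{rho_eq_z}–\eqref{v_eq_z} is hyperbolic at a state $U$ precisely when $A(U)$ has only real eigenvalues and is diagonalizable over $\mathbb{R}$. The first bullet will then follow from the second, since (as recalled just before the statement) the multidimensional system~\eqref{mass_eq}–\eqref{Omega_eq} is hyperbolic if and only if the reduced system is hyperbolic for every unit vector $e_z$, i.e.\ for every angle $\theta$ between $\Omega$ and $e_z$. First I would put the reduced system in quasilinear form for $U=(\rho,\theta,v)$. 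Expanding the flux in~\eqref{rho_eq_z} using $c_3=\frac{\mathrm d}{\mathrm d\rho}(\rho c_1)$ and dividing~\eqref{theta_eq_z} by $-\sin\theta$ (legitimate while $\sin\theta\neq0$, i.e.\ $\Omega\neq\pm e_z$), the unknowns split: the pair $(\rho,\theta)$ obeys a coupled $2\times2$ system governed by
\[
A_0=\begin{pmatrix} c_3\cos\theta & -\rho c_1\sin\theta\\ -\frac{\lambda\sin\theta}{\rho} & c_2\cos\theta\end{pmatrix},
\]
while the tangential perturbations of $v$ (an $(n-2)$–dimensional space because of the constraint $|v|=1$) solve the decoupled transport equation~\eqref{v_eq_z} with speed $c_2\cos\theta$. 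Since no derivative of $\rho$ or $\theta$ enters~\eqref{v_eq_z} and no derivative of $v$ enters the other equations, the full matrix is block‑diagonal, $A=\mathrm{diag}(A_0,\,c_2\cos\theta\,I_{n-2})$; hence $A$ is $\mathbb{R}$‑diagonalizable iff $A_0$ is, the $v$‑block being already diagonal with real eigenvalue $c_2\cos\theta$ (coincidences of eigenvalues between the two blocks cannot create Jordan blocks in a block‑diagonal matrix).

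The heart of the proof is the spectral analysis of $A_0$. Its trace is $(c_2+c_3)\cos\theta$ and its determinant is $c_2c_3\cos^2\theta-\lambda c_1\sin^2\theta$, so the discriminant of its characteristic polynomial is
\[
\Delta=(c_2-c_3)^2\cos^2\theta+4\lambda c_1\sin^2\theta .
\]
The key structural observation is that for $\sin\theta\neq0$ the off‑diagonal entry $-\rho c_1\sin\theta$ is nonzero (recall $c_1>0$), so $A_0$ can never be a scalar multiple of the identity; consequently the $(\rho,\theta)$‑block is hyperbolic if and only if $\Delta>0$ (real \emph{distinct} eigenvalues), whereas $\Delta=0$ forces a genuine Jordan block and $\Delta<0$ gives complex eigenvalues. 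I would then read off the three cases. If $\lambda>0$ then $\Delta>0$ for every $\theta$ with $\sin\theta\neq0$; if $\lambda<0$, dividing $\Delta>0$ by $\cos^2\theta$ yields exactly $|\tan\theta|<\frac{|c_2-c_3|}{2\sqrt{-\lambda c_1}}$; and if $\lambda=0$ then $\Delta=(c_2-c_3)^2\cos^2\theta\geqslant0$ is positive iff $\theta\neq\frac\pi2$ and $c_2\neq c_3$. This establishes the second bullet, i.e.\ condition~\eqref{cond_hyp}.

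For the first bullet I would combine these over all directions. When $\lambda>0$ the reduced system is hyperbolic for every $\theta$ with $\sin\theta\neq0$, and in the two aligned directions $\theta\in\{0,\pi\}$ the matrix $A_0$ is already diagonal, so the multidimensional system is hyperbolic. When $\lambda\leqslant0$ I would test the direction $e_z\perp\Omega$, that is $\theta=\frac\pi2$ (so $\sin\theta=1\neq0$): for $\lambda<0$ one gets $\Delta=4\lambda c_1<0$, and for $\lambda=0$ the exclusion $\theta\neq\frac\pi2$ fails, so in both cases there is a direction of non‑hyperbolicity. Hence the full system is hyperbolic iff $\lambda>0$.

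The step I expect to be the main obstacle is precisely the borderline $\Delta=0$, where a repeated real eigenvalue yields hyperbolicity only if $A_0$ is scalar; the clean resolution is the observation above that the $(1,2)$ entry $-\rho c_1\sin\theta$ stays nonzero for $\sin\theta\neq0$, which is what forces the \emph{strict} inequality in~\eqref{cond_hyp} and the exclusions $\theta\neq\frac\pi2$, $c_2\neq c_3$ when $\lambda=0$. The only delicate point left is the aligned configuration $\Omega=\pm e_z$, which is a coordinate singularity of the $(\rho,\theta,v)$ parametrization and must be treated in the original variables, where $A_0$ reduces to a diagonal matrix. Finally, the displayed formula $c_3=c_1+\rho\dot\kappa\bigl(\langle\cos^2\theta\rangle_{M_\kappa}-\langle\cos\theta\rangle_{M_\kappa}^2\bigr)$ is routine: it is the chain rule applied to $c_1(\kappa(\rho))$ together with the variance identity $\frac{\mathrm d c_1}{\mathrm d\kappa}=\langle\cos^2\theta\rangle_{M_\kappa}-\langle\cos\theta\rangle_{M_\kappa}^2$ for the Gibbs average~\eqref{def_c1}.
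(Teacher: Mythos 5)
Your proposal is, in substance, the paper's own proof: the same quasilinear reduction, the same block structure with the $v$-equations decoupled, the same discriminant $\Delta=(c_2-c_3)^2\cos^2\theta+4\lambda c_1\sin^2\theta$, the same key observation that the $2\times2$ block is never scalar so that $\Delta=0$ forces a Jordan block (this is exactly how the paper obtains the strict inequality and the exclusions in~\eqref{cond_hyp}), and the same test direction $e_z\perp\Omega$ for the first bullet. The one divergence is your choice of unknown: the paper keeps $\cos\theta$, so that~\eqref{rho_eq_z}--\eqref{v_eq_z} is quasilinear in $(\rho,\cos\theta,v)$ with off-diagonal entries $c_1\rho$ and $\frac{\lambda}{\rho}\sin^2\theta$, whereas you divide~\eqref{theta_eq_z} by $-\sin\theta$ and work with $\theta$, getting off-diagonal entries $-c_1\rho\sin\theta$ and $-\frac{\lambda}{\rho}\sin\theta$. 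For $\sin\theta\neq0$ the two matrices are conjugate by $\mathrm{diag}(1,-\sin\theta,\mathrm{Id}_{n-2})$, so on that set your argument is equivalent to the paper's, and your verification of the formula for $c_3$ is fine.

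The divergence bites precisely at the aligned states $\theta\in\{0,\pi\}$, and there your proof does not establish the theorem as stated. In the paper's variables the relevant block at $\sin\theta=0$ is $\bigl(\begin{smallmatrix} c_3\cos\theta & c_1\rho \\ 0 & c_2\cos\theta \end{smallmatrix}\bigr)$ with $c_1\rho\neq0$: upper triangular and non-scalar, hence diagonalizable if and only if $c_2\neq c_3$. This is exactly what~\eqref{cond_hyp} asserts at $\theta\in\{0,\pi\}$ for $\lambda\leqslant0$ (for $\lambda<0$, $|\tan\theta|=0<\frac{|c_2-c_3|}{2\sqrt{-\lambda c_1}}$ holds iff $c_2\neq c_3$), and the paper's repeated-eigenvalue argument covers these states with no separate case. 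Your treatment instead asserts that at $\theta\in\{0,\pi\}$ ``$A_0$ is already diagonal, hence hyperbolic'' unconditionally, which \emph{contradicts}~\eqref{cond_hyp} when $\lambda\leqslant0$ and $c_2=c_3$; note also that ``treating the original variables'' points back to the $(\rho,\cos\theta,v)$ form of~\eqref{rho_eq_z}--\eqref{v_eq_z}, where the matrix at the poles is triangular, not diagonal. The underlying reason is that both parametrizations of $\Omega$ degenerate at $\Omega=\pm e_z$ (the Jacobian of $\cos\theta\mapsto\theta$ vanishes there), so the borderline verdict at the poles is chart-sensitive: your intrinsic computation in $(\rho,\Omega)$, where $(\mathrm{Id}-\Omega\otimes\Omega)e_z=0$ kills the coupling, is a defensible observation (and in fact the paper's own claim ``hyperbolic if $\lambda>0$'', read in its chart, has the mirror-image wrinkle at a pole with $c_2=c_3$), but the theorem being proved is about the system in its given variables, and your version returns the opposite answer on that set. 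The clean fix, matching the statement, is simply not to divide by $\sin\theta$: keep $\cos\theta$ as the unknown, as the paper does, so the poles are subsumed in the single non-scalar-block argument. The first bullet is unaffected, since your test direction has $\cos\theta=0$.
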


\begin{proof}
The system~\eqref{rho_eq_z}-\eqref{v_eq_z} can be written as the following first order quasilinear system of partial differential equations
\begin{equation*} 
\left( \begin{array}{c} \partial_t \rho \\ \partial_t\cos\theta \\ \partial_t v \end{array} \right) 
+ A(\rho, \cos\theta, v) \left( \begin{array}{c} \partial_z \rho \\ \partial_z \cos\theta \\ \partial_z v \end{array} \right) = 0,
\end{equation*}
with 
\begin{equation*} 
A(\rho, \cos\theta, v) = \left( \begin{array}{ccc} 
c_3(\rho)\cos \theta & c_1(\rho)\rho & 0\quad \cdots \quad 0 \\ 
\frac{\lambda}{\rho} \sin^2 \theta & c_2(\rho) \cos \theta & 0\quad \cdots \quad 0 \\
0 & 0 & \\
\vdots & \vdots & c_2(\rho) \cos \theta\, \mathrm{Id}_{n-2}\\
0 & 0 & \end{array} \right), 
\end{equation*}
and this system is hyperbolic in case~$\lambda>0$. The eigenvalues are~$\gamma_\pm$ and~$\gamma_0$ (of multiplicity~$n-2$), given by
\begin{equation*} 
\gamma_0 = c_2\cos \theta,\quad \gamma_\pm=\frac{1}{2} \left[(c_2+c_3)\cos\theta\pm\left( (c_2-c_3)^2 \cos^2 \theta+ 4\lambda c_1\sin^2 \theta\right)^{1/2}\right], 
\end{equation*}
Now if~$\lambda<0$, asking~$\gamma_\pm$ to be real and distinct is exactly equivalent to the equation~\eqref{cond_hyp}. In this case the matrix~$A$ is diagonalizable. If~$\gamma_+=\gamma_-$, then~$A$ is diagonalizable only if its top left corner~$2\times2$ submatrix is scalar (only one eigenvalue), which is not the case since~$c_1(\rho)\rho>0$. For the same reason, if~$\lambda=0$, we immediately get that~$A$ is diagonalizable if and only if the first two diagonal coefficients~$c_2(\rho) \cos \theta$ and~$c_3(\rho) \cos \theta$ are different, which ends the proof of the second statement.

Now we turn to the general case. If~$\lambda>0$, the system~\eqref{rho_eq_z}-\eqref{v_eq_z} is hyperbolic for all unit vector~$e_z\in\mathbb{S}$, which gives that the system~\eqref{mass_eq}-\eqref{Omega_eq} is hyperbolic.
Suppose now that the system~\eqref{mass_eq}-\eqref{Omega_eq} is hyperbolic in an open region of the state space, with~$\lambda\leqslant0$ at some point~$(\rho,\Omega)$. 
Since~$n\geqslant2$ we can find~$e_z$ such that~$e_z\cdot\Omega=0$. 
Then we have~$\cos\theta=0$, which gives, by the condition~\eqref{cond_hyp} that~$(\rho,\Omega)$ is in the region of non-hyperbolicity of the problem~\eqref{rho_eq_z}-\eqref{v_eq_z}, and this is a contradiction. 
\end{proof}

Actually, the positive functions~$d$ and~$\nu$ being arbitrary, it is possible to have a lot of qualitatively different shapes for the region of non-hyperbolicity of the reduced system~\eqref{rho_eq_z}-\eqref{v_eq_z}. We give here some examples in the case of dimension~$2$, where the coefficients are easy to compute numerically (using the explicit formulation~\eqref{def_g2D} for~$g_\kappa$).
\begin{figure}[h!]
\begin{center}
\input{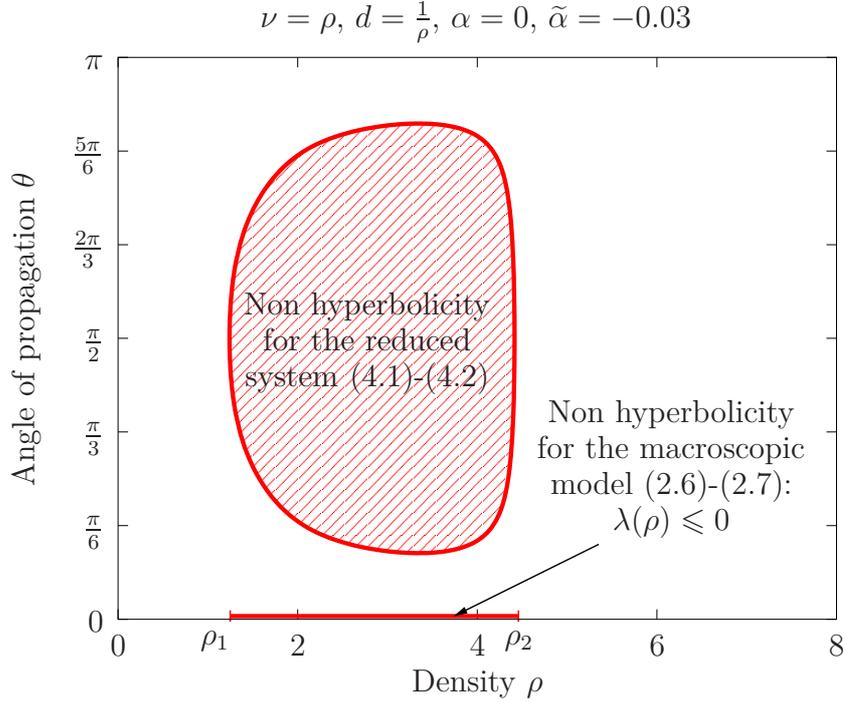}
\caption{The region of hyperbolicity can be of the form~$(0,\rho_1)\cup(\rho_2,+\infty)$}
\end{center}
\end{figure}

\begin{figure}[h!]
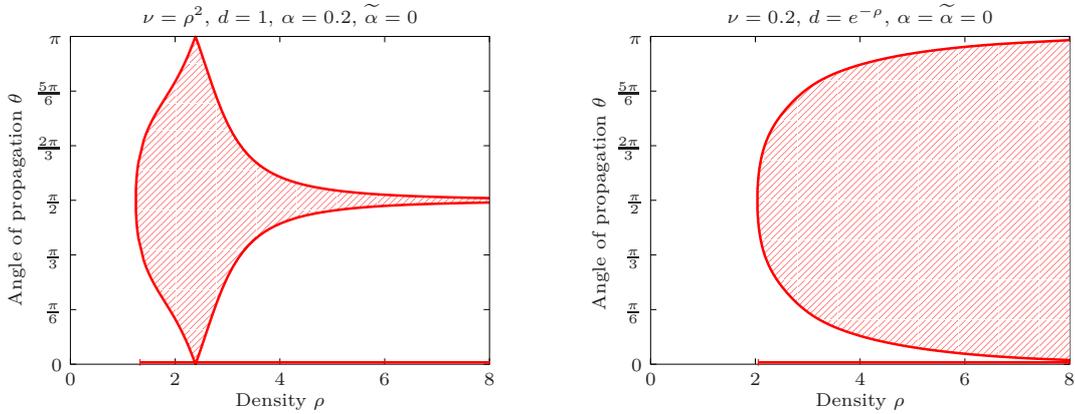

\begin{minipage}[h]{0.98\linewidth}
\begin{minipage}{0.48\linewidth}
\begin{center}
\input{zone2.pstex_t}
\end{center}
\end{minipage}\hfill
\begin{minipage}{0.48\linewidth}
\begin{center}
\input{zone3.pstex_t}
\end{center}
\end{minipage}
\end{minipage}
\caption{Other shapes for the zone of non-hyperbolicity}
\end{figure}
\newpage

\subsection{Influence of the anisotropy}

On the final macroscopic model, the influence of the anisotropy in the observation kernels is only visible through the values of the speed~$c_2$, and the coefficient~$\lambda$. 

We remark that the parameter~$\alpha$, which is related to the kernel~$K$ used to define the local orientation~$\bar{\omega}$, only appears in the expression~\eqref{def_c2} of the velocity~$c_2$, making it smaller when~$\alpha$ is a large positive constant. 
The difference between~$c_1$ and~$c_2$, which is one of the differences between the macroscopic model~\eqref{mass_eq}-\eqref{Omega_eq} and the classical Euler system, is then enhanced when~$\alpha$ is a large positive constant. 
This can be interpreted as follows: if the observation kernel is strongly directed forward, then the information on the orientation moves rapidly backward. 
This could be compared to results on modelling of traffic flows, where the speed of a congested phase depends on the distance of anticipation of the drivers (see~\cite{aw2000resurrection,berthelin2008model,daganzo1995requiem}).

The parameter~$\widetilde{\alpha}$, related to the kernel~$\widetilde K$ which is used to define the local density~$\bar{\rho}$, only appears in~$\lambda$, and obviously has an influence only if the relaxation frequency~$\nu$ or the noise intensity~$d$ depends on this density (in the expression~\eqref{def_lambda}, we must have~$\dot{\kappa}\neq0$ or~$\dot d\neq0$).
So the anisotropy of the kernel~$\widetilde K$ can have an impact on the region of non-hyperbolicity for the system~\eqref{mass_eq}-\eqref{Omega_eq}. 
The anisotropy of the kernel~$K$ does not play a role in this global hyperbolicity, but, through the condition~\eqref{cond_hyp}, it can change the shape of the region of non-hyperbolicity for the one-dimensional reduction~\eqref{rho_eq_z}-\eqref{v_eq_z}. 

Since the expression~\eqref{def_lambda} involves a lot of terms which can take different signs, it is not easy to directly quantify the influence of the parameter~$\widetilde{\alpha}$ on the coefficient~$\lambda$, as it was the case for~$\alpha$ and~$c_2$.
In the next section, we perform an asymptotic study of the coefficients of the macroscopic model~\eqref{mass_eq}-\eqref{Omega_eq}, as the concentration parameter~$\kappa$ tends to infinity (in the case of strong alignment, or low noise) or to zero (when the noise is high, or the frequency of alignment).

\section{Asymptotic study of the coefficients}

\label{asymptotics}
We want to obtain an asymptotic expansion of~$c_1$,~$c_2$ and~$\lambda$ given by the expressions~\eqref{def_c1},~\eqref{def_c2},~\eqref{def_lambda} as the parameter~$\kappa$ tends to infinity or to zero.

Since we do not know explicitly the dependence on~$\rho$ for the coefficients~$\nu$ and~$d$, the only quantities we can study in the expressions of this coefficients are the averages~$c_1=\langle \cos \theta \rangle_{M_{\kappa}}$,~$\widetilde{c_1}=\langle \cos\theta \rangle_{\widetilde M_{\kappa}}$, and~$\langle\cos^2\theta\rangle_{\widetilde M_{\kappa}}$ (since the last average is~$\langle\sin^2\theta\rangle_{\widetilde M_{\kappa}}=1-\langle\cos^2\theta\rangle_{\widetilde M_{\kappa}}$).
The purpose of this section is to give a method to get the Taylor expansion up to any order in~$\kappa$ or~$\frac1{\kappa}$ of the following averages:
\begin{align*}
\langle f(\theta) \rangle_{M_{\kappa}}&= \frac{\int_0^\pi  f(\theta) e^{\kappa\cos\theta} \, \sin^{n-2} \theta \, \mathrm d\theta}{\int_0^\pi e^{\kappa\cos\theta} \, \sin^{n-2} \theta \, \mathrm d\theta},\\
\langle f(\theta)  \rangle_{\widetilde M_{\kappa}}&= \frac{\int_0^\pi  f(\theta) h_\kappa(\cos\theta) e^{\kappa\cos\theta} \, \sin^{n} \theta \, \mathrm d\theta}{\int_0^\pi h_\kappa(\cos\theta) e^{\kappa\cos\theta} \, \sin^{n} \theta \, \mathrm d\theta},
\end{align*}
where~$h_\kappa$ is the function providing the generalized collisional invariants (see Proposition~\ref{structure_GCI}).
We first give the method to obtain the expansion of the first type of average, and we apply it to get an expansion of~$c_1$ in~$\kappa$ and~$\frac1{\kappa}$.
\subsection{Asymptotics of~$\langle f(\theta) \rangle_{M_{\kappa}}$}
The first expansion, when~$\kappa\to0$, is just a basic Taylor expansion. For a function~$f$ such that~$f\sin^{n-2}\theta\in L^1(0,\pi)$, we define
\begin{equation*}
b_p=\frac1{p!}\int_0^\pi  f(\theta) \cos^p\theta \, \sin^{n-2} \theta \, \mathrm d\theta \quad \text{and}\quad  a_p=\frac1{p!}\int_0^\pi \cos^p\theta \, \sin^{n-2} \theta \, \mathrm d\theta.
\end{equation*}
Then we get
\begin{equation}
\langle f(\theta) \rangle_{M_{\kappa}}= \frac{\sum_{p=0}^Nb_p\kappa^p+O(\kappa^{N+1})}{\sum_{p=0}^Na_p\kappa^p+O(\kappa^{N+1})}.
\label{exp_f_zero}
\end{equation}

If we take~$f(\theta)=\cos\theta$, we have~$b_p=(p+1)a_{p+1}$ and integrating by parts, we get the following induction relation:~$(p+2)(p+n)a_{p+2}=a_p$. Since~$a_1=0$, the odd terms vanish and we get
\begin{equation}
c_1=\langle \cos \theta \rangle_{M_{\kappa}}= \frac{\frac1n \kappa+\frac1{2n(n+2)} \kappa^3+O(\kappa^5)}{1+\frac1{2n} \kappa^2+O(\kappa^4)}=\tfrac1n \kappa-\tfrac1{n^2(n+2)} \kappa^3+O(\kappa^5).\label{exp_c1_zero}
\end{equation}

We now turn to the expansion of~$c_1$ when~$\kappa\to\infty$.
We will use the following lemma, the proof of which is elementary, see~\cite{bender1999advanced} for examples and variants:
\begin{lemma} (Watson’s Lemma)
\label{watson}

Let~$p$ be a function in~$L^1(0,T)$, with~$T>0$, and let~$I_\kappa(p)=\int_0^Tp(t)e^{-\kappa t}\mathrm d t$. Suppose that, in the neighborhood of~$0$, we have~$p(t)=t^\beta\left(\sum_{i=0}^{N-1}a_it^i+O(t^N)\right)$, with~$\beta>-1$.

Then~$I_\kappa(p)=\kappa^{-\beta-1}\left(\sum_{i=0}^{N-1}a_i\Gamma(\beta+i+1)\kappa^{-i}+O(\kappa^{-N})\right)$ as~$\kappa\to\infty$.
\end{lemma}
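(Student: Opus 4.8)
The plan is to split the integral at a small fixed $\delta\in(0,T)$ and treat the two pieces separately, since all the asymptotic content is concentrated near $t=0$. On the outer piece $[\delta,T]$ the weight satisfies $e^{-\kappa t}\leqslant e^{-\kappa\delta}$, so because $p\in L^1$ we get $\left|\int_\delta^T p(t)e^{-\kappa t}\,\mathrm d t\right|\leqslant e^{-\kappa\delta}\|p\|_{L^1}$, which decays faster than any power of $1/\kappa$ and is therefore harmless. The substantive part is the inner integral over $[0,\delta]$, where the local expansion of $p$ is available.

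The core computation is the elementary Gamma integral: for any exponent $s>-1$, the substitution $u=\kappa t$ gives
\[
\int_0^\infty t^s e^{-\kappa t}\,\mathrm d t = \Gamma(s+1)\,\kappa^{-s-1}.
\]
I would apply this to each monomial $a_i t^{\beta+i}$ from the expansion of $p$. Replacing $\int_0^\delta$ by $\int_0^\infty$ costs only the tail $\int_\delta^\infty t^{\beta+i}e^{-\kappa t}\,\mathrm d t$, which is again exponentially small in $\kappa$; this produces exactly the main terms $a_i\,\Gamma(\beta+i+1)\,\kappa^{-\beta-i-1}$. The hypothesis $\beta>-1$ is precisely what guarantees convergence of all these integrals at $t=0$, so there is no difficulty there.

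It remains to control the remainder. By assumption there exist $C>0$ and $\delta>0$ such that on $[0,\delta]$ one has $\left|p(t)-\sum_{i=0}^{N-1}a_i t^{\beta+i}\right|\leqslant C\,t^{\beta+N}$. Integrating this bound against $e^{-\kappa t}$ and once more extending the range to $[0,\infty)$ yields a contribution bounded by $C\,\Gamma(\beta+N+1)\,\kappa^{-\beta-N-1}=O(\kappa^{-\beta-N-1})$. Collecting the main terms, the remainder, and the two exponentially small pieces, then factoring out the overall $\kappa^{-\beta-1}$, gives precisely the claimed expansion $I_\kappa(p)=\kappa^{-\beta-1}\bigl(\sum_{i=0}^{N-1}a_i\Gamma(\beta+i+1)\kappa^{-i}+O(\kappa^{-N})\bigr)$.

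The only point requiring genuine care is the order in which the parameters are handled: one must fix $\delta$ \emph{first}, so that the local expansion of $p$ holds on $[0,\delta]$ with a definite constant $C$, and only afterward observe that both the Gamma-integral tails and the $L^1$ contribution from $[\delta,T]$ are exponentially small in $\kappa$, hence subdominant to the genuine algebraic remainder $\kappa^{-\beta-N-1}$. With that ordering, the main obstacle is purely bookkeeping—verifying that every exponentially small term is correctly absorbed into the $O(\kappa^{-N})$ sitting inside the parentheses after the factor $\kappa^{-\beta-1}$ is extracted—rather than any analytic subtlety.
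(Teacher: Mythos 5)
Your proof is correct and is exactly the standard argument (split at a fixed $\delta$, reduce each monomial to the Gamma integral $\int_0^\infty t^s e^{-\kappa t}\,\mathrm dt=\Gamma(s+1)\kappa^{-s-1}$, and absorb the exponentially small tails and the $L^1$ outer piece into the algebraic remainder), including the essential point of fixing $\delta$ before letting $\kappa\to\infty$. The paper itself omits the proof, calling it elementary and referring to Bender--Orszag, and your argument coincides with that standard reference proof, so there is nothing to add.
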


We use this lemma, after the change of variable~$t=1-\cos\theta$, in the integrals of the form~$[f(\theta)]_\kappa=\int_0^\pi  f(\theta) e^{\kappa\cos\theta} \, \sin^{n-2} \theta \, \mathrm d\theta$. We get
\begin{equation*}
[f(\theta)]_\kappa=e^\kappa\int_0^2 f(arccos(1-t))e^{-\kappa t}(2t-t^2)^{\frac{n-3}2} \, \mathrm d t.
\end{equation*}
So if we can expand the function~$t\mapsto(2t-t^2)^{\frac{n-3}2}f(arccos(1-t))$ in the neighborhood of~$0$, we can apply directly Watson’s Lemma to get an expansion of~$[f(\theta)]_\kappa$, and then to~$[1]_\kappa$, which gives finally the expansion of~$\langle f(\theta)\rangle_{M_\kappa}$. 

We take here the example of the function~$f(\theta)=1-\cos \theta$, so~$f(arccos(1-t))=t$. We want an expansion with two terms (since we have~$c_1=1-\langle f(\theta)\rangle_{M_\kappa}$ we will actually get three terms for~$c_1$). 
We have 
\begin{equation*}
(2t-t^2)^{\frac{n-3}2}=2^{\frac{n-3}2}t^{\frac{n-3}2}(1-\tfrac12 t)^{\frac{n-3}2}=2^{\frac{n-3}2}t^{\frac{n-3}2}(1-\tfrac{n-3}4t +O(t^2))\end{equation*}
Applying directly Watson’s Lemma to this function and to the same function multiplied by~$t$, we get
\begin{align}
[1]_\kappa&=\frac{2^{\frac{n-3}2}e^\kappa}{\kappa^{\frac{n-1}2}}\left(\Gamma(\tfrac{n-1}2)-\tfrac{n-3}4\Gamma(\tfrac{n+1}2)\frac1{\kappa}+O(\kappa^{-2})\right)\label{int_exp}\\
[f(\theta)]_\kappa&=\frac{2^{\frac{n-3}2}e^\kappa}{\kappa^{\frac{n+1}2}}\left(\Gamma(\tfrac{n+1}2)-\tfrac{n-3}4\Gamma(\tfrac{n+3}2)\frac1{\kappa}+O(\kappa^{-2})\right)\nonumber.
\end{align}

Since~$\Gamma(p+1)=p\Gamma(p)$, we finally get
\begin{align*}
\langle f(\theta)\rangle_{M_{\kappa}}=\frac{[f(\theta)]_\kappa}{[1]_\kappa}&=\frac{\Gamma(\tfrac{n+1}2)}{\kappa\Gamma(\tfrac{n-1}2)}\frac{1-\tfrac{n-3}4\tfrac{n+1}2\frac1{\kappa}}{1-\tfrac{n-3}4\tfrac{n-1}2\frac1{\kappa}}+O(\kappa^{-3})\\
&= \frac{n-1}{2\kappa} -\frac{(n-1)(n-3)}{8\kappa^2}+ O(\kappa^{-3}).
\end{align*}
In particular we get the expansion of~$c_1$ as~$\kappa\to\infty$:
\begin{equation}
c_1=1- \frac{n-1}{2\kappa} +\frac{(n-1)(n-3)}{8\kappa^2} + O(\kappa^{-3}).
\label{exp_c1_inf}
\end{equation}
Using this method we can easily get the following lemma, which will be useful in the next subsection.
\begin{lemma}
\label{lem_exp_f_inf}
Estimation of~$\langle f(\theta)\rangle_{M_\kappa}$.

Suppose that~$\theta\mapsto f(\theta) \, \sin^{n-2} \theta$ belongs to~$L^1(0,\pi)$, and that~$|f(\theta)|=O(\theta^{2\beta})$ in the neighborhood of~$0$, with~$\beta>-\frac{n-1}2$. 
Then~$\langle f(\theta)\rangle_{M_\kappa}=O(\kappa^{-\beta})$ as~$\kappa\to\infty$.
\end{lemma}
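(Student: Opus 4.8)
The plan is to recycle exactly the machinery already developed for~$c_1$. I would write the average as the ratio $\langle f(\theta)\rangle_{M_\kappa}=[f(\theta)]_\kappa/[1]_\kappa$, where $[g(\theta)]_\kappa=\int_0^\pi g(\theta)e^{\kappa\cos\theta}\sin^{n-2}\theta\,\mathrm d\theta$, perform the change of variable $t=1-\cos\theta$, and capture the $\kappa\to\infty$ behaviour of the numerator by isolating the contribution near $\theta=0$ (equivalently $t=0$), where the weight $e^{\kappa\cos\theta}$ concentrates. The denominator is already known from~\eqref{int_exp}, which gives $[1]_\kappa\sim 2^{\frac{n-3}2}\Gamma(\tfrac{n-1}2)\,e^\kappa\kappa^{-\frac{n-1}2}$; hence the whole problem reduces to showing the upper bound $[f(\theta)]_\kappa=O(e^\kappa\kappa^{-\beta-\frac{n-1}2})$.

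After the substitution one has $[f(\theta)]_\kappa=e^\kappa\int_0^2 f(\arccos(1-t))\,e^{-\kappa t}(2t-t^2)^{\frac{n-3}2}\,\mathrm dt$. Since $t=1-\cos\theta\sim\tfrac12\theta^2$ near $\theta=0$, the hypothesis $|f(\theta)|=O(\theta^{2\beta})$ translates into $|f(\arccos(1-t))|\leqslant C\,t^{\beta}$ on a fixed interval $(0,\delta)$ with $\delta<2$, while $(2t-t^2)^{\frac{n-3}2}\leqslant C'\,t^{\frac{n-3}2}$ there, the factor $(1-\tfrac t2)^{\frac{n-3}2}$ being bounded on $(0,\delta)$. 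I would then split the $t$-integral at $\delta$. On $(0,\delta)$ the integrand is dominated by $CC'\,t^{\beta+\frac{n-3}2}e^{-\kappa t}$; the exponent satisfies $\beta+\frac{n-3}2>-1$ precisely because $\beta>-\frac{n-1}2$, so $\int_0^\delta$ is bounded by $\int_0^\infty t^{\beta+\frac{n-3}2}e^{-\kappa t}\,\mathrm dt=\Gamma(\beta+\tfrac{n-1}2)\,\kappa^{-\beta-\frac{n-1}2}$. On $(\delta,2)$ I would use $e^{-\kappa t}\leqslant e^{-\kappa\delta}$ together with the identity $\int_\delta^2 |f(\arccos(1-t))|(2t-t^2)^{\frac{n-3}2}\,\mathrm dt=\int_{\theta_0}^\pi|f(\theta)|\sin^{n-2}\theta\,\mathrm d\theta<\infty$ (with $\theta_0=\arccos(1-\delta)$), which is finite by the $L^1$ assumption; this piece is thus $O(e^{-\kappa\delta})$, smaller than any power of $\kappa$.

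Adding the two contributions yields $[f(\theta)]_\kappa=O(e^\kappa\kappa^{-\beta-\frac{n-1}2})$, and dividing by the known leading order of $[1]_\kappa$ produces $\langle f(\theta)\rangle_{M_\kappa}=O(\kappa^{-\beta})$. As a sanity check, the choice $f(\theta)=1-\cos\theta\sim\tfrac12\theta^2$ corresponds to $\beta=1$ and the lemma predicts $O(\kappa^{-1})$, consistent with the expansion~\eqref{exp_c1_inf} of $c_1=1-\langle 1-\cos\theta\rangle_{M_\kappa}$.

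The one genuine subtlety—and the reason one cannot simply quote Watson's Lemma (Lemma~\ref{watson})—is that the hypothesis supplies only the one-sided bound $|f|=O(\theta^{2\beta})$, not a full asymptotic expansion of $f(\arccos(1-t))$ in powers of $t$. I would therefore not invoke the lemma verbatim but reprove the elementary estimate it relies upon, namely $\int_0^\delta t^{\gamma}e^{-\kappa t}\,\mathrm dt=O(\kappa^{-\gamma-1})$ for $\gamma>-1$, which is immediate by comparison with the Gamma integral. One must also check that the passage $\theta^{2\beta}\rightsquigarrow t^{\beta}$ is valid for both signs of $\beta$, but this is a routine consequence of $t\sim\tfrac12\theta^2$ and poses no real difficulty.
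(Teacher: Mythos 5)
Your proof is correct and follows exactly the route the paper intends: the paper states this lemma with only the remark that it follows ``using this method'' (the change of variable $t=1-\cos\theta$ and Watson-type estimates), and your argument---translating $|f(\theta)|=O(\theta^{2\beta})$ into $O(t^\beta)$, splitting the integral at $\delta$, bounding the near-origin piece by the Gamma integral $\Gamma(\beta+\tfrac{n-1}2)\kappa^{-\beta-\frac{n-1}2}$ (which is where $\beta>-\tfrac{n-1}2$ enters) and the tail by $O(e^{-\kappa\delta})$ via the $L^1$ hypothesis, then dividing by the known asymptotics~\eqref{int_exp} of $[1]_\kappa$---is precisely the routine verification the paper leaves implicit. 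Your caveat that the one-sided bound on $f$ requires the elementary estimate $\int_0^\delta t^\gamma e^{-\kappa t}\,\mathrm dt=O(\kappa^{-\gamma-1})$ rather than Lemma~\ref{watson} quoted verbatim is a sound point of care, not a deviation from the paper's method.
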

We now turn to the method to compute averages of the form~$\langle f(\theta) \rangle_{\widetilde M_{\kappa}}$.
\subsection{Asymptotics of~$\langle f(\theta) \rangle_{\widetilde M_{\kappa}}$}
We first decompose~$h_\kappa(\cos\theta)$ as a polynomial in~$\kappa$ or in~$\kappa^{-1}$ whose coefficients are polynomials of~$\cos\theta$ plus a remainder which will be negligible.
\begin{prop}
\label{expansion_h}
Expansion of~$h_\kappa$.

We define the two linear operators~$L$ and~$D$ on the space of polynomials by
\begin{gather*}
L(P)=-(1-X^2)P''+(n+1)XP'+(n-1)P \\
D(P)=-(1-X^2)P'+XP.
\end{gather*}
We have the two following expansions:
\begin{align*}
h_\kappa(\cos\theta)&=\sum_{p=0}^N H_p(\cos\theta)\kappa^p+ R^N_{\kappa,0}(\cos\theta),\\
h_\kappa(\cos\theta)&=\sum_{p=1}^N G_p^N(\cos\theta)\kappa^{-p}+ R^N_{\kappa,\infty}(\cos\theta),
\end{align*}
where the~$H_p$ (resp.~$G_p^N$) are the polynomials of degree~$p$ (resp. at most~$N-p$) given by the following induction relations (the second one being in the neighborhood of~$0$):
\begin{gather}
\begin{cases} L(H_0)=1\\ L(H_{p+1})=-D(H_p) \end{cases} \text{and}\quad
\begin{cases} D(G^N_{1})(\cos\theta)=1+O(\theta^{2N}) \\ (D(G^N_{p+1})+L(G^N_p))(\cos\theta)=O(\theta^{2(N-p)}),\end{cases}
\label{induction_Hp}\hspace{-2ex}
\end{gather}
and where the remainders satisfy the following estimations, for any function~$f$ such that~$\theta\mapsto f(\theta) \sin^{\frac{n}2} \theta$ belongs to~$L^2(0,\pi)$ and such that~$|f(\theta)|=O(\theta^{2\beta})$ in the neighborhood of~$0$:
\begin{align*}
\langle f(\theta)R^N_{\kappa,0}(\cos\theta)\sin^2\theta\rangle_{M_\kappa}&=O(\kappa^{N+1})\text{ as }\kappa\to0, \\
\langle f(\theta)R^N_{\kappa,\infty}(\cos\theta)\sin^2\theta\rangle_{M_\kappa}&=O(\kappa^{-\beta-N-2})\text{ as }\kappa\to\infty.
\end{align*}
\end{prop}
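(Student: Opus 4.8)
The plan is first to turn the defining relation for $h_\kappa$ into one scalar ordinary differential equation in the variable $\mu=\cos\theta$, in which the operators $L$ and $D$ appear naturally. Recall from \eqref{elliptic_problem} that $g_\kappa(\theta)=h_\kappa(\cos\theta)\sin\theta$ solves $\widetilde L_\kappa^*g_\kappa=\sin\theta$, with $\widetilde L_\kappa^*$ given by \eqref{Ltild_star_kappa}. Setting $\mu=\cos\theta$, so that $\tfrac{d}{d\theta}=-\sin\theta\,\tfrac{d}{d\mu}$ and $\sin^2\theta=1-\mu^2$, a direct computation gives $g_\kappa'(\theta)=-(1-\mu^2)h_\kappa'(\mu)+\mu\,h_\kappa(\mu)=D(h_\kappa)(\mu)$. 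Substituting this into \eqref{Ltild_star_kappa}, carrying out the remaining $\theta$-derivative, and dividing the whole expression by $\sin\theta$, the surviving negative powers of $\sin\theta$ combine through the identity $h_\kappa-\mu\,D(h_\kappa)=(1-\mu^2)(h_\kappa+\mu h_\kappa')$, which produces exactly the zeroth- and first-order coefficients of $L$; one is left with $\widetilde L_\kappa^*g_\kappa/\sin\theta=L(h_\kappa)+\kappa D(h_\kappa)$. Hence $h_\kappa$ satisfies
\[
L(h_\kappa)+\kappa\,D(h_\kappa)=1\qquad\text{on }(-1,1),
\]
and both expansions will be obtained by inserting the respective ansatz into this equation and matching powers of $\kappa$.

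\textbf{Construction of the coefficients.} For the $\kappa\to0$ expansion I would note that $L$ maps polynomials of degree $p$ to polynomials of degree $p$ with nonzero leading coefficient $(p+1)(p+n-1)$, hence is invertible on polynomials, while $D$ raises the degree by exactly one. Matching the coefficient of $\kappa^0$ gives $L(H_0)=1$ and that of $\kappa^{p+1}$ gives $L(H_{p+1})=-D(H_p)$; solving these successively yields a unique polynomial $H_p$ of degree $p$, which is \eqref{induction_Hp}. For the $\kappa\to\infty$ expansion the difficulty is that $D$ cannot be inverted within polynomials. I would therefore only require the recursion to hold to high order at $\mu=1$ (that is, at $\theta=0$): writing $D$ in the variable $u=1-\mu$ shows that $D(u^k)=(2k+1)u^k-(k+1)u^{k+1}$, so $D$ is upper bidiagonal with nonzero diagonal entries $2k+1$. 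Consequently the Taylor coefficients at $u=0$ of $G_1^N$ can be matched to those of $1$ up to order $N-1$, and those of $G_{p+1}^N$ to those of $-L(G_p^N)$ up to order $N-p-1$, producing polynomials $G_p^N$ of degree at most $N-p$; since $1-\mu\sim\tfrac12\theta^2$, these matching orders translate into the remainder orders $O(\theta^{2N})$ and $O(\theta^{2(N-p)})$ of \eqref{induction_Hp}.

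\textbf{Remainder estimates.} Define $R_{\kappa,0}^N$ and $R_{\kappa,\infty}^N$ as the exact differences between $h_\kappa$ and the two partial sums. Subtracting the matched terms from the ODE, each remainder solves $L(R)+\kappa D(R)=S_\kappa$, where $S_\kappa=-\kappa^{N+1}D(H_N)$ in the first case and, in the second, a finite sum of terms $\kappa^{-q}$ multiplied by functions vanishing to order $N-q$ at $\theta=0$. Undoing the change of variables, $\widetilde R:=R(\cos\theta)\sin\theta$ solves $\widetilde L_\kappa^*\widetilde R=\sin\theta\,S_\kappa$, and I would control it with the coercive bilinear form used in the Lax--Milgram argument of Proposition \ref{structure_GCI} on the weighted space $V$ of \eqref{def_V}. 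A weighted energy estimate then bounds the relevant $L^2$-norm of $\widetilde R$ against that of the right-hand side, and a Cauchy--Schwarz inequality against $f\sin\theta$ --- finite because $\theta\mapsto f(\theta)\sin^{n/2}\theta\in L^2(0,\pi)$ --- turns this into the stated bound on $\langle f R\sin^2\theta\rangle_{M_\kappa}$, after dividing by the normalizing integral $[1]_\kappa$.

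\textbf{Main obstacle.} The $\kappa\to0$ estimate is essentially immediate, since the weight $e^{\kappa\cos\theta}$ tends to $1$ and the coercivity constant stays bounded below. The genuine difficulty is the $\kappa\to\infty$ estimate: there the measure $e^{\kappa\cos\theta}\sin^{n-2}\theta\,\mathrm d\theta$ concentrates at $\theta=0$ on the scale $\theta\sim\kappa^{-1/2}$, so coercivity degrades and one must track the precise $\kappa$-dependence rather than use a uniform constant. The sharp exponent $-\beta-N-2$ is obtained by combining the order of vanishing at $\theta=0$ of $f$ (the hypothesis $|f(\theta)|=O(\theta^{2\beta})$) and of $S_\kappa$ with the Laplace-type scaling of Lemma \ref{watson}, in the form already packaged as Lemma \ref{lem_exp_f_inf}; balancing the powers of $\kappa$ coming from the coefficients $\kappa^{-q}$ in $S_\kappa$ against those gained from the vanishing orders is the delicate bookkeeping at the heart of the proof.
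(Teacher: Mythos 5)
Your skeleton matches the paper's proof: the conjugation identity $\widetilde L_\kappa^*(P(\cos\theta)\sin\theta)=(L(P)+\kappa D(P))(\cos\theta)\sin\theta$ (your computation of it is correct), the invertibility of $L$ on polynomials via the leading coefficient $(k+1)(k+n-1)$, the bidiagonal structure $D((1-X)^k)=(2k+1)(1-X)^k+(k+1)(1-X)^{k+1}$ to near-invert $D$ at $\mu=1$, and the Poincaré-plus-Cauchy--Schwarz estimate in the weighted norm $\langle\cdot\rangle_{M_\kappa}$. This carries the $\kappa\to0$ estimate in dimension $n\geqslant3$ exactly as in the paper. But there are two genuine gaps in the $\kappa\to\infty$ part. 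First, your chain of inequalities falls one power of $\kappa$ short of the claim. The coercivity constant $(n-2)$ is in fact uniform in $\kappa$ for $n\geqslant3$ (so the obstacle is not "degrading coercivity" there); the loss occurs in Cauchy--Schwarz: since the source term contains $-L(G_N^N)(\cos\theta)\sin\theta\,\kappa^{-N}$, which does not vanish at $\theta=0$, Lemma~\ref{lem_exp_f_inf} gives $\langle(\widetilde L_\kappa^*\widetilde R)^2\rangle_{M_\kappa}=O(\kappa^{-2N-1})$ and $\langle f^2\sin^2\theta\rangle_{M_\kappa}=O(\kappa^{-2\beta-1})$, whence only $O(\kappa^{-\beta-N-1})$, not $O(\kappa^{-\beta-N-2})$. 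The paper recovers the missing factor $\kappa^{-1}$ by a bootstrap your sketch lacks: using the explicit basis expression \eqref{expression_H_G}, the difference $G_p^{N+1}-G_p^N$ is the single monomial $a^p_{N+1-p}(1-X)^{N+1-p}$, so $R^N_{\kappa,\infty}=R^{N+1}_{\kappa,\infty}+\sum_{p=0}^{N+1}a^p_{N+1-p}(1-\mu)^{N+1-p}\kappa^{-p}$; one applies the crude bound at level $N+1$ (giving $O(\kappa^{-\beta-N-2})$) and estimates each monomial term directly by Lemma~\ref{lem_exp_f_inf} (each also $O(\kappa^{-\beta-N-2})$). Your closing sentence about "delicate bookkeeping" gestures at a balance of exponents but does not contain this mechanism, and without it the stated estimate is not proved.

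Second, dimension $n=2$ is not covered by your argument at all, since your coercivity constant is $n-2=0$. For $\kappa\to0$ the paper substitutes the inequality $\langle g'(\theta)^2\rangle_{M_\kappa}\geqslant e^{-2\kappa}\langle g(\theta)^2\rangle_{M_\kappa}$, which is harmless in that limit; but for $\kappa\to\infty$ no usable $\kappa$-dependent Poincaré constant is available (this is exactly the concentration phenomenon you identify as the main obstacle, and it really is fatal to the variational route when $n=2$). The paper abandons the energy method there and exploits the explicit two-dimensional solution \eqref{def_g2D}: one checks $G_p^N=0$ for $p\geqslant2$, that $G_1^N(\cos\theta)\sin\theta=\theta+O(\theta^{2N+1})$ (Euler's series for $\arctan$), and then splits the remainder into $\kappa^{-1}O(\theta^{2N+1})$, handled by Lemma~\ref{lem_exp_f_inf}, plus a term that is exponentially small in $\kappa$. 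So your proposal is the right strategy for $n\geqslant3$ and $\kappa\to0$, but it is missing the bootstrap that yields the sharp exponent $-\beta-N-2$ and the entire special treatment of $n=2$ as $\kappa\to\infty$.
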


The proof of this proposition is given in Appendix~\ref{appendix_asymptotics}. 
The first thing to do is to prove that the inductions relations~\eqref{induction_Hp} make sense to define the sequence of polynomials (an induction relation is given in Appendix~\ref{appendix_asymptotics} to compute easily the polynomials~$G_p^N$ and~$H_p$).
The operators~$L$ and~$D$ are made so that
\begin{equation*}
\widetilde{L}_\kappa^*(P(\cos\theta)\sin\theta)=(L(P)+\kappa D(P))(\cos\theta)\sin\theta,
\end{equation*}
where the operator~$\widetilde{L}_\kappa^*$ is defined in~\eqref{Ltild_star_kappa}. Since we have~$\widetilde{L}_\kappa^*(h_\kappa(\cos\theta)\sin\theta)=\sin\theta$ by definition, we are then able to obtain the estimates on the remainders, using Poincaré inequalities in adapted spaces.

With this proposition, it is then easy to get an expansion for~$\langle f(\theta)\rangle_{\widetilde M_{\kappa}}$, using expressions of the form~$\langle g(\theta)\rangle_{M_\kappa}$, which can be expanded by the tools of the previous section:
\begin{equation*}
\langle f(\theta)\rangle_{\widetilde M_{\kappa}}=
\begin{cases}
\dfrac{\sum_{p=0}^N \langle f(\theta) H_p(\cos\theta)\sin^2\theta\rangle_{M_\kappa}\kappa^p}{\sum_{p=0}^N \langle H_p(\cos\theta)\sin^2\theta\rangle_{M_\kappa}\kappa^p}+ O(\kappa^{N+1}) &\text{as }\kappa\to0,\\
&\\
\dfrac{\sum_{p=1}^N \langle f(\theta) G_p^N(\cos\theta)\sin^2\theta\rangle_{M_\kappa}\kappa^{-p}}{\sum_{p=1}^N \langle G_p^N(\cos\theta)\sin^2\theta\rangle_{M_\kappa}\kappa^{-p}}+ O(\kappa^{-\beta-N})&\text{as }\kappa\to\infty.
\end{cases}
\end{equation*}

As an example, we can compute the first polynomials, we get
\begin{gather*}
H_0= \frac1{n-1},\quad  H_1=\frac{-X}{2n(n-1)},\quad G_1^2=\frac{4-X}3, \quad G_2^2=\frac{2(n-2)}3.
\end{gather*}
Hence, 
\begin{align*}
\langle\cos\theta\rangle_{\widetilde M_\kappa}&=\frac{\frac1{n-1}\langle\cos\theta\sin^2\theta\rangle_{M_\kappa}-\frac{\kappa}{2n(n-1)}\langle\cos^2\theta\sin^2\theta\rangle_{M_\kappa}}{\frac1{n-1}\langle\sin^2\theta\rangle_{M_\kappa}-\frac{\kappa}{2n(n-1)}\langle\cos\theta\sin^2\theta\rangle_{M_\kappa}}+O(\kappa^2),\\
\langle\cos\theta-1\rangle_{\widetilde M_\kappa}&=\frac{\frac1{3\kappa}\langle\cos\theta(4-\cos\theta)\sin^2\theta\rangle_{M_\kappa}+\frac{2(n-2)}{3\kappa^2}\langle\cos\theta\sin^2\theta\rangle_{M_\kappa}}{\frac1{3\kappa}\langle(4-\cos\theta)\sin^2\theta\rangle_{M_\kappa}+\frac{2(n-2)}{3\kappa^2}\langle\sin^2\theta\rangle_{M_\kappa}}-1+O(\kappa^{-3}).
\end{align*}
As before, in the second equation, we computed~$\langle\cos\theta-1\rangle_{\widetilde M_\kappa}$ instead of~$\langle\cos\theta\rangle_{\widetilde M_\kappa}$ in order to have a remainder of order~$3$ instead of~$2$.

Finally, we have to compute terms of the form~$\langle\cos^\ell\theta\sin^2\theta\rangle_{M_\kappa}$.
Instead of using the method of the previous subsection, we can actually express all these terms in function of~$c_1=\langle\cos\theta\rangle_{M_\kappa}$, by integrating by parts. 
We get
\begin{gather*}
\langle\sin^2\theta\rangle_{M_\kappa}=\tfrac{n-1}{\kappa}c_1,\quad\langle\cos\theta\sin^2\theta\rangle_{M_\kappa}=\tfrac{n-1}{\kappa}(1-\tfrac{n}{\kappa}c_1)\\
\langle\cos^2\theta\sin^2\theta\rangle_{M_\kappa}=\langle\sin^2\theta\rangle_{M_\kappa}-\langle\sin^4\theta\rangle_{M_\kappa}=\tfrac{n-1}{\kappa}(c_1-\tfrac{n+1}{\kappa}(1-\tfrac{n}{\kappa}c_1)).
\end{gather*}
Using the previous expansions~\eqref{exp_c1_zero} and~\eqref{exp_c1_inf}, we finally get the expansions of~$\widetilde{c_1}$:
\begin{equation}
\widetilde c_1=\langle\cos\theta\rangle_{\widetilde M_\kappa}=
\begin{cases}
\tfrac{2n-1}{2n(n+2)}\kappa+O(\kappa^2)&\text{as }\kappa\to0,\\
1-\frac{n+1}{2\kappa}+\frac{(n+1)(3n-7)}{24\kappa^2}+O(\kappa^{-3})&\text{as }\kappa\to\infty.
\end{cases}
\label{exp_c1tild}
\end{equation}
In addition, we can compute an expansion of~$\langle\sin^2\theta\rangle_{\widetilde M_\kappa}$, in order to get expansions for the coefficients~$c_2$ and~$\lambda$ of the macroscopic model, given in equations~\eqref{def_c2}-\eqref{def_lambda}.
Using only~$H_0$ and~$G_1^1=1$, we get that
\begin{align*}
\langle\sin^2\theta\rangle_{\widetilde M_\kappa}=
\begin{cases}
\dfrac{\langle\cos^2\theta\sin^2\theta\rangle_{M_\kappa}}{\langle\sin^2\theta\rangle_{M_\kappa}}+O(\kappa)=\frac{n+1}{n+2}+O(\kappa)&\text{as }\kappa\to0,\\
&\\
\dfrac{\langle\sin^4\theta\rangle_{M_\kappa}}{\langle\sin^2\theta\rangle_{M_\kappa}}+O(\kappa^{-2})=\frac{n+1}{\kappa}+O(\kappa^{-2})&\text{as }\kappa\to\infty.
\end{cases}
\end{align*}
So, using the expressions~\eqref{exp_c1_zero},~\eqref{exp_c1_inf} and~\eqref{exp_c1tild}, we get the following expansions:
\begin{align*}
c_1&=
\begin{cases}
\tfrac1n \kappa-\tfrac1{n^2(n+2)} \kappa^3+O(\kappa^5)&\text{as }\kappa\to0,\\
1- \frac{n-1}{2\kappa} +\frac{(n-1)(n-3)}{8\kappa^2} + O(\kappa^{-3})&\text{as }\kappa\to\infty,
\end{cases}\\
&\\
c_2&=
\begin{cases}
\tfrac{2n-1}{2n(n+2)}\kappa+O(\kappa^2)-\alpha\,\nu\,(\tfrac{2n+1}{2(n+2)}+O(\kappa))&\text{as }\kappa\to0,\\
1-\tfrac{n+1}2\,\kappa^{-1}+O(\kappa^{-2})-\alpha\,\nu\,(1-\kappa^{-1}+O(\kappa^{-2}))&\text{as }\kappa\to\infty,
\end{cases}\\
&\\
\lambda&=
\begin{cases}
\begin{array}{l}
\kappa^{-1}+\rho\,\tfrac{\dot{\kappa}}{\kappa}\,[ \,-\tfrac5{2n(n+2)}\kappa+O(\kappa^2) +\widetilde{\alpha}\,\nu\,(\tfrac{3}{2(n+2)}+O(\kappa)) \,] \\
\hspace{2ex}+ \,\tfrac12\,\widetilde{\alpha}\,\rho\,\tfrac{\dot{d}}{d}\,\nu\,((n-1)\kappa^{-1}+\tfrac{2n-1}{2n(n+2)}\kappa+O(\kappa^2))\,
\end{array}
&\text{as }\kappa\to0,
\\
\begin{array}{l}
\kappa^{-1}\,(1+\rho\,\tfrac{\dot{\kappa}}{\kappa}\,[ - 1 + O(\kappa^{-1})  +\widetilde{\alpha}\,\nu\,(1+O(\kappa^{-1})) \,]) \\
\hspace{2ex}+  \,\tfrac12\,\widetilde{\alpha}\,\rho\,\tfrac{\dot{d}}{d}\,\nu\,(1+\tfrac{n-3}{2}\kappa^{-1}+O(\kappa^{-2}))\,
\end{array}
&\text{as }\kappa\to\infty.
\end{cases}
\end{align*}

This shows that in any dimension, there are some simple cases were we actually have~$\lambda<0$ and the system loses hyperbolicity, even if~$\widetilde{\alpha}=0$ (for example if the kernel of observation~$\widetilde K$ is isotropic). 
For example if we have~$\kappa=\rho^\beta$ with~$\beta>0$, we have as~$rho\to\infty$ that~$\lambda(\rho)=(1-\beta)\rho^{-\beta}+O(\rho^{-2\beta})$, which gives that~$\lambda(\rho)<0$ if we take~$\beta>1$ and~$\rho$ sufficiently large.

These expansions also give a more precise estimation on the difference between~$c_1$ and~$c_2$ as the noise is small or large: when the kernel of observation is isotropic, we have~$c_1>c_2$ in the two expansions, in any dimension~$n$. 
That means that the information on the orientation propagates slower than the “fluid”.

\begin{remark}
We can also do an expansion in the more general case where~$\nu$ depends on~$\rho$ and~$\omega\cdot\Omega$. When~$d\to0$, the expansion of the coefficients depends only on the local behavior of the function~$x\mapsto\nu(\rho,x)$ near~$1$. 
In Appendix~\ref{tips} we give tips to perform this expansion. Here we only give the final expansion in the case where~$\nu$ and~$d$ do not depend on~$\rho$, so the coefficients are given by~\eqref{c_1_3d}-\eqref{lambda_3d}. In this case we can suppose~$\nu(1)=1$ (up to a rescaling), and denote~$\gamma=\nu'(1)$. We finally get, when~$n=2$:
\begin{align*}
 c_1&=1-\tfrac12 d+O(d^2),\\
c_2&= 1-\alpha + ((1+\tfrac32\gamma)\alpha-\tfrac32)d+O(d^2),\\
\lambda&=d+\tfrac32\gamma\, d^2+ O(d^3).
\end{align*}
\end{remark}

\section{Conclusion}

In this article, we have seen that the introduction of a dependence on the local density for two parameters at the microscopic level implies a significant change in the macroscopic limit: the possible loss of hyperbolicity in some regimes.
The introduction of a non-isotropic kernel of observation, without this dependence on the local density, is not sufficient to imply a strong difference of behavior for the continuum model. However, it enhances some properties, such as the difference between the velocity of the fluid and the velocity of the perturbations of the orientation.

It is important to note that the method introduced in~\cite{degond2008continuum} works to derive the macroscopic model. In particular the concept of generalized collisional invariants is still valid, with some adaptations, and we get the same macroscopic model, except for the definition of the coefficients.

Some questions are left open. The limit here is formal, and we are still looking for an appropriate functional framework to obtain more precise results of convergence. The rigorous derivation of the mean-field limit of the dynamical system of particles is also part of our future work.

Finally, the next step to this study consists in numerical simulations, in order to see how the difference between~$c_2$ and~$c_1$ can be observed in simulations of the discrete dynamical system, or how the particles behave in the regions of non-hyperbolicity.

\appendix
\section{Proof of some statements for section~\ref{elements}}

\subsection{Expansion of the local density and orientation}
\label{annex_lemma_expansion}

We recall the expressions of~$\bar{\omega}^\varepsilon$ and~$\bar{\rho}^\varepsilon$:

\begin{align}
\bar{\omega}^\varepsilon (x, \omega, t) &= \frac{\bar J^\varepsilon(x,\omega,t)}{|\bar J^\varepsilon(x,\omega,t)|}, \label{bar_Omega_eps}\\
\bar J^\varepsilon(x,\omega,t) &= \int_{ y \in \mathbb{R}^n , \, \upsilon \in \mathbb{S} } K\left(\tfrac{|x-y|}{\varepsilon},\tfrac{y-x}{|x-y|}\cdot\omega\right) \, 
\upsilon \, f^\varepsilon(y, \upsilon,t) \, \tfrac{\mathrm d y}{\varepsilon^n} \, \mathrm d\upsilon \, , \label{Jeps}\\
\bar{\rho}^\varepsilon (x, \omega, t) &= \int_{ y \in \mathbb{R}^n , \, \upsilon \in \mathbb{S} } \widetilde K\left(\tfrac{|x-y|}{\varepsilon},\tfrac{y-x}{|x-y|}\cdot\omega\right) \, f^\varepsilon(y, \upsilon,t) \, \tfrac{\mathrm d y}{\varepsilon^n}\, \mathrm d\upsilon \, , \label{rho_eps}
\end{align}

\begin{lemma} We have the following expansions: 
\label{lemma_expansion}
\begin{align*} 
\bar{\omega}^\varepsilon (x, \omega, t) &= \Omega^\varepsilon (x, t) + \varepsilon \alpha\, (\omega\cdot\nabla_x)\, \Omega^\varepsilon(x,t) + O(\varepsilon^2) \, ,\\
\bar{\rho}^\varepsilon (x, \omega, t) &= \rho^\varepsilon (x, t) + \varepsilon \widetilde{\alpha}\, \omega\cdot\nabla_x\rho^\varepsilon(x,t) + O(\varepsilon^2) \, .
\end{align*}
where the constants~$\alpha$ and~$\widetilde{\alpha}$ depend only on the observation kernels~$K$ and~$\widetilde{K}$, and
\begin{align*} 
\Omega^\varepsilon(x,t) &= \tfrac{j^\varepsilon(x,t)}{|j^\varepsilon (x,t)|}, \text{ with } j^\varepsilon (x,t) = \int_{\upsilon \in \mathbb{S}} \upsilon\,f^\varepsilon(x,\upsilon,t)\, \mathrm d\upsilon\, ,\\
\rho^\varepsilon(x,t) &= \int_{\upsilon\in\mathbb{S}} f^\varepsilon (x, \upsilon,t) \, \mathrm d\upsilon \,.
\end{align*}
\end{lemma}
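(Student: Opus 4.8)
The plan is to make the $\varepsilon$-scaling explicit by the change of variables $y = x + \varepsilon\xi$ in \eqref{Jeps} and \eqref{rho_eps}, which turns $\tfrac{|x-y|}{\varepsilon}$ into $|\xi|$, the unit vector $\tfrac{y-x}{|x-y|}$ into $\tfrac{\xi}{|\xi|}$, and the measure $\tfrac{\mathrm dy}{\varepsilon^n}$ into $\mathrm d\xi$. After this substitution the only surviving occurrence of $\varepsilon$ is in the argument $f^\varepsilon(x+\varepsilon\xi,\upsilon,t)$, so I would Taylor expand $f^\varepsilon(x+\varepsilon\xi,\upsilon,t) = f^\varepsilon(x,\upsilon,t) + \varepsilon(\xi\cdot\nabla_x)f^\varepsilon(x,\upsilon,t) + O(\varepsilon^2)$ (assuming enough regularity in $x$) and integrate term by term.

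For $\bar\rho^\varepsilon$ the zeroth-order term factorizes as $\rho^\varepsilon(x,t)\int_{\mathbb{R}^n}\widetilde K(|\xi|,\tfrac{\xi}{|\xi|}\cdot\omega)\,\mathrm d\xi$, which equals $\rho^\varepsilon(x,t)$ by the normalization \eqref{rho_bar_normalization}; the first-order term is $\varepsilon\bigl(\int_{\mathbb{R}^n}\widetilde K(|\xi|,\tfrac{\xi}{|\xi|}\cdot\omega)\,\xi\,\mathrm d\xi\bigr)\cdot\nabla_x\rho^\varepsilon$. The key structural observation is that the vector $\int\widetilde K(|\xi|,\tfrac{\xi}{|\xi|}\cdot\omega)\,\xi\,\mathrm d\xi$ is invariant under every rotation fixing $\omega$ (the integrand sees only $|\xi|$ and $\tfrac{\xi}{|\xi|}\cdot\omega$), hence is parallel to $\omega$ and equals $\widetilde\alpha\,\omega$ with $\widetilde\alpha = \int\widetilde K(|\xi|,\tfrac{\xi}{|\xi|}\cdot\omega)(\xi\cdot\omega)\,\mathrm d\xi$ independent of $\omega$ by isotropy of the construction. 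This yields the second expansion directly.

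For $\bar J^\varepsilon$ the same two steps give $\bar J^\varepsilon = k_0\,j^\varepsilon + \varepsilon\,k_1\,(\omega\cdot\nabla_x)j^\varepsilon + O(\varepsilon^2)$, where $k_0 = \int K(|\xi|,\tfrac{\xi}{|\xi|}\cdot\omega)\,\mathrm d\xi$ and $k_1\,\omega = \int K(|\xi|,\tfrac{\xi}{|\xi|}\cdot\omega)\,\xi\,\mathrm d\xi$ by the identical symmetry argument. It then remains to expand the normalization $\bar\omega^\varepsilon = \bar J^\varepsilon/|\bar J^\varepsilon|$. Writing $j^\varepsilon = |j^\varepsilon|\,\Omega^\varepsilon$ and using $\tfrac{A+\varepsilon B}{|A+\varepsilon B|} = \tfrac{A}{|A|} + \tfrac{\varepsilon}{|A|}(\mathrm{Id}-\tfrac{A}{|A|}\otimes\tfrac{A}{|A|})B + O(\varepsilon^2)$ with $A = k_0\,j^\varepsilon$, the leading term is $\Omega^\varepsilon$ and the correction is $\tfrac{\varepsilon}{k_0|j^\varepsilon|}(\mathrm{Id}-\Omega^\varepsilon\otimes\Omega^\varepsilon)\,k_1(\omega\cdot\nabla_x)j^\varepsilon$. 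Expanding $(\omega\cdot\nabla_x)j^\varepsilon = (\omega\cdot\nabla_x|j^\varepsilon|)\Omega^\varepsilon + |j^\varepsilon|(\omega\cdot\nabla_x)\Omega^\varepsilon$, the projection annihilates the first term, while $(\omega\cdot\nabla_x)\Omega^\varepsilon$ is already orthogonal to $\Omega^\varepsilon$ because $|\Omega^\varepsilon|\equiv 1$; hence the factor $|j^\varepsilon|$ cancels and the correction is exactly $\varepsilon\,\tfrac{k_1}{k_0}(\omega\cdot\nabla_x)\Omega^\varepsilon$, giving the first expansion with $\alpha = k_1/k_0$.

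The computation is otherwise routine, and the two delicate points are the normalization step and the control of the remainder. To legitimately write $O(\varepsilon^2)$ one needs $f^\varepsilon$ to be $C^2$ in $x$ and the kernels $K,\widetilde K$ to have finite second moments $\int |K|\,|\xi|^2\,\mathrm d\xi<\infty$, so that the integrated second-order Taylor remainder $\tfrac{\varepsilon^2}{2}\int K\,(\xi\otimes\xi):\nabla_x^2 f^\varepsilon$ is bounded by a constant times $\varepsilon^2$; within the formal framework of the paper these bounds are assumed. The genuinely subtle part is the expansion of the normalization, where one must keep track of the fact that the zeroth order of $\bar J^\varepsilon$ carries the unnormalized factor $k_0$ rather than $1$, and verify that $k_0$ together with $|j^\varepsilon|$ cancels after projection so that $\alpha$ depends on $K$ alone.
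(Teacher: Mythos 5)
Your proof is correct and takes essentially the same approach as the paper: the change of variables $y = x + \varepsilon\xi$, the first-order Taylor expansion of $f^\varepsilon$, the rotation-invariance argument identifying $\int K(|\xi|,\tfrac{\xi}{|\xi|}\cdot\omega)\,\xi\,\mathrm d\xi = k_1\,\omega$ with $k_1$ independent of $\omega$, and the expansion of the normalization $\bar J^\varepsilon/|\bar J^\varepsilon|$. Your projection formula is precisely what the paper obtains by expanding $|\bar J^\varepsilon|^{-1}$ to first order, and your decomposition $(\omega\cdot\nabla_x)j^\varepsilon = (\omega\cdot\nabla_x|j^\varepsilon|)\,\Omega^\varepsilon + |j^\varepsilon|\,(\omega\cdot\nabla_x)\Omega^\varepsilon$ is the same chain-rule identity the paper uses (in the reverse direction) to recognize the correction term as $\varepsilon\,\tfrac{k_1}{k_0}\,(\omega\cdot\nabla_x)\Omega^\varepsilon$.
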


\begin{proof}
After change of variable~$y=x+\varepsilon\xi$, let us expand~$f$ at first order in~$\varepsilon$ in~\eqref{Jeps}. We get
\begin{equation*}
\bar J^\varepsilon(x,\omega,t)=\int_{\xi \in \mathbb{R}^n , \, \upsilon \in \mathbb{S}}\hspace{-1cm} K (|\xi |,\tfrac{ \xi}{|\xi |}\cdot\omega) \, \upsilon \,(f^\varepsilon(x, \upsilon,t)+\varepsilon \, \xi\cdot \nabla_x f^\varepsilon(x,\upsilon,t) + O(\varepsilon^2))\, \mathrm d\xi \, \mathrm d\upsilon \, .
\end{equation*}
We have to compute
\begin{equation*}
K_0(\omega)=\int_{ \xi \in \mathbb{R}^n} K (|\xi |,\tfrac{ \xi}{|\xi |}\cdot\omega) \mathrm d\xi \quad \text{ and } \quad 
K_1(\omega)=\int_{ \xi \in \mathbb{R}^n} K (|\xi |,\tfrac{ \xi}{|\xi |}\cdot\omega) \, \xi \, \mathrm d\xi.
\end{equation*}
For any rotation~$R$, the change of variable~$\tilde{\xi}=R(\xi)$ gives on one hand 
\begin{equation*}
K_0(\omega)=K_0(R(\omega))\, ,
\end{equation*} 
and so~$K_0$ does not depend on~$\omega$. 
On the other hand, we get
\begin{equation*}
R(K_1(\omega))=K_1(R(\omega)) \, ,
\end{equation*}
which shows that~$K_1(\omega)$ is a vector invariant by any rotation which let~$\omega$ invariant, so it is parallel to~$\omega$. Given a vector~$e$ of~$\mathbb{S}$, we have~$K_1(e)=k_1 e$. Then taking one rotation mapping~$\omega$ to~$e$, we get~$R(K_1(\omega))=K_1(e)=k_1 e = R(k_1 \omega)$, so finally we get~$K_1(\omega)=k_1 \omega$ for all~$\omega\in\mathbb{S}$.
Let then~$\alpha=\frac{k_1}{K_0}$, and we have 
\begin{align*}
\frac{\bar J^\varepsilon(x,\omega,t)}{K_0} & = \int_{\upsilon \in \mathbb{S}}
\upsilon \, (f^\varepsilon(x, \upsilon,t)+\varepsilon \, \alpha \, \omega\cdot \nabla_x f^\varepsilon(x, \upsilon,t)) \, \mathrm d\upsilon + O(\varepsilon^2)\\
&= j^\varepsilon(x,t) + \varepsilon \, \alpha\, (\omega\cdot \nabla_x) j^\varepsilon(x,t) + O(\varepsilon^2) \, .
\end{align*}
Putting this expression into~\eqref{bar_Omega_eps}, we get
\begin{gather*}
\left|\frac{\bar J^\varepsilon(x,\omega,t)}{K_0}\right|^2 = |j^\varepsilon(x,t)|^2 + 2\, \varepsilon \, \alpha\,j^\varepsilon(x,t)\cdot(\omega\cdot \nabla_x) j^\varepsilon(x,t) + O(\varepsilon^2) \, , 
\intertext{so}
\left|\frac{\bar J^\varepsilon(x,\omega,t)}{K_0}\right|^{-1} = \frac1{|j^\varepsilon(x,t)|}\left(1 - \frac{\varepsilon \, \alpha}{|j^\varepsilon(x,t)|^2}\,j^\varepsilon(x,t)\cdot(\omega\cdot \nabla_x) j^\varepsilon(x,t)\right) + O(\varepsilon^2) \, ,
\intertext{and finally}
\begin{split}
\bar{\omega}^\varepsilon (x, \omega, t) &= \frac{j^\varepsilon(x,t)}{|j^\varepsilon(x,t)|} \\
& + \varepsilon \, \alpha\, \left(\frac{(\omega\cdot\nabla_x)j^\varepsilon(x,t)}{|j^\varepsilon(x,t)|}-\frac{j^\varepsilon(x,t)}{|j^\varepsilon(x,t)|}\cdot\frac{(\omega\cdot \nabla_x)j^\varepsilon(x,t)}{|j^\varepsilon(x,t)|} \frac{j^\varepsilon(x,t)}{|j^\varepsilon(x,t)|}\right) \, + O(\varepsilon^2).
\end{split}
\end{gather*}
But we also have
\begin{align*}
(\omega\cdot\nabla_x)\Omega^\varepsilon(x,t)&=\frac{(\omega\cdot \nabla_x)j^\varepsilon(x,t)}{|j^\varepsilon(x,t)|}+ \left(\omega\cdot \nabla_x\left(\frac{1}{|j^\varepsilon(x,t)|}\right)\right)\, j^\varepsilon(x,t) \\
& = \frac{(\omega\cdot\nabla_x)j^\varepsilon(x,t)}{|j^\varepsilon(x,t)|}-\frac{1}{|j^\varepsilon(x,t)|^3}\, (((\omega\cdot\nabla_x)j^\varepsilon(x,t))\cdot j^\varepsilon(x,t))\, j^\varepsilon(x,t)\, .
\end{align*}
Therefore
\begin{equation*}
\bar{\omega}^\varepsilon (x, \omega, t) = \Omega^\varepsilon (x, t) + \varepsilon \alpha\, (\omega\cdot\nabla_x)\, \Omega^\varepsilon(x,t) + O(\varepsilon^2) \, ,
\end{equation*}
and this is the first part of the lemma.

After the same change of variable~$y=x+\varepsilon\xi$ and expansion in~\eqref{rho_eps}, and using the same techniques, and the normalization condition~\eqref{rho_bar_normalization}, we get
\begin{align*}
\bar{\rho}^\varepsilon(x,\omega,t) & = \int_{\upsilon \in \mathbb{S}} f^\varepsilon(x, \upsilon,t)+\varepsilon \, \widetilde K_1(\omega)\cdot \nabla_x f^\varepsilon(x, \upsilon,t) \, \mathrm d\upsilon + O(\varepsilon^2)\\
&= \rho^\varepsilon(x,t) + \varepsilon \, \widetilde{\alpha}\, \omega\cdot \nabla_x\rho^\varepsilon(x,t) + O(\varepsilon^2) \, .
\end{align*}
This is the second part of the lemma.
\end{proof}

\subsection{Proof of Proposition~\ref{prop_dtOmega}}
\label{annex_dtOmega}
We have to compute~$(\mathrm{Id} - \Omega \otimes \Omega)\, X$, where 
\begin{equation*} 
X = \int_{\omega \in \mathbb{S}} ( (\partial_t+ \omega \cdot \nabla_x) (\rho M_{\kappa\Omega}) +\alpha P(\rho M_{\kappa\Omega})+\widetilde{\alpha} \widetilde{P}(\rho M_{\kappa\Omega})) \, h_\kappa(\omega \cdot \Omega) \, \omega \, \mathrm d\omega\, .
\end{equation*}
For convenience, we will write~$\nu$,~$d$ for~$\nu(\rho)$,~$d(\rho)$ in the following.
We first give some useful formulas to work on the unit sphere. For~$V$ a constant vector in~$\mathbb{R}^n$, we have:
\begin{gather*}
 \nabla_\omega (\omega\cdot V) = (\mathrm{Id} - \omega \otimes \omega) V,\\
 \nabla_\omega\cdot((\mathrm{Id} - \omega \otimes \omega)V)= -(n-1)\,\omega\cdot V.
\end{gather*}
Then we have that for any constant matrix~$A$ 
\begin{equation*}
 \nabla_\omega\cdot((\mathrm{Id} - \omega \otimes \omega)A\omega)= A:(\mathrm{Id}-n\omega\otimes\omega),
\end{equation*}
where the notation~“$:$” denotes the “contraction” of two operators (if~$A = (A_{ij})$ and~$B = (B_{ij})$ then~$A:B = \sum_{i,j=1, \ldots, n} A_{ij} B_{ij}$, this is the trace of~$AB^T$). This can be shown when~$A$ is of the form~$V_1\otimes V_2$, using the previous formulas, and then extended by linearity.

We recall the definition of~$M_{\kappa\Omega}$, given in equation~\eqref{M_def}:
\begin{equation*}
M_{\kappa\Omega}(\omega) =\frac{e^{\kappa\, \omega \cdot \Omega}}{\int_\mathbb{S} e^{\kappa\, \upsilon \cdot \Omega}\, \mathrm d\upsilon}.
\end{equation*}
We get, writing~$\cos\theta$ for~$\omega\cdot\Omega$, and using the notation~$\langle\cdot\rangle_{M_\kappa}$ given in~\eqref{brackets},
\begin{align*}
\nabla_\omega M_{\kappa\Omega}&= \kappa(\mathrm{Id}-\omega\otimes\omega)\Omega M_{\kappa\Omega},\\
\nabla_\Omega M_{\kappa\Omega}&= \kappa(\mathrm{Id}-\Omega\otimes\Omega)\omega M_{\kappa\Omega},\\
\partial_\kappa M_{\kappa\Omega}&= (\cos\theta-\langle\cos\theta\rangle_{M_\kappa})M_{\kappa\Omega}.
\end{align*}
Using the chain rule, we then get
\begin{equation*}
\begin{split}
(\partial_t+ \omega \cdot \nabla_x) (\rho M_{\kappa\Omega})=(1&+(\cos\theta-\langle\cos\theta\rangle_{M_\kappa})\rho\dot{\kappa})M_{\kappa\Omega}(\partial_t+ \omega \cdot \nabla_x)\rho \\
&+ \rho\kappa(\mathrm{Id}-\Omega\otimes\Omega)\omega M_{\kappa\Omega}\cdot(\partial_t+ \omega \cdot \nabla_x)\Omega,
\end{split}
\end{equation*}
where~$\dot{\kappa}$ is the derivative of~$\kappa$ with respect to~$\rho$.
Since~$\Omega$ is of norm~$1$, we have that~$(\partial_t+ \omega \cdot \nabla_x)\Omega$ is orthogonal to~$\Omega$, and the term~$\Omega\otimes\Omega$ vanishes.
We get 
\begin{equation*}
\begin{split}
(\partial_t+ \omega \cdot \nabla_x) (\rho M_{\kappa\Omega})=(1&+(\cos\theta-\langle\cos\theta\rangle_{M_\kappa})\rho\dot{\kappa})M_{\kappa\Omega}(\partial_t\rho+ \omega \cdot \nabla_x\rho) \\
&+ \rho\kappa M_{\kappa\Omega}(\omega\cdot\partial_t\Omega+ \omega\otimes\omega: \nabla_x\Omega),
\end{split}
\end{equation*}
where~$\nabla_x \Omega$ is the gradient tensor of~$\Omega$ that is to say~$(\nabla_x \Omega)_{ij} = \partial_{x_i} \Omega_j$.
We then have
\begin{align*}
P(\rho&M_{\kappa\Omega})  = \nu(\rho) \nabla_\omega \cdot ((\mathrm{Id} - \omega \otimes \omega)((\omega\cdot\nabla_x)\, \Omega) \rho M_{\kappa\Omega}),\\
& = \rho  \nu(\rho) [\kappa \Omega\cdot(\mathrm{Id} - \omega \otimes \omega)((\nabla_x\Omega)^T\omega) + \nabla_\omega \cdot ((\mathrm{Id} - \omega \otimes \omega)(\nabla_x\Omega)^T\omega)] M_{\kappa\Omega},
\end{align*}
where the notation~$^T$ denotes the transpose of operators. Hence, using the fact that~$(\nabla_x\Omega)^T\omega=(\omega\cdot\nabla_x)\, \Omega$ is orthogonal to~$\Omega$, and the formula given in the beginning of this section, with~$A=(\nabla_x\Omega)^T$, and  we get
\begin{align*}
P(\rho M_{\kappa\Omega}) & = \rho  \nu(\rho) [-\kappa \cos\theta\, \omega\otimes\omega: (\nabla_x\Omega)^T + (\nabla_x\Omega)^T:(\mathrm{Id} - n \omega \otimes \omega)] M_{\kappa\Omega},\\
&=\rho  \nu [\nabla_x\cdot\Omega-(n+\kappa \cos\theta)\, \omega\otimes\omega: \nabla_x\Omega] M_{\kappa\Omega}.
\end{align*}
Similarly, for the operator~$\widetilde{P}$, we get
\begin{align*}
\widetilde{P}(\rho M_{\kappa\Omega}) &=\dot{\nu}(\rho) \nabla_\omega \cdot ((\omega\cdot\nabla_x\rho)\,(\mathrm{Id}-\omega\otimes\omega)\Omega \rho M_{\kappa\Omega})\\
  &\hspace{1cm}- \dot d(\rho) \nabla_\omega \cdot (\tfrac12\rho M_{\kappa\Omega}(\mathrm{Id}-\omega\otimes\omega)\nabla_x\rho + (\omega\cdot\nabla_x\rho)\, \nabla_\omega \rho M_{\kappa\Omega}),\\
&=\rho(\dot{\nu}-\kappa\dot d)\nabla_\omega \cdot ((\mathrm{Id}-\omega\otimes\omega)(\Omega\otimes\nabla_x\rho)\omega M_{\kappa\Omega}) \\
  &\hspace{1cm}- \tfrac12\rho\dot d [\nabla_\omega\cdot ((\mathrm{Id}-\omega\otimes\omega)\nabla_x\rho)+\kappa\Omega\cdot(\mathrm{Id}-\omega\otimes\omega)\nabla_x\rho] M_{\kappa\Omega}.
\end{align*}
But we have~$\nu=\kappa d$, so~$\dot{\nu}-\kappa\dot d=d\dot{\kappa}$. And we have
\begin{align*}
\nabla_\omega \cdot ((\mathrm{Id}&-\omega\otimes\omega)(\Omega\otimes\nabla_x\rho)\omega M_{\kappa\Omega})\\
&=\nabla_\omega \cdot ((\mathrm{Id}-\omega\otimes\omega)(\Omega\otimes\nabla_x\rho)\omega)M_{\kappa\Omega} +\kappa\Omega\cdot(\mathrm{Id}-\omega\otimes\omega)(\Omega\otimes\nabla_x\rho)\omega]\\
&=[(\Omega\otimes\nabla_x\rho):(\mathrm{Id}-n\omega\otimes\omega)+\kappa(1-\cos^2\theta)\, \omega\cdot\nabla_x\rho]M_{\kappa\Omega}\\
&=[\Omega\cdot\nabla_x\rho +(\kappa\sin^2\theta-n\cos\theta)\, \omega\cdot\nabla_x\rho]M_{\kappa\Omega}.
\end{align*}
Hence
\begin{align*}
\widetilde{P}(\rho M_{\kappa\Omega}) &=\rho d\dot{\kappa}[\Omega\cdot\nabla_x\rho +(\kappa\sin^2\theta-n\cos\theta)\, \omega\cdot\nabla_x\rho]M_{\kappa\Omega} \\
  &\hspace{1cm}+ \tfrac12\rho\dot d [(\kappa\cos\theta+n-1)\, \omega\cdot\nabla_x\rho - \kappa \Omega\cdot\nabla_x\rho] M_{\kappa\Omega}.
\end{align*}
Finally we can write~$X=X_1+X_2+X_3$, where
\begin{align*}
X_1&=\int_{\omega \in \mathbb{S}}h_\kappa(\cos\theta)\gamma_1(\cos\theta)\,\omega\,M_{\kappa\Omega}\mathrm d \omega, \\
X_2&=\int_{\omega \in \mathbb{S}}h_\kappa(\cos\theta)\,\omega\otimes\omega (\gamma_2(\cos\theta)\,\nabla_x\rho+ \rho\kappa \partial_t\Omega)\,M_{\kappa\Omega}\mathrm d \omega, \\
X_3&=\int_{\omega \in \mathbb{S}}h_\kappa(\cos\theta)\gamma_3(\cos\theta)\,\omega(\omega\otimes\omega:\nabla_x\Omega)\,M_{\kappa\Omega}\mathrm d \omega,
\end{align*}
with (using the notation~$c_1=\langle\cos\theta\rangle_{M_\kappa}$)
\begin{align*}
\gamma_1(\cos\theta)&=(1+(\cos\theta-c_1)\rho\dot{\kappa})\partial_t\rho+ \alpha\rho \nu\nabla_x\cdot\Omega + \widetilde{\alpha}\rho(d\dot{\kappa}-\tfrac12\dot d\kappa)\Omega\cdot\nabla_x\rho,\\
\gamma_2(\cos\theta)&=1+(\cos\theta-c_1)\rho\dot{\kappa}+\widetilde{\alpha}\rho(d\dot{\kappa}(\kappa\sin^2\theta-n\cos\theta)+\tfrac12\dot d\kappa(\kappa\cos\theta+n-1)),\\
\gamma_3(\cos\theta)&=\rho\,\kappa -\alpha\rho\nu(n+\kappa \cos\theta).
\end{align*}

To do the computation we write~$\omega=\cos\theta\, \Omega+\sin\theta\, v$, with~$v\in\mathbb{S}_{n-2}$ (identified with the set of unit vectors which are orthogonal to~$\Omega$).
We take the following convention:~$\int_{v \in \mathbb{S}_{n-2}}\mathrm d v=1$, and we have
\begin{gather*}
\int_{\omega \in \mathbb{S}_{n-1}}\hspace{-0.5cm}a(\omega)\mathrm d \omega=\frac{1}{V_n}\int_0^\pi\int_{v \in \mathbb{S}_{n-2}}\hspace{-0.5cm}a(\theta,v)\sin^{n-2}\theta\,\mathrm d v\,\mathrm d \theta,\\
\int_{v \in \mathbb{S}_{n-2}}\hspace{-0.5cm}v\,\mathrm d v=0, \text{ and } \int_{v \in \mathbb{S}_{n-2}}\hspace{-0.5cm}v\otimes v\,\mathrm d v=\frac{1}{n-1}(\mathrm{Id}-\Omega\otimes\Omega),
\end{gather*}
where~$V_n$ is a normalization constant (we will not need it in the following). These results are still valid when~$n=2$, with~$\int_{v \in \mathbb{S}_{0}}\mathrm d v=\frac12(a(v_0)+a(-v_0))$,~$v_0$ being one of the two unit vectors orthogonal to~$\Omega$. Using these formulas, we get
\begin{gather*}
\int_{\omega \in \mathbb{S}}\gamma(\cos\theta) \, M_{\kappa\Omega} \, \omega\, \mathrm d\omega = \langle\cos \theta \, \gamma(\cos\theta)\rangle_{M_\kappa} \, \Omega,\\
\int_{\omega \in \mathbb{S}}\omega\otimes\omega\, \gamma(\cos\theta)\, M_{\kappa\Omega}\, \mathrm d\omega = \langle \cos^2 \theta \, \gamma \rangle_{M_\kappa} \Omega \otimes \Omega +\frac{\langle \sin^2 \theta \, \gamma \rangle_{M_\kappa}}{n-1} (\mathrm{Id} - \Omega \otimes \Omega). 
\end{gather*}
So we have (knowing that~$\partial_t\Omega$ is orthogonal to~$\Omega$):
\begin{gather*}
 (\mathrm{Id} - \Omega \otimes \Omega)\, X_1=0, \\
 (\mathrm{Id}-\Omega\otimes\Omega)\,X_2= \frac{\langle\sin^2 \theta\,\gamma_2h_\kappa\rangle_{M_\kappa}}{n-1} (\mathrm{Id} - \Omega \otimes \Omega)\nabla_x \rho +\frac{\rho \kappa\langle\sin^2 \theta\, h_\kappa\rangle_{M_\kappa}}{n-1}\partial_t\Omega. 
\end{gather*}
To compute~$(\mathrm{Id}-\Omega\otimes\Omega)\,X_3$, we first remark that
\begin{equation*}
(\mathrm{Id}-\Omega\otimes\Omega)\,\omega(\omega\otimes\omega:\nabla_x\Omega)=\sin\theta\,v(\omega\cdot(\omega\cdot\nabla_x)\Omega)=\sin^2\theta\,v(v\cdot(\omega\cdot\nabla_x)\Omega),
\end{equation*}
since~$(\omega\cdot\nabla_x)\Omega$ is orthogonal to~$\Omega$. But we have~$\int_{v \in \mathbb{S}_{n-2}}v(v\otimes v\!:\!\nabla_x\Omega)\,\mathrm d v=0$, because the integrand is odd with respect to~$v$, and then we get 
\begin{align*}
(\mathrm{Id}-\Omega\otimes\Omega)\,X_3&=\langle \sin^2 \theta \,\cos\theta\, \gamma_3\,h_\kappa \rangle_{M_\kappa}\int_{v \in \mathbb{S}_{n-2}}\hspace{-0.5cm}v\otimes v\,\mathrm d v\,(\Omega\cdot\nabla_x)\Omega\\
&=\frac{\langle \sin^2 \theta \,\cos\theta\, \gamma_3\,h_\kappa \rangle_{M_\kappa}}{n-1}(\Omega\cdot\nabla_x)\Omega,
\end{align*}
since~$(\Omega\cdot\nabla_x)\Omega$ is orthogonal to~$\Omega$.

So we have that~$(\mathrm{Id} - \Omega \otimes \Omega) X=0$ is equivalent to 
\begin{equation*}
\rho \kappa\langle\sin^2 \theta\, h_\kappa\rangle_{M_\kappa}\partial_t\Omega+\langle \sin^2 \theta \,\cos\theta\, \gamma_3\,h_\kappa \rangle_{M_\kappa}(\Omega\cdot\nabla_x)\Omega+\langle\sin^2 \theta\,\gamma_2h_\kappa\rangle_{M_\kappa}\nabla_x \rho=0.
\end{equation*}

For any function~$\gamma(\cos \theta)$, we denote by~$\langle \gamma(\cos \theta) \rangle_{\widetilde M_\kappa}$ the mean of~$\gamma(\cos \theta)$ following the “weight”~$\sin^2 \theta\,h_\kappa(\cos \theta)  M_{\kappa\Omega}$, that is to say
\begin{equation*}
\langle\gamma(\cos \theta)\rangle_{\widetilde M_\kappa}= \frac{\int_0^\pi\gamma(\cos \theta)h_\kappa(\cos\theta) e^{\kappa\cos\theta}\,\sin^n\theta\,\mathrm d\theta}{\int_0^\pi h_\kappa(\cos\theta)e^{\kappa\cos\theta}\,\sin^n\theta\,\mathrm d\theta}.
\end{equation*}
We have
\begin{equation*}
\langle \gamma(\cos \theta) \rangle_{\widetilde M_\kappa}=\frac{\langle \sin^2 \theta\,h_\kappa(\cos\theta)\gamma(\cos \theta) \rangle_{M_\kappa}}{\langle \sin^2 \theta\,h_\kappa(\cos\theta)\rangle_{M_\kappa}},
\end{equation*}
and so, dividing by~$\kappa\langle \sin^2 \theta h_\kappa(\cos\theta)\rangle_{M_\kappa}$ we finally get that~$(\mathrm{Id} - \Omega \otimes \Omega) X=0$ is equivalent to 
\begin{equation*}
\rho  \left( \partial_t \Omega + c_2 (\Omega \cdot \nabla_x) \Omega \right) + \lambda \, (\mathrm{Id} - \Omega \otimes \Omega) \nabla_x \rho = 0,
\end{equation*}
where the coefficients are given by
\begin{equation*}
c_2(\rho)=\tfrac1{\kappa\rho}\langle \cos \theta \,\gamma_3(\cos\theta) \rangle_{\widetilde M_\kappa}\text{ and }\lambda(\rho)=\tfrac1{\kappa}\langle\gamma_2(\cos\theta) \rangle_{\widetilde M_\kappa}.
\end{equation*}
We finally get, writing~$\widetilde c_1=\langle \cos\theta \rangle_{\widetilde M_\kappa}$,
\begin{align*}
c_2&=\widetilde c_1 - \alpha\,d\,(n\, \widetilde c_1+ \kappa\,\langle\cos^2\theta\rangle_{\widetilde M_\kappa})\, ,\\
\lambda&=\tfrac1{\kappa}+\rho\,\tfrac{\dot{\kappa}}{\kappa}\,[\,(\,\widetilde c_1 -c_1+\widetilde{\alpha}\,d\, (\kappa\langle \sin^2\theta\rangle_{\widetilde M_\kappa}-n\,\widetilde c_1)\,] + \tfrac12\,\widetilde{\alpha}\,\dot{d}\,(\kappa\,\widetilde c_1 +n-1)\, ,
\end{align*}
which are exactly the expressions given in equations~\eqref{def_c2}-\eqref{def_lambda}, and this ends the proof of Proposition~\ref{prop_dtOmega}.

\section{Asymptotics of the coefficients}
 \subsection{Proof of Proposition~\ref{expansion_h}}
\label{appendix_asymptotics}

We recall that the two linear operators~$L$ and~$D$ on the space of polynomials are defined by
\begin{gather*}
L(P)=-(1-X^2)P''+(n+1)XP'+(n-1)P \\
D(P)=-(1-X^2)P'+XP.
\end{gather*}
We first give a preliminary lemma which will be helpful to construct the polynomials~$H_p$ and~$G_p^N$.
\begin{lemma}Definition of the polynomials.
\label{lem_def_polynomials}

Let~$Q$ be a polynomial and~$N\in\mathbb{N}$. Then
\begin{itemize}
\item There exists one unique polynomial~$P$ such that~$L(P)=Q$.
\item There exists one unique polynomial~$P^N$ of degree at most~$N$ such that
\begin{equation*}
D(P^N)(\cos\theta)=Q(\cos\theta)+O(\theta^{2(N+1)})\text{ as }\theta\to\theta.
\end{equation*}
\end{itemize}
\end{lemma}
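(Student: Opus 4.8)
The plan is to prove the two bullets separately, in each case reducing the statement to the fact that the relevant operator acts invertibly on a finite set of coefficients. Both operators respect a grading, but measured at different places: $L$ preserves the degree of a polynomial, whereas $D$ is analysed through its behaviour at $X=1$, i.e. at $\theta=0$.

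For the first bullet I would show that $L$ is a degree-preserving bijection of the space of polynomials. Writing $P=aX^d+\cdots$ with $a\neq0$ and collecting the top-degree contributions of the three terms of $L(P)$, one gets
\[
L(P)=a\,\bigl(d(d-1)+(n+1)d+(n-1)\bigr)X^d+(\text{lower order})=a\,(d+1)(d+n-1)X^d+(\text{lower order}),
\]
using the factorisation $d^2+nd+(n-1)=(d+1)(d+n-1)$. Since $n\geq2$ and $d\geq0$, the multiplier $(d+1)(d+n-1)$ is strictly positive, hence nonzero. Thus $L$ sends $\mathbb{R}[X]_{\leq d}$ into itself and maps every nonzero polynomial of degree $d$ to a polynomial of the same degree; in particular it is injective on each $\mathbb{R}[X]_{\leq d}$, and therefore bijective by equality of dimensions. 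Given $Q$, taking $d=\deg Q$ produces the unique $P\in\mathbb{R}[X]_{\leq d}$ with $L(P)=Q$, and injectivity of $L$ on all polynomials (again from degree preservation with a nonvanishing leading multiplier) removes any need for a degree restriction in the uniqueness claim.

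For the second bullet I would pass to the variable $s=1-\cos\theta$. First note that $D(P)(\cos\theta)=-\sin^2\theta\,P'(\cos\theta)+\cos\theta\,P(\cos\theta)$ is a polynomial in $\cos\theta=1-s$, hence an even analytic function of $\theta$; since $s=\tfrac12\theta^2+O(\theta^4)$ is an analytic change of variable between $\theta^2$ and $s$ near $0$, the condition $O(\theta^{2(N+1)})$ is exactly the condition $O(s^{N+1})$. Setting $\mu=1-s$ and $1-\mu^2=s(2-s)$ turns the operator into $D(P)=s(2-s)\,\tfrac{dP}{ds}+(1-s)P$. Expanding $P=\sum_j p_j s^j$ and reading off the coefficient of $s^k$ in $D(P)=\sum_k q_k s^k$ gives the triangular relation
\[
q_k=(2k+1)\,p_k-k\,p_{k-1}.
\]
Writing $Q=\sum_k\tilde q_k s^k$ and imposing $q_k=\tilde q_k$ for $k=0,\dots,N$ is then a lower-triangular linear system with diagonal entries $2k+1\neq0$, which has a unique solution $(p_0,\dots,p_N)$; this determines the unique $P^N$ of degree at most $N$, and by construction the unmatched part of $D(P^N)-Q$ is $O(s^{N+1})=O(\theta^{2(N+1)})$.

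Both computations are short and elementary; the two points that genuinely need care are the non-vanishing of the degree-$d$ multiplier $(d+1)(d+n-1)$ of $L$ for all $n\geq2$, $d\geq0$ (which is what upgrades ``degree-preserving'' to ``bijective''), and the bookkeeping that identifies orders in $\theta$ with orders in $s$, so that the triangular system for $D$ encodes precisely the threshold $O(\theta^{2(N+1)})$. I expect this last alignment — keeping the $\cos\theta$-substitution and the asymptotic order consistent — to be the main place where an error could creep in.
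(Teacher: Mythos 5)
Your proof is correct and takes essentially the same route as the paper's: the first bullet is the paper's degree-preservation argument for $L$ (you merely make explicit the nonvanishing of the leading multiplier via the factorisation $(d+1)(d+n-1)$, which the paper leaves implicit), and your expansion in powers of $s=1-\cos\theta$ is exactly the paper's expansion in the basis $\{(1-X)^k,\,k\in\mathbb{N}\}$, producing the identical lower-triangular system $(2k+1)a_k-k\,a_{k-1}=b_k$ with nonzero diagonal. Nothing further is needed.
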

\begin{proof}
For the first point, if the leading term in a polynomial~$P$ is~$a_kX^k$, with~$a_k\neq0$, then the leading term in~$L(P)$ is~$[k(k-1)+k(n+1)+(n-1)]a_kX^k$, and so~$L(P)\neq0$. 
So the linear operator~$L$ is injective from~$\mathbb{R}_p[X]$ to~$\mathbb{R}_p[X]$, and therefore it is bijective.

For the second point, the idea is to remark that
\begin{equation*}
D((1-X)^k)=(2k+1)(1-X)^k+(k+1)(1-X)^{k+1},
\end{equation*}
so we write the polynomials in the basis~$\{(1-X)^k,k\in\mathbb{N}\}$. 
We get that a polynomial~$R$ is such that~$R(\cos\theta)=O(\theta^{2(N+1)})$ if and only if, in this basis, its first coefficients up to order~$(X-1)^N$ are zero (because~$1-\cos\theta=\tfrac12\theta^2 + O(\theta^4)$ in the neighborhood of~$0$).
We write~$Q=\sum_{k=0}^\infty b_k(1-X)^k$ and~$P^N=\sum_{k=0}^Na_k(1-X)^k$, and we get that
\begin{equation*}
  D(P^N)(\cos\theta)=Q(\cos\theta)+O(\theta^{2(N+1)})\Leftrightarrow
\begin{cases}
a_0=b_0&\\
(2k+1) a_k-ka_{k-1}=b_k& \forall k\in\llbracket1,N\rrbracket.
\end{cases}
\end{equation*}
Since this induction relation defines in an unique way the coefficients~$a_k$ for~$k\in\llbracket1,N\rrbracket$, this ends the proof.
\end{proof}

With this lemma, we can now define the following sequences of polynomials~$H_p$ and~$G_p^N$, this last ones being of degree at most~$N-p$:
\begin{gather*}
\begin{cases} L(H_0)=1\\ L(H_{p+1})=-D(H_p) \end{cases} \text{and}\quad
\begin{cases} D(G^N_{1})(\cos\theta)=1+O(\theta^{2N}) \\ (D(G^N_{p+1})+L(G^N_p))(\cos\theta)=O(\theta^{2(N-p)}).\end{cases}
\end{gather*}
Since the operator~$L$ is odd and~$D$ is even, it is easy to show that the polynomials~$H_p$ have the same parity as~$p$. If we express the operator~$L$ in the basis~$\{(1-X)^k,k\in\mathbb{N}\}$, we are able to get the induction relation for the coefficients of the polynomials in this basis. We have
\begin{equation*}
L((1-X)^k)=(n+k-1)(k+1)(1-X)^k-k(n+2k-1)(1-X)^{k-1},
\end{equation*}
So we have
\begin{equation}
\label{expression_H_G}
H_p=\sum_{k=0}^pb^p_k(1-X)^k, \text{ and } G^N_p=\sum_{k=0}^{N-p}a^p_k(1-X)^k,
\end{equation}
where~$a_k^p$ and~$b_k^p$ are given by the following induction relations for (with the convention that~$b_{p+1}^p=b_{-1}^p=a_{-1}^p=0$):
\begin{align*}
&\begin{cases}
b^0_0=\frac{1}{n-1},&\\
(n+k-1)b^{p+1}_k-(n+2k+1)b^{p+1}_{k+1}=\frac{2k+1}{k+1} b^p_k-\frac{k}{k+1}b^p_{k-1},& \forall p\in\mathbb{N},\forall k=\llbracket0,p+1\rrbracket.
\end{cases}\\
&\begin{cases}
a^0_k=\frac{k!}{(2k+1)(2k-1)\dots3}=\frac{2^k(k!)^2}{(2k+1)!},&\\
\frac{2k+1}{k+1} a^{p+1}_k-\frac{k}{k+1}a^{p+1}_{k-1}=(n+k-1)a^p_k-(n+2k+1)a^p_{k+1},& \forall p\in\mathbb{N},\forall k\in\mathbb{N}.
\end{cases}
\end{align*}

We define then the remainders~$R^N_{\kappa,0}$ and~$R^N_{\kappa,\infty}$ by
\begin{equation*}
R^N_{\kappa,0}(\mu)=h_\kappa(\mu)-\sum_{p=0}^N H_p(\mu)\kappa^p,\quad R^N_{\kappa,\infty}(\mu)=h_\kappa(\mu)-\sum_{p=1}^N G_p^N(\mu)\kappa^{-p}.
\end{equation*}
It is an easy matter to see that, for a given polynomial~$P$, we have
\begin{gather}
\sin\theta\partial_\theta(P(\cos\theta)\sin\theta)=D(P)(\cos\theta)\sin\theta,\label{D_theta}\\
-\partial_\theta(\sin^{n-2}\theta\partial_\theta(P(\cos\theta)\sin\theta))+(n-2)\sin^{n-3}\theta P(\cos\theta)=\sin^{n-1}\theta L(P)(\cos\theta),\nonumber
\end{gather}
and then we get
\begin{equation*}
\widetilde L_\kappa^*(P(\cos\theta)\sin\theta)=(L(P)+\kappa D(P))(\cos\theta)\sin\theta,
\end{equation*}
where the operator~$\widetilde{L}_\kappa^*$ is defined in~\eqref{Ltild_star_kappa} by
\begin{equation*}
\widetilde L_\kappa^*g(\theta)=-\sin^{2-n}\theta e^{-\kappa\cos\theta}\tfrac{\mathrm d}{\mathrm d\theta}(\sin^{n-2}\theta e^{\kappa\cos\theta}g'(\theta))+\tfrac{n-2}{\sin^2\theta}g(\theta).
\end{equation*}
Since we have by definition~$\widetilde{L}_\kappa^*(h_\kappa(\cos\theta)\sin\theta)=\sin\theta$, we get
\begin{align*}
\widetilde{L}_\kappa^*(R^N_{\kappa,0}(\cos\theta)\sin\theta)&=\sin\theta-\sum_{p=0}^N (L(H_p)+\kappa D(H_p))(\cos\theta)\sin\theta\kappa^p\\
&=-\kappa^{N+1}D(H_N)(\cos\theta)\sin\theta,\\
\widetilde{L}_\kappa^*(R^N_{\kappa,\infty}(\cos\theta)\sin\theta)&=\sin\theta-\sum_{p=1}^N (L(G^N_p)+\kappa D(G^N_p))(\cos\theta)\sin\theta\kappa^{-p}\\
&=-L(G^N_N)(\cos\theta)\sin\theta\kappa^{-N}+\sum_{p=0}^{N-1}\kappa^{-p}O(\theta^{2(N-p)})\sin\theta.
\end{align*}
To get estimations for the averages of the form~$\langle f(\theta)R^N_{\kappa,\varepsilon}(\cos\theta)\sin^2\theta\rangle_{M_\kappa}$ (with~$\varepsilon$ standing for~$0$ or~$\infty$) we first remark that, for a function~$g$ belonging to the space~$V$ (a “weighted~$H^1_0$”) defined in~\eqref{def_V} by
\begin{equation*}
V = \{ g \, | \,(n-2)(\sin\theta)^{\frac n2-2} g \in L^2(0,\pi), \, (\sin\theta)^{\frac n2-1}g \in H^1_0(0,\pi) \},
\end{equation*}
we have the following Poincaré inequality:
\begin{equation*}
\langle g(\theta)\widetilde{L}_\kappa^*g(\theta)\rangle_{M_\kappa}=\langle g'(\theta)^2\rangle_{M_\kappa}+(n-2)\langle\tfrac1{\sin^2\theta}g(\theta)^2\rangle_{M_\kappa}\geqslant(n-2)\langle(g(\theta))^2\rangle_{M_\kappa}.
\end{equation*}
Hence, for~$n\geqslant3$ and~$g\in V$, using Cauchy-Schwarz inequality, we get that
\begin{equation*}
\langle g(\theta)^2\rangle_{M_\kappa}\leqslant\frac1{n-2}\sqrt{\langle g(\theta)^2\rangle_{M_\kappa}\langle(\widetilde{L}_\kappa^*g(\theta))^2\rangle_{M_\kappa}}.
\end{equation*}
Since~$g_\kappa(\theta)=h_\kappa(\cos\theta)\sin\theta$ belongs to~$V$, we get that~$g^N_{\kappa,\varepsilon}(\theta)=R^N_{\kappa,\varepsilon}(\cos\theta)\sin\theta$ also belongs to~$V$.

We are now ready to do the estimations. For~$f$ such that~$\theta\mapsto f(\theta) \sin^{\frac{n}2} \theta$ belongs to~$L^2(0,\pi)$, we get, using Cauchy-Schwarz inequality,
\begin{align*}
|\langle f(\theta)R^N_{\kappa,0}(\cos\theta)&\sin^2\theta\rangle_{M_\kappa}|\leqslant\sqrt{\langle(R^N_{\kappa,0}(\cos\theta))^2\sin^2\theta\rangle_{M_\kappa}\langle f(\theta)^2\sin^2\theta\rangle_{M_\kappa}}\\
&\leqslant\frac1{n-2}\sqrt{\langle(\widetilde{L}_\kappa^*R^N_{\kappa,0}(\cos\theta))^2\sin^2\theta\rangle_{M_\kappa}}\sqrt{\langle f(\theta)^2\sin^2\theta\rangle_{M_\kappa}}\\
&\leqslant\frac1{n-2}{\kappa^{N+1}}\sqrt{\langle(D(H_N)(\cos\theta))^2\sin^2\theta\rangle_{M_\kappa}}\sqrt{\langle f(\theta)^2\sin^2\theta\rangle_{M_\kappa}}.
\end{align*}
Hence using the expansion as~$\kappa\to0$ given in~\eqref{exp_f_zero}, we get the final estimation:
\begin{equation}
\langle f(\theta)R^N_{\kappa,0}(\cos\theta)\sin^2\theta\rangle_{M_\kappa}=O(\kappa^{N+1})\text{ as }\kappa\to0.
\label{estimate_R_zero}
\end{equation}
Similarly, if~$|f(\theta)|=O(\theta^{2\beta})$ in the neighborhood of~$0$, using Lemma~\ref{lem_exp_f_inf}, we get
\begin{align*}
|\langle f(\theta)R^N_{\kappa,\infty}(\cos\theta)&\sin^2\theta\rangle_{M_\kappa}|^2\leqslant\frac1{(n-2)^2}\langle f(\theta)^2\sin^2\theta\rangle_{M_\kappa}\\
&\quad \times\left\langle[L(G^N_N)(\cos\theta)\kappa^{-N}+\sum_{p=0}^{N-1}\kappa^{-p}O(\theta^{2(N-p)})]^2\sin^2\theta\right\rangle_{M_\kappa}\\
&\leqslant O(\kappa^{-2\beta-1})\times O(\kappa^{-2N-1}),
\end{align*}
which gives
\begin{equation*}
\langle f(\theta)R^N_{\kappa,\infty}(\cos\theta)\sin^2\theta\rangle_{M_\kappa}=O(\kappa^{-\beta-N-1})\text{ as }\kappa\to\infty.
\end{equation*}

Now, since we have the expression~\eqref{expression_H_G} of the polynomials~$G^N_p$, we get, by definition of~$R^N_{\kappa,\infty}$,
\begin{align*}
R^N_{\kappa,\infty}(\mu)&=R^{N+1}_{\kappa,\infty}(\mu)+\sum_{p=0}^N(G^{N+1}_p-G^N_p)(\mu)\kappa^{-p}+G^{N+1}_{N+1}(\mu)\kappa^{-N-1}\\
&=R^{N+1}_{\kappa,\infty}(\mu)+\sum_{p=0}^{N+1}a_{N+1-p}^p(1-\mu)^{N+1-p}\kappa^{-p}.
\end{align*}
Since~$(1-\cos\theta)^k=O(\theta^{2k})$, we finally get, using Lemma~\ref{lem_exp_f_inf},
\begin{align*}
\langle f(\theta)R^N_{\kappa,\infty}(\cos\theta)\sin^2\theta\rangle_{M_\kappa}&=\langle f(\theta)R^{N+1}_{\kappa,\infty}(\cos\theta)\sin^2\theta\rangle_{M_\kappa}+\sum_{p=0}^{N+1}\kappa^{-p}O(\kappa^{-\beta-1-N-1+p})\\
&=O(\kappa^{-\beta-N-2})\quad\text{as }\kappa\to\infty.
\end{align*}
This ends the proof of Proposition~\ref{expansion_h}, in the case~$n\geqslant3$.

We suppose now that~$n=2$.
The case~$\kappa\to0$ is easy, since we have the following Poincaré inequality:
\begin{equation*}
\langle g(\theta)\widetilde{L}_\kappa^*g(\theta)\rangle_{M_\kappa}=\langle g'(\theta)^2\rangle_{M_\kappa}\geqslant\frac{e^{-\kappa}\int_0^\pi g'(\theta)^2\mathrm d\theta}{\int_0^\pi e^{\kappa\cos\theta}\mathrm d\theta}\geqslant\frac{e^{-\kappa}\int_0^\pi g(\theta)^2\mathrm d\theta}{\int_0^\pi e^{\kappa\cos\theta}\mathrm d\theta}\geqslant e^{-2\kappa}\langle g(\theta)^2\rangle_{M_\kappa}.
\end{equation*}
We get the same estimations, replacing~$(n-2)$ by~$e^{-2\kappa}$: 
\begin{equation*}
|\langle f(\theta)R^N_{\kappa,0}(\cos\theta)\sin^2\theta\rangle_{M_\kappa}|\leqslant e^{2\kappa}{\kappa^{N+1}}\sqrt{\langle(D(H_N)(\cos\theta))^2\sin^2\theta\rangle_{M_\kappa}\langle f(\theta)^2\sin^2\theta\rangle_{M_\kappa}},
\end{equation*}
which gives the estimate~\eqref{estimate_R_zero} since~$e^{2\kappa}=O(1)$ when~$\kappa\to0$.

The case~$\kappa\to\infty$ is different, since we are not able to get a better Poincaré constant. But we have an explicit expression for~$g_\kappa(\theta)=h(\cos\theta)\sin\theta$, given by~\eqref{def_g2D}:
\begin{equation*}
g_\kappa(\theta)=\frac{\theta}{\kappa}-\frac{\pi}{\kappa}\frac{\int_0^\theta e^{-\kappa \cos\varphi}\mathrm d\varphi}{\int_0^\pi e^{-\kappa \cos\varphi}\mathrm d\varphi}\, .
\end{equation*}
It is also easy to see that, in this case, the coefficients~$a_k^p$ appearing in the definition~\eqref{expression_H_G} of the polynomials~$G_p^N$, are zero when~$p\geqslant1$. Therefore we get that~$G_p^N=0$ for~$p\geqslant1$.

We have~$D(G_1^N)(\cos\theta)=1+O(\theta^{2N})$, so with the formula~\eqref{D_theta}, we obtain
\begin{equation*} 
\partial_\theta(G_1^N(\cos\theta)\sin\theta)=1+O(\theta^{2N}),
\end{equation*}
so we get that~$G_1^N(\cos\theta)\sin\theta=\theta+O(\theta^{2N+1})$ since~$\theta\mapsto G_1^N(\cos\theta)$ is continuous as~$\theta\to0$. Actually, this is the Euler formula for~$\arctan$: if we write~$t=\tan\frac{\theta}2$ we get~$1-\cos\theta=\frac{2t^2}{1+t^2}$,~$\sin\theta=\frac{2t}{1+t^2}$, and then, using the expression~\eqref{expression_H_G} of the polynomials~$G_1^N$ with~$a_k=\frac{2^k(k!)^2}{(2k+1)!}$, we obtain
\begin{equation*}
\arctan t=\frac{t}{1+t^2}\sum_{k=0}^N\frac{2^{2k}(k!)^2}{(2k+1)!}\frac{t^{2k}}{(1+t^2)^k}+O(t^{2N+1}).
\end{equation*}
Now, using the explicit expression of~$g_k$, we have 
\begin{align*}
R^N_{\kappa,\infty}(\cos\theta)\sin\theta&=g_\kappa(\theta)-G_1^N(\cos\theta)\sin\theta\,\kappa^{-1}\\
&=\kappa^{-1}(\theta-G_1^N(\cos\theta)\sin\theta)-\frac{\pi}{\kappa}\frac{\int_0^\theta e^{-\kappa \cos\varphi}\mathrm d\varphi}{\int_0^\pi e^{-\kappa \cos\varphi}\mathrm d\varphi}\\
&=\kappa^{-1}\,O(\theta^{2N+1}) - r_\kappa^\infty(\theta).
\end{align*}
We have~
\begin{equation*}
r_\kappa^\infty(\theta)\leqslant\frac{\pi^2}{\kappa}\frac{e^{-\kappa\cos\theta}}{\int_0^\pi e^{-\kappa \cos\varphi}\mathrm d\varphi},
\end{equation*}
and so, using the estimate~\eqref{int_exp} with~$n=2$, we get
\begin{equation*}
\langle(r_\kappa^\infty(\theta))^2\rangle_{M_\kappa}\leqslant\frac{\pi^4}{\kappa^2}\frac{\int_0^\pi e^{-2\kappa\cos\theta}e^{\kappa\cos\theta}\mathrm d\theta}{(\int_0^\pi e^{-\kappa \cos\varphi}\mathrm d\varphi)^2\int_0^\pi e^{\kappa \cos\theta}\mathrm d\theta}
\leqslant\frac{\pi^4}{\kappa^2(\int_0^\pi e^{\kappa \cos\theta}\mathrm d\theta)^2}=O(\kappa^{-1}e^{-2\kappa}).
\end{equation*}
Therefore, using Cauchy-Schwarz inequality and Lemma~\ref{lem_exp_f_inf},
\begin{equation*}
|\langle f(\theta)r_\kappa^\infty(\theta)\sin\theta\rangle_{M_\kappa}|\leqslant\sqrt{\langle f(\theta)^2\sin^2\theta\rangle_{M_\kappa}\langle(r_\kappa^\infty(\theta))^2\rangle_{M_\kappa}}=O(\kappa^{-\beta-1}e^{-2\kappa}),
\end{equation*}
so we get the final estimate
\begin{equation*}
\langle f(\theta)R^N_{\kappa,\infty}(\cos\theta)\sin^2\theta\rangle_{M_\kappa}=O(\kappa^{-\beta-N-2})\text{ as }\kappa\to\infty,
\end{equation*}
and this ends the proof of Proposition~\ref{expansion_h}.
\subsection{Tips for the general case}
\label{tips}
Here we give some tips to perform an asymptotic study of the coefficients when~$\nu$ depends also on~$\omega\cdot\Omega$.

We will have to take averages against functions of the form~$\theta\mapsto e^{\widehat{\kappa}(\rho,\cos\theta)}$, where
\begin{equation*}
\widehat{\kappa}(\rho,\mu)=\frac1{d(\rho)}\int_0^\mu{\nu(\rho,x)}\mathrm d x.
\end{equation*}
We want to get for example an expansion as the noise~$d$ is large or small. So we are only interested in the dependence on~$\cos\theta$, and we will drop the dependence on~$\rho$ for clarity. We suppose that the function~$\theta\mapsto\nu(\cos\theta)$ is positive, smooth, bounded below and above, and we introduce the parameter~$\kappa=\frac1d$, trying to expand with respect to~$\kappa$.
We write~$\sigma(\mu)=\int_0^\mu\nu(x)\mathrm d x$, so we have~$\widehat{\kappa}(\mu)=\kappa\,\sigma(\mu)$.

The first step consists in the expansion of~$\langle f(\theta)\rangle_{\widehat{M}_\kappa}$ given by
\begin{equation*} 
\langle f(\theta)\rangle_{\widehat{M}_\kappa}= \frac{\int_0^\pi f(\theta) e^{\kappa\,\sigma(\cos\theta)} \sin^{n-2}\theta\, \mathrm d\theta}{\int_0^\pi e^{\kappa\,\sigma(\cos\theta)} \sin^{n-2}\theta\, \mathrm d\theta}.
\end{equation*}

As before, we can easily do a Taylor expansion when~$\kappa\to0$, and we get a result similar to~\eqref{exp_c1_zero} involving quantities of the form~$\int_0^\pi f(\theta)\sigma(cos\theta)^p\mathrm d\theta$. Unless we know explicitly~$\sigma$, we cannot say anything interesting.

When~$\kappa\to\infty$, the strategy is the same: we do the change of variable, setting~$t=\sigma(1)-\sigma(\cos\theta))$, and~$a(t)=\sigma^{-1}(\sigma(1)-t)$, where~$\sigma^{-1}$ is the inverse function of~$\sigma$ (which is increasing since~$\nu>0$, actually we have~$a(t)=\cos\theta$). 
We get: 
\begin{equation*}
 \int_0^\pi f(\theta) e^{\kappa\,\sigma(\cos\theta)} \sin^{n-2}\theta\, \mathrm d\theta=e^{\kappa\,\sigma(1)}\int_0^T\frac{f(\arccos a(t))e^{-\kappa t}}{\nu( a(t))(1- a(t)^2)^{\frac{n-3}2}}dt,
\end{equation*}
where~$T=\sigma(1)-\sigma(-1)$. 
Since~$a(0)=1$, if we know the expansion of~$f$ around~$0$ and~$\nu$ in the neighborhood of~$1$, it only remains to get a Taylor expansion of~$a$ around~$0$ to use Lemma~\ref{watson} (Watson's Lemma).
We can compute the derivatives of~$a$ by induction. 
We have
\begin{equation*}
a'(t)=-\frac1{\nu( a(t))},
\end{equation*}
and this gives immediately~$a^{(n)}=F_n( a(t))$, with the following induction relation for~$F_n$
\begin{equation*}
F_1(\mu)=-\frac1{\nu(\mu)}, \quad F_{n+1}(\mu)=-\frac1{\nu(\mu)}\frac{\mathrm d}{\mathrm d \mu}(F_n(\mu)).
\end{equation*}
This gives us the Taylor expansion of~$a$ at~$t=0$ up to order~$N$:
\begin{equation*}
 a(t)=1+\sum_{n=1}^N f_n(1)\frac{t^n}{n!}+O(t^{N+1}).
\end{equation*}
Since we have~$a'(0)<0$, it is possible to get the analogous of Lemma~\ref{lem_exp_f_inf}: for a function~$f$ such that~$f(\theta)=O(\theta^{2\beta})$, then~$\langle f(\theta)\rangle_{\widehat{M}_\kappa}=O(\kappa^{-\beta})$ as~$\kappa\to\infty$.
Finally, we get an expansion of~$\langle f(\theta)\rangle_{\widehat{M}_\kappa}$ which depend only on the first derivatives of~$\nu$ at~$1$ and on the local behavior of~$f$ around~$0$.

The second step consists in expanding~$\langle f(\theta)\widehat{g}_\kappa(\theta)\sin\theta\rangle_{\widehat{M}_\kappa}$, where~$\widehat{L}^*_\kappa\widehat{g}_\kappa(\theta)=\sin\theta$, the operator~$\widehat{L}^*_\kappa$ being defined by
\begin{equation*}
\widehat L_\kappa^*g(\theta)=-\sin^{2-n}\theta e^{-\kappa \,\sigma(\cos\theta)}\tfrac{\mathrm d}{\mathrm d\theta}(\sin^{n-2}\theta e^{\kappa \,\sigma(\cos\theta)}g'(\theta))+\tfrac{n-2}{\sin^2\theta}g(\theta).
\end{equation*}

It is easy to see that we have
\begin{equation*}
\widetilde L_\kappa^*(P(\cos\theta)\sin\theta)=(L(P)+\kappa\nu(\cos\theta)D(P))(\cos\theta)\sin\theta,
\end{equation*}
so if we set~$\widehat{g}_\kappa(\theta)=\widehat{h}_\kappa(\cos\theta)\sin\theta$, we can decompose~$\widehat{h}_k$ in a way similar to Proposition~\ref{expansion_h}:
\begin{align*}
\widehat h_\kappa(\cos\theta)&=\sum_{p=0}^N \widehat H_p(\cos\theta)\kappa^p+ \widehat R^N_{\kappa,0}(\cos\theta),\\
\widehat h_\kappa(\cos\theta)&=\sum_{p=1}^N \widehat G_p^N(\cos\theta)\kappa^{-p}+ \widehat R^N_{\kappa,\infty}(\cos\theta),
\end{align*}
where~$\widehat H_p$ are the functions (not necessarily polynomials) and~$\widehat G_p^N$ the polynomials of degree at most~$N-p$ given by the following induction relations:
\begin{align*}
&\begin{cases} L(\widehat H_0)=1\\ L(\widehat H_{p+1})(\mu)=-\nu(\mu)D(\widehat H_p)(\mu) \end{cases} \\
&\\
&\begin{cases} D(\widehat G^N_{1})(\cos\theta)=1+O(\theta^{2N}) \\ (D(\widehat G^N_{p+1})+\frac1{\nu(\cos\theta)}L(\widehat G^N_p))(\cos\theta)=O(\theta^{2(N-p)}).\end{cases}
\end{align*}
Again, the remainders satisfy the following estimations, for any function~$f$ such that~$\theta\mapsto f(\theta) \sin^{\frac{n}2} \theta$ belongs to~$L^2(0,\pi)$ and such that~$|f(\theta)|=O(\theta^{2\beta})$ in the neighborhood of~$0$:
\begin{align*}
\langle f(\theta)\widehat R^N_{\kappa,0}(\cos\theta)\sin^2\theta\rangle_{\widehat M_\kappa}&=O(\kappa^{N+1})\text{ as }\kappa\to0, \\
\langle f(\theta)\widehat R^N_{\kappa,\infty}(\cos\theta)\sin^2\theta\rangle_{\widehat M_\kappa}&=O(\kappa^{-\beta-N-2})\text{ as }\kappa\to\infty,
\end{align*}
and this allows to get an expansion of~$\langle f(\theta)\widehat h_\kappa(\cos\theta)\sin^2\theta\rangle_{\widehat M_\kappa}$ when~$\kappa\to0$ and when~$\kappa\to\infty$.


\begin{thebibliography}{10}

\bibitem{agueh2011analysis}
M.~Agueh, R.~Illner, and A.~Richardson.
\newblock Analysis and simulations of a refined flocking and swarming model of
  {Cucker-Smale} type.
\newblock {\em Kinetic and Related Models}, 4(1):1--16, 2011.

\bibitem{aldana2003phase}
M.~Aldana and C.~Huepe.
\newblock Phase transitions in self-driven many-particle systems and related
  non-equilibrium models: A network approach.
\newblock {\em Journal of Statistical Physics}, 112(1-2):135--153, 2003.

\bibitem{aoki1982simulation}
I.~Aoki.
\newblock A simulation study on the schooling mechanism in fish.
\newblock {\em Bulletin of the Japanese Society of Scientific Fisheries},
  48:1081--1088, 1982.

\bibitem{aw2000resurrection}
A.~Aw and M.~Rascle.
\newblock Resurrection of “second order” models of traffic flow.
\newblock {\em SIAM Journal on Applied Mathematics}, 60(3):916--938, 2000.

\bibitem{bazazi2008collective}
S.~Bazazi, J.~Buhl, J.J. Hale, M.L. Anstey, G.A. Sword, S.J. Simpson, and I.D.
  Couzin.
\newblock Collective motion and cannibalism in locust migratory bands.
\newblock {\em Current Biology}, 18(10):735--739, 2008.

\bibitem{bellomo2009mathematics}
N.~Bellomo, H.~Berestycki, F.~Brezzi, and J.-P. Nadal.
\newblock Mathematics and complexity in life and human sciences.
\newblock {\em Mathematical Models and Methods in Applied Sciences},
  19:1385--1389, 2009.

\bibitem{bender1999advanced}
C.~M. Bender and S.~A. Orszag.
\newblock {\em Advanced Mathematical Methods for Scientists and Engineers:
  Asymptotic Methods and Perturbation Theory}.
\newblock Springer-Verlag, New York, 1999.

\bibitem{berthelin2008model}
F.~Berthelin, P.~Degond, M.~Delitala, and M.~Rascle.
\newblock A model for the formation and evolution of traffic jams.
\newblock {\em Archive for Rational Mechanics and Analysis}, 187(2):185--220,
  2008.

\bibitem{bolley2011meanfield}
F.~Bolley, J.~A. Cañizo, and J.~A. Carrillo.
\newblock Mean-field limit for the stochastic {Vicsek} model.
\newblock preprint \verb?arXiv:1102.1325?, 2011.

\bibitem{bolley2011stochastic}
F.~Bolley, J.~A. Cañizo, and J.~A. Carrillo.
\newblock Stochastic mean-field limit: non-{Lipschitz} forces \& swarming.
\newblock To appear in Mathematical Models and Methods in Applied Sciences,
  2011. \verb?arXiv:1010.5405?, 2011.

\bibitem{chate2008collective}
H.~Chaté, F.~Ginelli, G.~Grégoire, and F.~Raynaud.
\newblock Collective motion of self-propelled particles interacting without
  cohesion.
\newblock {\em Physical Review E}, 77(4):046113, 2008.

\bibitem{couzin2002collective}
I.~D. Couzin, J.~Krause, R.~James, G.~D. Ruxton, and N.~R. Franks.
\newblock Collective memory and spatial sorting in animal groups.
\newblock {\em Journal of Theoretical Biology}, 218(1):1--11, 2002.

\bibitem{daganzo1995requiem}
C.~F. Daganzo.
\newblock Requiem for second-order fluid approximations of traffic flow.
\newblock {\em Transportation Research Part B}, 29(4):277--286, 1995.

\bibitem{degond2011macroscopic}
P.~Degond, A.~Frouvelle, and J.-G. Liu.
\newblock Macroscopic limits and phase transition in a system of self-propelled
  particles.
\newblock in preparation.

\bibitem{degond2008continuum}
P.~Degond and S.~Motsch.
\newblock Continuum limit of self-driven particles with orientation
  interaction.
\newblock {\em Mathematical Models and Methods in Applied Sciences},
  18:1193--1215, 2008.

\bibitem{degond2008large}
P.~Degond and S.~Motsch.
\newblock Large scale dynamics of the persistent turning walker model.
\newblock {\em Journal of Statistical Physics}, 131(6):989--1021, 2008.

\bibitem{degond2010macroscopic}
P.~Degond and S.~Motsch.
\newblock A macroscopic model for a system of swarming agents using curvature
  control.
\newblock preprint \verb?arXiv:1010.5405?, 2010.

\bibitem{degond2010diffusion}
P.~Degond and T.~Yang.
\newblock Diffusion in a continuum model of self-propelled particles with
  alignment interaction.
\newblock {\em Mathematical Models and Methods in Applied Sciences},
  20:1459--1490, 2010.

\bibitem{duan2010kinetic}
R.~Duan, M.~Fornasier, and G.~Toscani.
\newblock A kinetic flocking model with diffusion.
\newblock {\em Communications in Mathematical Physics}, 300:95--145, 2010.

\bibitem{fan2004pattern}
H.~Fan and H.~Liu.
\newblock Pattern formation, wave propagation and stability in conservation
  laws with slow diffusion and fast reaction.
\newblock {\em Journal of Hyperbolic Differential Equations}, 1(4):605--626,
  2004.

\bibitem{frouvelle2011dynamics}
A.~Frouvelle and J.-G. Liu.
\newblock Dynamics in a kinetic model of oriented particles with phase
  transition.
\newblock preprint \verb?arXiv:1101.2380?, 2011.

\bibitem{gardiner1985handbook}
C.~W. Gardiner.
\newblock {\em Handbook of Stochastic Methods for Physics, Chemistry and the
  Natural Sciences}, volume~13 of {\em Springer Series in Synergetics}.
\newblock Springer-Verlag, Berlin, second edition, 1985.

\bibitem{gautrais2009analyzing}
J.~Gautrais, C.~Jost, M.~Soria, A.~Campo, S.~Motsch, R.~Fournier, S.~Blanco,
  and G.~Theraulaz.
\newblock Analyzing fish movement as a persistent turning walker.
\newblock {\em Journal of Mathematical Biology}, 58(3):429--445, 2009.

\bibitem{gregoire2004onset}
G.~Grégoire and H.~Chaté.
\newblock Onset of collective and cohesive motion.
\newblock {\em Physical Review Letters}, 92(2):025702, 2004.

\bibitem{ha2008particle}
S.-Y. Ha and E.~Tadmor.
\newblock From particle to kinetic and hydrodynamic descriptions of flocking.
\newblock {\em Kinetic and Related Models}, 1(3):415--435, 2008.

\bibitem{hsu2002stochastic}
E.~P. Hsu.
\newblock {\em Stochastic Analysis on Manifolds}, volume~38 of {\em Graduate
  Series in Mathematics}.
\newblock American Mathematical Society, Providence, Rhode Island, 2002.

\bibitem{huth1992simulation}
A.~Huth and C.~Wissel.
\newblock The simulation of the movement of fish schools.
\newblock {\em Journal of Theoretical Biology}, 156(3):365--385, 1992.

\bibitem{keyfitz1993multiphase}
B.~L. Keyfitz.
\newblock Multiphase saturation equations, change of type and inaccessible
  regions.
\newblock In {\em Flow in porous media: proceedings of the Oberwolfach
  Conference, June 21-27, 1992}, volume 114 of {\em International Series of
  Numerical Mathematics}, pages 103--116, Basel, 1993. Birkhäuser.

\bibitem{lefloch2002hyperbolic}
P.~G. LeFloch.
\newblock {\em Hyperbolic Systems of Conservation Laws: The Theory of Classical
  and Nonclassical Shock Waves}.
\newblock Lectures in Mathematics ETH Zürich. Birkhäuser Verlag, Basel, 2002.

\bibitem{li2008minimal}
Y.-X. Li, R.~Lukeman, and L.~Edelstein-Keshet.
\newblock Minimal mechanisms for school formation in self-propelled particles.
\newblock {\em Physica D}, 237:699--720, 2008.

\bibitem{liu2005axial}
H.~Liu, H.~Zhang, and P.~Zhang.
\newblock Axial symmetry and classification of stationary solutions of
  {Doi-Onsager} equation on the sphere with {Maier-Saupe} potential.
\newblock {\em Communications in Mathematical Sciences}, 3(2):201--218, 2005.

\bibitem{mckean1967propagation}
H.~P. McKean.
\newblock Propagation of chaos for a class of non-linear parabolic equations.
\newblock In {\em Stochastic Differential Equations}, volume~7 of {\em Lecture
  Series in Differential Equations}, pages 41--57. Catholic University,
  Washington, D. C., 1967.

\bibitem{motsch2010numerical}
S.~Motsch and L.~Navoret.
\newblock Numerical simulations of a non-conservative hyperbolic system with
  geometric constraints describing swarming behavior.
\newblock submitted, \verb?arXiv:0910.2951?, 2010.

\bibitem{nagasawa1987propagation}
M.~Nagasawa and H.~Tanaka.
\newblock On the propagation of chaos for diffusion processes with drift
  coefficients not of average form.
\newblock {\em Tokyo Journal of Mathematics}, 10(2):403--418, 1987.

\bibitem{oelschlager1984martingale}
K.~Oelschläger.
\newblock A martingale approach to the law of large numbers for weakly
  interacting stochastic processes.
\newblock {\em The Annals of Probability}, 12(2):458--479, 1984.

\bibitem{reynolds1987flocks}
C.~W. Reynolds.
\newblock Flocks, herds and schools: A distributed behavioral model.
\newblock {\em Computer Graphics}, 21(4):25--34, 1987.

\bibitem{serre1996systemes}
D.~Serre.
\newblock {\em Systèmes de Lois de Conservation {I}: Hyperbolicité,
  Entropies, Ondes de Choc}.
\newblock Fondations. Diderot Éditeur, Paris, 1996.

\bibitem{spohn1991large}
H.~Spohn.
\newblock {\em Large Scale Dynamics of Interacting Particles}.
\newblock Texts and Monographs in Physics. Springer-Verlag, Heidelberg, 1991.

\bibitem{sznitman1991topics}
A.-S. Sznitman.
\newblock Topics in propagation of chaos.
\newblock In {\em École d’Été de Probabilités de Saint-Flour XIX ---
  1989}, volume 1464 of {\em Lecture Notes in Mathematics}, pages 165--251.
  Springer, Berlin, 1991.

\bibitem{vicsek1995novel}
T.~Vicsek, A.~Czirók, E.~Ben-Jacob, I.~Cohen, and O.~Shochet.
\newblock Novel type of phase transition in a system of self-driven particles.
\newblock {\em Physical Review Letters}, 75(6):1226--1229, 1995.

\bibitem{yates2009inherent}
C.A. Yates, R.~Erban, C.~Escudero, I.D. Couzin, J.~Buhl, I.G. Kevrekidis, P.K.
  Maini, and D.J.T. Sumpter.
\newblock Inherent noise can facilitate coherence in collective swarm motion.
\newblock {\em Proceedings of the National Academy of Sciences},
  106(14):5464--5469, 2009.

\end{thebibliography}
\end{document}